\documentclass[a4paper, 11pt]{article}

\usepackage{fullpage}
\usepackage{libertine}
\usepackage{amsmath,amssymb,amsthm,mathtools}
\usepackage{enumitem,microtype}
\usepackage[ruled,vlined]{algorithm2e}
\usepackage[numbers,sort&compress]{natbib}

\newtheorem{theorem}{Theorem}[section]

\newtheorem{corollary}[theorem]{Corollary}

\newtheorem{conjecture}[theorem]{Conjecture}

\usepackage{aliascnt}
\newaliascnt{lemma}{theorem}
\newtheorem{lemma}[lemma]{Lemma}
\aliascntresetthe{lemma}

\usepackage[breaklinks=true]{hyperref}
\usepackage[svgnames]{xcolor}
\usepackage[capitalise,nameinlink]{cleveref}
\hypersetup{colorlinks={true},linkcolor={DarkBlue},citecolor=[named]{DarkGreen}}

\crefname{lemma}{Lemma}{Lemmas}
\Crefname{lemma}{Lemma}{Lemmas}

\newcommand{\SC}{\operatorname{SC}}
\newcommand{\MC}{\operatorname{MC}}
\newcommand{\cost}{\operatorname{cost}}
\newcommand{\med}{\operatorname{med}}

\newcommand{\E}{\mathbb{E}}
\newcommand{\w}{\mathbf{w}}
\newcommand{\ov}{\mathbf{o}}
\newcommand{\cv}{\mathbf{c}}
\newcommand{\xv}{\mathbf{x}}
\newcommand{\pv}{\mathbf{p}}
\newcommand{\Mech}[1]{\textnormal{\textsc{#1}}}
\newcommand{\opt}{\textnormal{OPT}}

\SetAlgorithmName{Mechanism}{Mechanism}{List of Mechanisms}
\crefname{algocf}{Mechanism}{Mechanisms}
\Crefname{algocf}{Mechanism}{Mechanisms}
\crefname{algorithm}{Mechanism}{Mechanisms}
\Crefname{algorithm}{Mechanism}{Mechanisms}

\setlist[itemize]{itemsep=2pt,topsep=4pt}
\setlist[enumerate]{itemsep=2pt,topsep=4pt}

\usepackage{authblk}
\usepackage{booktabs}

\title{\bf Truthful Two-Facility Location with Candidate Locations in Metric Spaces}

\author[1]{Panagiotis Kanellopoulos}
\affil[1]{King's College London, UK}
\author[2]{Alexandros A. Voudouris}
\affil[2]{University of Southern Denmark, Denmark}

\date{}

\begin{document}
\maketitle

\begin{abstract}
We study a truthful facility location problem in which agents have private positions in a metric space and known approval preferences over two distinct facilities. Each agent incurs a cost equal to her total distance from the facilities she approves. The goal is to locate the facilities at a given set of candidate locations so as to approximately minimize either the social cost or the max cost, while incentivizing the agents to report their positions truthfully. This problem was previously studied for line metrics and deterministic mechanisms. We first settle the approximation ratio of deterministic mechanisms for the social cost on the line by establishing a lower bound of $7$, matching the upper bound from previous work, and show that randomization leads to a strict improvement. For the max cost, we show that the previously known optimal deterministic guarantee of $3$ remains tight even for universally strategyproof mechanisms. We then turn to general metric spaces. For deterministic strategyproof mechanisms, we obtain tight approximation ratios of $2n-1$ for the social cost and $5$ for the max cost. We also show that randomization yields substantial improvements for both objectives: for the social cost, we obtain a tight characterization, while for the max cost we establish nearly matching upper and lower bounds.
\end{abstract}

\section{Introduction}\label{sec:intro}
In the truthful facility location problem, there is a set of agents with private positions, and one or more facilities that are to be located based on the positions reported by the agents. The goal is to determine the facility locations so that the agents are incentivized to \emph{truthfully} report their positions, while, at the same time, a social function of their individual costs is (approximately) optimized. Since the work of \citet{ProcacciaTennenholtz}, who were the first to consider facility location problems through the prism of \emph{approximate mechanism design without money}, research on this topic has flourished~\citep{ChanSurvey}.

In this paper, we consider the case of two different facilities that can be placed only at different locations chosen from a given finite set of candidates. Each agent has a private position and publicly known approval preferences over the two facilities, and incurs a cost equal to her total distance from the facilities she approves. The requirement that the two facilities be placed at different candidate locations is important even for singleton instances, in which each agent approves exactly one facility: the two location decisions become coupled whenever agents approving different facilities prefer the same candidate location.

This model was previously studied by \citet{KVZ} for line metrics and deterministic strategyproof mechanisms. They showed a tight bound of $3$ for the max cost and an upper bound of $7$ for the social cost, leaving the best possible approximation ratio for the social cost open. Beyond resolving this gap, a natural question is whether randomization can lead to improved approximation ratios. The restriction to the line is a common assumption in the truthful facility location literature, as its linear order provides useful structure for the design and analysis of strategyproof mechanisms. This leads to another natural question: what can be achieved on general metric spaces?

\subsection{Our Contribution}\label{sec:contribution}
We answer the questions above almost completely; an overview of our results is given in \cref{tab:results}. We first complete the picture on the line, for both deterministic and universally strategyproof mechanisms, and then extend the study to arbitrary metric spaces, where we establish nearly tight bounds for both classes of mechanisms.

\begin{itemize}
\item \textbf{The line.}
We settle the remaining open case for deterministic mechanisms on the line by showing that every strategyproof mechanism has approximation ratio at least $7$ in terms of the social cost, even when each agent approves exactly one facility, with different agents possibly approving different facilities. Together with the upper bound of \citet{KVZ}, this shows that the optimal approximation ratio is exactly $7$. We then show that randomization improves the bound to between $5/2$ and $3$ for universally strategyproof mechanisms.\footnote{In \cref{app:line52}, we present a mechanism that we conjecture to achieve an upper bound of $5/2$ in terms of the social cost on the line.} For the max cost, \citet{KVZ} showed that the optimal deterministic approximation ratio is exactly $3$. We show that randomization does not improve this bound: the optimal approximation ratio remains exactly $3$ even for universally strategyproof mechanisms.

\item \textbf{General metric spaces.}
For the social cost, we show that the optimal approximation ratio of deterministic strategyproof mechanisms is exactly $2n-1$. Thus, in contrast to the line, no deterministic strategyproof mechanism can achieve a constant approximation ratio. We then show that randomization improves the optimal approximation ratio to  exactly $4$. For the max cost, the optimal approximation ratio is exactly $5$ for deterministic strategyproof mechanisms, and lies between $15/4$ and $19/5$ for universally strategyproof mechanisms.
\end{itemize}

\begin{table}[t]
\centering
\small
\begin{tabular}{@{}llcc@{}}
\toprule
Metric space & Objective & Deterministic & Universally strategyproof \\
\midrule
Line & Social cost & $7$ (upper bound by \citep{KVZ}) & $[2.5,3]$ \\
Line & Max cost & $3$ (by \citep{KVZ}) & $3$ \\
General & Social cost & $2n-1$ & $4$ (as $n\to\infty$) \\
General & Max cost & $5$ & $[3.75,3.8]$ \\
\bottomrule
\end{tabular}
\caption{An overview of the bounds on the optimal approximation ratio of deterministic and universally strategyproof mechanisms for general instances. Bounds without a reference are shown in this paper.}
\label{tab:results}
\end{table}

The mechanisms and techniques used on the line are quite different from those used on general metric spaces. On the line, our randomized mechanisms build on weighted-percentile rules generalizing the \Mech{Spike} mechanism of \citet{FFG}, and combine these rules to deal with collisions between the two facilities. On general metric spaces, our upper bounds are obtained by randomizing over simple fixed-agent mechanisms. For the lower bounds, we exploit the ability of arbitrary metric spaces to realize unrestricted rankings over candidate locations, allowing us to apply the Gibbard--Satterthwaite theorem.

\subsection{Related Work}\label{sec:related}
Beyond the work of \citet{KVZ} discussed above, several papers study truthful facility location under constraints on the feasible facility locations. \citet{SuiBoutilier} considered constrained facility location more broadly, including settings with multiple homogeneous facilities, where the facilities are identical and each agent cares only about the closest one. \citet{TangEtAl} considered a homogeneous setting with candidate locations on the line; for two facilities, they established approximation bounds for group-strategyproof mechanisms in terms of both the social and the max cost.

For heterogeneous facilities, where the facilities are different and agents may have preferences over which of them affect their cost, \citet{ZhaoMax} considered a constrained $k$-facility model with max-variant cost, where an agent's cost is her distance from the farthest facility she approves. \citet{LotfiVoudouris} studied this two-facility max-variant setting with private positions and publicly known approval preferences. \citet{KanellopoulosVoudourisObnoxious} studied a constrained model with two obnoxious facilities, private positions, and known optional preferences. Another constrained model is the agent-constrained setting of \citet{DeligkasLotfiVoudouris}, where facilities must be placed at distinct locations reported by the agents rather than at a fixed set of candidate locations. \citet{DeligkasKanellopoulosVoudouris} subsequently studied a distributed version of this model, while \citet{XuEtAlDistributed} considered a distributed constrained two-facility model with max-variant cost.

There is also a substantial literature on heterogeneous facility location under different informational and cost models. \citet{SerafinoVentre} considered a discrete setting in which agents occupy nodes of a line graph, their positions are known, their approval preferences are private, and their individual cost is the total distance from all the facilities they approve. \citet{DiscreteHetero} subsequently improved the bounds for both the social and the max cost, while \citet{DiscreteLineBeyond} closed the remaining gap for the social cost on the line and extended the study to general graph metrics. Other works study heterogeneous facilities with optional preferences under different assumptions on individual costs and private information. These models were considered by \citet{ChenOptional} and \citet{ShaOptional}, while \citet{LiHeterogeneous} obtained constant approximation guarantees on general metric spaces.

Finally, truthful multiple-facility location has been studied extensively without candidate-location constraints. In the classical two-facility model, the facilities are homogeneous and each agent incurs her distance from the closest facility. \citet{ProcacciaTennenholtz} initiated the study of this problem, \citet{LuEtAl} showed an asymptotically linear lower bound for deterministic mechanisms and a randomized mechanism with constant approximation ratio on general metric spaces, and \citet{FotakisTzamos} characterized deterministic strategyproof mechanisms on the line and showed that the optimal approximation ratio for the social cost is $n-2$.

\section{Preliminaries}\label{sec:prelim}
We consider the two-facility location problem with candidate locations. An instance $I$ consists of a set $N$ of $n\geq 2$ agents, two facilities $\{F_1,F_2\}$, and a set $C$ of $m\geq 2$ candidate locations. The agents and candidate locations belong to a metric space $(X,d)$, where $d$ satisfies the triangle inequality; on the line, $X=\mathbb{R}$ and $d(x,y)=|x-y|$. Each agent $i\in N$ has a private position $x_i\in X$ and a known approval preference $p_{ij}\in\{0,1\}$ for each $j\in[2]$, indicating whether she approves facility $F_j$, with $p_{i1}+p_{i2}\geq 1$. The candidate set satisfies $C\subseteq X$. We denote an instance by $I=(\xv,\pv,C)$, where $\xv=(x_i)_{i\in N}$ is the position profile and $\pv=(p_{ij})_{i\in N,j\in[2]}$ is the preference profile. For convenience, we will write $d(i,y)$ instead of $d(x_i,y)$ for the distance between agent $i$ and a point $y\in X$.

A \emph{feasible solution} (or simply \emph{solution}) is a pair $\cv=(c_1,c_2)\in C^2$ of distinct candidate locations, where facility $F_j$ is placed at $c_j$ for each $j\in[2]$. The individual cost of an agent $i\in N$ under a solution $\cv$ is the total distance from the facilities she approves:
\[
\cost_i(\cv\mid I)=\sum_{j\in[2]} p_{ij}\cdot d(i,c_j).
\]
When the instance is clear from context, we simply write $\cost_i(\cv)$.
We consider two social objective functions:
\begin{itemize}
    \item The \emph{social cost} of a solution $\cv$ is the total individual cost,
    \[
    \SC(\cv\mid I)=\sum_{i\in N}\cost_i(\cv\mid I).
    \]
    \item The \emph{max cost} of a solution $\cv$ is the maximum individual cost,
    \[
    \MC(\cv\mid I)=\max_{i\in N}\cost_i(\cv\mid I).
    \]
\end{itemize}
Throughout the paper, we will denote by $\w=(w_1,w_2)$ the solution returned by the mechanism under consideration, and by $\ov=(o_1,o_2)$ an optimal solution for the objective under consideration.

A deterministic mechanism $M$ takes as input an instance $I$ and returns a feasible solution $M(I)$. The \emph{approximation ratio} of $M$ with respect to an objective function $f\in\{\SC,\MC\}$ is
\[
\sup_I \frac{f(M(I)\mid I)}{\min_{\cv} f(\cv\mid I)},
\]
where the minimum is over all feasible solutions $\cv$.
The mechanism $M$ is \emph{strategyproof} if no agent can decrease her individual cost by misreporting her position. Formally, for every instance $I=(\xv,\pv,C)$, every agent $i\in N$, and every alternative report $x_i'\in X$,
\[
\cost_i(M(I)\mid I)
\leq
\cost_i\bigl(M((x_i',\xv_{-i}),\pv,C)\mid I\bigr),
\]
where $(x_i',\xv_{-i})$ is the position profile obtained from $\xv$ by replacing only the position of agent $i$ with $x_i'$.
A \emph{randomized} mechanism outputs a probability distribution over feasible solutions. Its approximation ratio is defined as above, by comparing the expected value of the objective with the optimal deterministic value. A randomized mechanism is \emph{universally strategyproof} if it is a probability distribution over deterministic strategyproof mechanisms, where the distribution does not depend on the reported positions.

We now introduce some further useful notation and terminology. 
For each $j\in[2]$, we denote by $N_j$ the set of agents that approve facility $F_j$. An agent in $N_1\cap N_2$ has a \emph{doubleton} preference, while an agent in $N_1\setminus N_2$ or $N_2\setminus N_1$ has a \emph{singleton} preference. Accordingly, an instance is \emph{doubleton} if $N_1\cap N_2=N$, and \emph{singleton} if $N_1\cap N_2=\varnothing$. We write $n_1=|N_1\setminus N_2|$, $n_2=|N_2\setminus N_1|$, $n_{12}=|N_1\cap N_2|$ for the number of agents of each preference type; hence, $n=n_1+n_2+n_{12}$. For each agent $i$, we also denote by $t(i)$ and $s(i)$ her closest and second-closest candidate locations, respectively. All ties are broken according to a fixed order.

\section{The Line}\label{sec:line}
In this section, we consider line metrics. For the social cost, we establish that the optimal approximation ratio is $7$ for deterministic strategyproof mechanisms and between $5/2$ and $3$ for universally strategyproof mechanisms. We then consider the max cost and show that the tight deterministic bound of $3$ due to \citet{KVZ} remains tight for universally strategyproof mechanisms.

\subsection{Social Cost}\label{subsec:line-sc}

\subsubsection{Deterministic Mechanisms}\label{subsec:line-det}

\citet{KVZ} showed an upper bound of $7$ on the approximation ratio of deterministic strategyproof mechanisms in terms of the social cost on the line. Here, we show a matching lower bound using instances consisting only of agents with singleton preferences.\footnote{This also corrects a previously claimed upper bound of $3$ for such singleton instances, which relied on a manipulable mechanism.}

We say that a mechanism is \emph{unanimous} if, whenever all agents are located at the same candidate location $c\in C$, it places the facility at $c$. The following characterization of deterministic unanimous mechanisms for locating a single facility based on the positions of two agents will also be useful later.

\begin{lemma}\label{lem:fixed-phantom}
Let \(C\) be a finite set of candidate locations on the line. Consider a deterministic unanimous strategyproof mechanism \(f\) that places a facility based on the positions of two agents \(i\) and \(j\). Then, there exists \(\tau\in C\), depending only on \(f\), such that \(f\) places the facility at $\med(t(i),t(j),\tau)$ whenever \(t(i)\) and \(t(j)\) are unique and \(t(i)\leq t(j)\).
\end{lemma}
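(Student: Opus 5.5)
We reduce the statement to the classical characterization of strategyproof rules on a finite line with single-peaked preferences. The key observation is that an agent at position $x$ on the line ranks the candidate locations in $C$ by distance to $x$, and these rankings are single-peaked with respect to the order of $C$ along the line, with peak $t(i)$. Under a unique peak, the ranking on each side of the peak is determined: closer candidates on the same side are strictly preferred. The plan is to show that $f$ behaves as a generalized median with a single phantom $\tau$.

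\textbf{Step 1: the outcome depends only on the peaks.} For two reports $x_i,x_i'$ with the same unique closest candidate, strategyproofness of $i$ in both directions gives $d(x_i,f(x_i,x_j))\le d(x_i,f(x_i',x_j))$ and symmetrically. We first show both outcomes lie on the same side structure. Then we use continuity of moving $x_i$ inside the Voronoi cell of $t(i)$ to conclude that the outcomes coincide; this is the standard "peaks-only" argument. Ties across the two sides are handled by perturbing $x_i$ within the cell, which changes which of two equidistant candidates is preferred.

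\textbf{Step 2: the peak-only rule is a generalized median.} This gives an induced rule $g(a,b)$ on $C\times C$ with $g(a,a)=a$, by unanimity. We show $g(a,b)\in[\min(a,b),\max(a,b)]$. Suppose it were outside, say $g(a,b)<a\le b$. Agent $j$ with peak $b$ could report $a$ to get outcome $a$, which is strictly closer to her, a contradiction.

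Next we define $\tau$. Let $\ell=\min C$ and $r=\max C$, and set $\tau=g(\ell,r)$. For $a\le b$, we show $g(a,b)=\med(a,b,\tau)$ by moving one peak at a time. Starting from $g(\ell,r)=\tau$, we move agent $i$'s peak from $\ell$ to $a$. Strategyproofness in both directions, together with single-peakedness, gives the following monotonicity: as $i$'s peak moves right, the outcome either stays put, if it is at or right of the new peak, or becomes the new peak. This is the usual uncompromisingness argument, which needs care because preferences are distances to real points, not arbitrary single-peaked orders. We then do the same for $j$'s peak moving from $r$ to $b$. Combining the two moves yields $\med(a,b,\tau)$, provided $a\le b$.

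\textbf{Main obstacle.} The hardest step is the uncompromisingness/monotonicity argument in Step 2. We must show that when $i$ moves from peak $a$ to peak $a'$ with $a\le a'\le g$, the outcome does not change. If it changed to some $g'\ne g$, then either $i$ at $a$ prefers $g'$, or $i$ at $a'$ prefers $g$. Because $i$'s exact position can be chosen freely within the Voronoi cells (Step 1 makes the outcome independent of that choice), we can pick positions that break ties favorably. For example, we place $i$ near the boundary of the cell of $a'$ adjacent to $g$ so that $i$ strictly prefers points between $a'$ and $g$. Choosing such extreme positions within cells is what I would try first to derive the contradictions.
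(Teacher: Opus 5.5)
Your Step 1 cannot be carried out as described, and Step 2 depends on it. Fix $x_j$ and take two positions $x_i,x_i'$ in the cell of $t(i)$. Strategyproofness alone only excludes two distinct outcomes lying weakly on the same side of $t(i)$, because every point of the cell ranks such candidates in the same strict order. It cannot exclude outcomes $g<t(i)<g'$. Moving $x_i$ within the cell flips $i$'s preference between $g$ and $g'$, and this is precisely what makes such behaviour strategyproof; ``continuity'' does not help, since the outcome is discrete. A concrete example: with $C=\{0,1,2\}$, the rule that ignores $j$ and outputs the point of $\{0,2\}$ closest to $x_i$ is strategyproof. It outputs $0$ at $x_i=0.9$ and $2$ at $x_i=1.1$, although both positions have unique peak $1$. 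So peaks-only cannot be proved before unanimity has been used, and your Step 1 never uses it. For the same reason, ``$g(a,a)=a$ by unanimity'' is not yet available: unanimity only covers both agents located exactly at $a$, not arbitrary positions whose peak is $a$.

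The fix is to reorder the steps, which is what the paper does.
\begin{enumerate}
\item Show that the facility is at $c$ whenever both unique peaks equal $c$. Move the agents away from $c$ one at a time: if the outcome changed, the mover could report $c$ and obtain her unique closest location.
\item Prove the interval property directly on positions, using your own argument with agent $j$ reporting a position whose peak is $t(i)$.
\item Only then derive peaks-only for $t(i)<t(j)$. Both outcomes now lie in $[t(i),t(j)]$, which every point of either agent's cell ranks in the same strict order.
\end{enumerate}
With this in place, your Step 2 is essentially the paper's induction: $\tau=g(\ell,r)$, moving one peak at a time, the ``stays put or jumps to the new peak'' monotonicity, and positions placed just past a Voronoi midpoint to make the deviation strict. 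Your order of moves ($i$ first with $j$'s peak at $r$, then $j$ with $i$'s peak at $a$) yields $\min\{b,\max\{a,\tau\}\}=\med(a,b,\tau)$ just as well.
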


\begin{proof}
Throughout the proof, we consider only positions with a unique closest candidate location.
We establish a few simple properties. 
Suppose first that $t(i)=t(j)=c \in C$. By unanimity, the facility is placed at \(c\) when both agents are located at \(c\). We move the agents, one at a time, to arbitrary positions whose unique closest candidate location remains \(c\). If the outcome changed when an agent moved, that agent could instead report \(c\) and lead to the facility being placed at \(c\), her unique closest candidate location, contradicting strategyproofness. Hence, whenever \(t(i)=t(j)=c\), the facility must be placed at \(c\).

Next, suppose that \(t(i)<t(j)\). We claim that the facility must be placed at a candidate location in $[t(i),t(j)]$. If it were placed strictly to the left of \(t(i)\), agent \(j\) could report a position whose unique closest candidate location is \(t(i)\). By the previous observation, this would place the facility at \(t(i)\), which is strictly closer to her true position, contradicting strategyproofness. The case where the facility is placed strictly to the right of \(t(j)\) is symmetric.

Moreover, once \(t(i)\) and \(t(j)\) are fixed, the outcome does not depend on the precise positions of the agents. Indeed, among the candidate locations in \([t(i),t(j)]\), every position whose unique closest candidate location is \(t(i)\) ranks the locations from left to right, whereas every position whose unique closest candidate location is \(t(j)\) ranks them from right to left. Thus, if two reports of the same agent with the same unique closest candidate location led to different outcomes, then the agent would strictly prefer one of these outcomes and force it by submitting the corresponding report. This would violate strategyproofness. We can therefore write \(f(t(i),t(j))\) for the outcome.

Let $L$ and $R$ denote the leftmost and rightmost candidate locations, respectively. We begin by characterizing the outcome for a fixed $t(j)$. We claim that
\[
f(t(i),t(j))=\max\{t(i),f(L,t(j))\}.
\]
By the interval property, $f(t(i),t(j))\geq t(i)$. Strategyproofness also implies $f(t(i),t(j))\geq f(L,t(j))$; otherwise, when $L$ is $i$'s unique closest location, she could report a position whose unique closest candidate location is $t(i)$ and force an outcome that is farther to the left and that she strictly prefers. We now consider two cases:
\begin{itemize}
\item 
If $t(i)\leq f(L,t(j))$, then $f(t(i),t(j))>f(L,t(j))$ would allow agent $i$ to report a position whose unique closest candidate location is $L$ and obtain the strictly preferred outcome $f(L,t(j))$. Hence,
$f(t(i),t(j))=f(L,t(j))$.

\item If $t(i)>f(L,t(j))$, we proceed from $f(L,t(j))$ to the right over the candidate locations. The claim holds at $f(L,t(j))$ by the previous case. Since $f(L,t(j))\in C$, for every candidate location $t(i)>f(L,t(j))$, the candidate location immediately to its left is at least $f(L,t(j))$. Suppose the claim has already been established for this candidate location. The induction hypothesis implies that the mechanism places the facility there whenever it is the unique closest candidate location of agent $i$. Place agent $i$ sufficiently close to, and to the right of, the midpoint between this candidate location and $t(i)$, so that $t(i)$ is her unique closest candidate location. If $f(t(i),t(j))>t(i)$, then the candidate location immediately to the left of $t(i)$ is strictly closer to her than the actual outcome. By reporting a position whose unique closest candidate location is that location, she could strictly decrease her cost, contradicting strategyproofness. Therefore, $f(t(i),t(j))=t(i)$.
\end{itemize}
Overall, we have established that $f(t(i),t(j))=\max\{t(i),f(L,t(j))\}$.

It remains to characterize $f(L,t(j))$. Let
$\tau=f(L,R)$.
We claim that
\[
f(L,t(j))=\min\{t(j),\tau\}.
\]
The interval property gives $f(L,t(j))\leq t(j)$, while strategyproofness gives $f(L,t(j))\leq\tau$. Indeed, if $f(L,t(j))>\tau$, then agent $j$, when her unique closest candidate location is $R$, could report a position whose unique closest candidate location is $t(j)$ and obtain a strictly preferred outcome.
\begin{itemize}
\item If $t(j)\geq\tau$, then $f(L,t(j))<\tau$ would allow agent $j$ to report a position whose unique closest candidate location is $R$ and obtain $\tau$, which is strictly closer to her true position. Hence, $f(L,t(j))=\tau$.

\item If $t(j)<\tau$, we proceed from $\tau$ to the left over the candidate locations. The claim holds at $\tau$ by the previous case. Since $\tau\in C$, for every candidate location $t(j)<\tau$, the location immediately to its right is at most $\tau$. Suppose the claim has already been established for this candidate location. The induction hypothesis implies that the mechanism places the facility there whenever it is the unique closest candidate location of agent $j$. Place agent $j$ sufficiently close to, and to the left of, the midpoint between $t(j)$ and this location, so that $t(j)$ is her unique closest candidate location. If $f(L,t(j))<t(j)$, then the candidate location immediately to the right of $t(j)$ is strictly closer to her than the actual outcome. By reporting a position whose unique closest candidate location is that location, she could strictly decrease her cost, contradicting strategyproofness. Therefore, $f(L,t(j))=t(j)$.
\end{itemize}
Overall, we have established that $f(L,t(j))=\min\{t(j),\tau\}$.

Combining the two characterizations,
\[
f(t(i),t(j))
=\max\{t(i),\min\{t(j),\tau\}\}
=\med(t(i),t(j),\tau).
\]
All midpoint positions can be chosen strictly on the appropriate side of the midpoint, so that the relevant closest candidate location is unique and the deviation is strictly beneficial.
\end{proof}

We are now ready to prove the lower bound. 

\begin{theorem}\label{thm:lineSClower}
On the line, the approximation ratio of any deterministic strategyproof mechanism in terms of the social cost is at least $7-\delta$, for any $\delta>0$.
\end{theorem}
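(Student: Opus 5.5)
We plan to prove the bound on singleton instances with few agents, by contradiction: fix a deterministic strategyproof mechanism $M$ with approximation ratio below $7-\delta$ and extract enough structure from it to build an instance where it is forced to be bad. We will use one agent approving only $F_1$ and one approving only $F_2$ (adding a few more singleton agents later if needed to tune costs), and candidate locations arranged so that both agents' favourite locations can coincide. This coincidence is what creates the coupling.

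The first step is to reduce the mechanism to the setting of \cref{lem:fixed-phantom}.

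\begin{itemize}
\item A ratio below $7$ forces unanimity-type behaviour. If all agents approving $F_j$ sit exactly on a candidate $c$ and no conflict arises, then the optimum places $F_j$ at $c$ at cost $0$ for those agents. Placing $F_j$ elsewhere would therefore give an unbounded ratio, so $M$ must put $F_j$ at $c$.
\item Freezing the agents of one facility at well-chosen positions, the location of the other facility (as a function of its two approving agents) is a unanimous strategyproof single-facility rule.
\item Applying \cref{lem:fixed-phantom}, it equals $\med(t(i),t(j),\tau)$ for some phantom $\tau$. The phantom may depend on the frozen positions of the other agents.
\end{itemize}

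The second step handles the collision constraint: when the median lands on the location already taken by the other facility, $M$ must deviate and place one facility at a second-best location. This deviation is exploited in two ways.

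\begin{itemize}
\item Strategyproofness forces the deviation to be the same regardless of the exact positions, as in the proof of \cref{lem:fixed-phantom}.
\item Approximation forces the mechanism to resolve collisions in favour of the heavier side.
\end{itemize}

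We then take a case analysis on where the phantom $\tau$ lies relative to the candidates, say three or four candidates at carefully tuned distances such as $0,1,2,\dots$ with agents near midpoints. In each case we show that some agent can move to an adjacent candidate's Voronoi cell and, via the median rule, relocate a facility so that another facility is displaced because of the distinctness constraint. The resulting instance has an optimum cost near $1$, while the mechanism pays near $7$: for instance, one agent suffers about $3$ from being pushed one step too far, and a displaced facility adds about $4$ across the others. Agents sitting just beyond midpoints give the $-\delta$ slack.

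The main obstacle is designing the candidate gadget so that \emph{every} possible phantom value, and every way of breaking collisions, leads to a ratio of at least $7-\delta$. Also, the phantom for one facility can depend on the other facility's agents, so the cases multiply. I would first find the instance attaining $7$ for the mechanism of \citet{KVZ}, then reverse engineer: try to show that any mechanism with ratio below $7$ must agree with that mechanism on a small family of instances, chaining deviations. I would expect to need a two-stage argument: first pin down the phantom via one instance family, then exploit the collision on a second family.
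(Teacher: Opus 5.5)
Your proposal is a plan rather than a proof, and the one structural step it commits to fails exactly where the coupling lives. You assert that freezing the agents of one facility yields a \emph{unanimous} single-facility rule for the other facility, so that \cref{lem:fixed-phantom} applies. The coupling only appears if the frozen group sits at a candidate $c'$ that the other group may also want. In that case nothing forces unanimity at $c'$: when both groups sit at $c'$ there is no zero-cost solution, one facility must be displaced, and the approximation ratio does not say which. Since \cref{lem:fixed-phantom} needs unanimity at \emph{every} candidate (its interval property is derived from it), this route gives you no phantom for either rule.

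Likewise, ``resolve collisions in favour of the heavier side'' gives nothing at the collision profile the argument actually needs, where both sides are equally heavy. You also leave open the case analysis on $\tau$, the gadget, the instances, and the origin of the constant $7$; the ``$3+4$ over an optimum near $1$'' accounting does not correspond to an actual construction. Finally, you need two agents per facility, not one: with a single agent the unanimous strategyproof rule is a dictatorship and has no phantom.

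The missing idea is to let the collision profile itself supply the missing unanimity. Take candidates $\{0,1,q\}$ with $q>2$ close to $2$, two agents approving only $F_1$, and two approving only $F_2$. Consider $I=(1,1|1,1)$. A strategyproofness chain $I\to(0,1|1,1)\to(0,0|1,1)$, ending at an instance with optimum $0$, shows that $F_1$ is not at $q$ in $I$. Symmetrically $F_2$ is not at $q$ either, so $M(I)\in\{(0,1),(1,0)\}$ and exactly one facility sits at $1$.

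Suppose it is $F_2$. Then the rule for $F_2$ with the $F_1$-agents frozen at $1$ is unanimous: at $0$ and $q$ by zero-cost instances, and at $1$ by $M(I)$. Hence it equals $\med(t(j_1),t(j_2),\tau)$ for a single $\tau$. Two forcing instances with the same frozen agents now pin $\tau$ from both sides:
\begin{itemize}
\item In $(1,1|0,\alpha)$ with $\alpha$ slightly above $1/2$, any solution not putting $F_2$ at $0$ costs at least $4-\alpha$ against an optimum of $\alpha$. This forces $\tau=0$.
\item In $(1,1|\beta,q)$ with $\beta$ slightly below $(q+1)/2$, any solution not putting $F_2$ at $q$ costs at least $\beta+q$ against $q-\beta$. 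This forces $\tau=q$.
\end{itemize}
This is an immediate contradiction, with no case analysis over $\tau$ or over how collisions are broken. The case $M(I)=(1,0)$ is handled by the mirrored instances, and both ratios tend to $7$.
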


\begin{proof}
Suppose towards a contradiction that there is a deterministic strategyproof mechanism $M$ with approximation ratio at most $\rho=7-\delta$ for some $\delta>0$. Choose $q>2$ sufficiently close to $2$, and then choose $\alpha$ and $\beta$ such that
\[
\frac12<\alpha<1,\qquad \frac q2<\beta<\frac{q+1}{2},
\]
and
\begin{equation}\label{eq:line-lb-forcing}
\frac{4-\alpha}{\alpha}>\rho,\qquad \frac{\beta+q}{q-\beta}>\rho.
\end{equation}
This is possible since $(4-\alpha)/\alpha$ tends to $7$ as $\alpha$ tends to $1/2$, whereas $(\beta+q)/(q-\beta)$ tends to $(3q+1)/(q-1)$ as $\beta$ tends to $(q+1)/2$, which in turn tends to $7$ as $q$ tends to $2$.

We consider instances with three candidate locations $\{0, 1, q\}$, two agents $i_1$ and $i_2$ that approve only $F_1$, and two agents $j_1$ and $j_2$ that approve only $F_2$. We denote such an instance by the positions $(x_{i_1},x_{i_2} | x_{j_1},x_{j_2})$ of the agents. Consider the following four instances:
\[
J_1=(1,1 | 0,\alpha),\qquad J_1'=(0,\alpha | 1,1),\qquad
J_2=(1,1 | \beta,q),\qquad J_2'=(\beta,q | 1,1).
\]
In $J_1$, the solution $(1,0)$ has social cost $\alpha$ and is optimal, whereas any solution that does not place $F_2$ at $0$ has social cost at least $4-\alpha$. So, by \eqref{eq:line-lb-forcing}, $M$ must place $F_2$ at $0$ when given $J_1$ as input. Symmetrically, $M$ must place $F_1$ at $0$ when given $J_1'$ as input. In $J_2$, the solution $(1,q)$ has social cost $q-\beta$, whereas any solution that does not place $F_2$ at $q$ has social cost at least $\beta+q$. So, by \eqref{eq:line-lb-forcing}, $M$ must place $F_2$ at $q$ when given $J_2$ as input. Symmetrically, $M$ must place $F_1$ at $q$ when given $J_2'$ as input.

Now, consider the instance $I=(1,1|1,1)$. We first show that $M$ does not place $F_1$ at $q$ when given $I$ as input. Suppose otherwise, and consider the instance $I'=(0,1|1,1)$ obtained from $I$ by moving agent $i_1$ from $1$ to $0$. To maintain strategyproofness, $M$ must place $F_1$ at $q$ in $I'$ as well; otherwise, agent $i_1$ would misreport her position in $I$ as $0$ so that $F_1$ moves to $0$ or $1$, decreasing her cost from $q-1$ to at most $1$. Next, consider the instance $I''=(0,0|1,1)$ obtained from $I'$ by moving agent $i_2$ from $1$ to $0$. The unique optimal solution for $I''$ is $(0,1)$ and has social cost $0$, and thus $M$ must place $F_1$ at $0$ to achieve a finite approximation ratio. But then, agent $i_2$ would misreport her position in $I'$ as $0$ to decrease her cost from $q-1$ to $1$, a contradiction. Hence, $M$ does not place $F_1$ at $q$ when given $I$ as input, and, symmetrically, it does not place $F_2$ at $q$ either. Since the facilities must be placed at different locations, we have
\begin{align*}
M(I)\in\{(0,1),(1,0)\}.
\end{align*}
Suppose that $M(I)=(0,1)$. Consider the single-facility mechanism $f$ that, given the positions of agents $j_1$ and $j_2$, places the facility where $M$ places $F_2$ when agents $i_1$ and $i_2$ are both positioned at $1$. Since $M$ is strategyproof and agents $j_1$ and $j_2$ approve only $F_2$, $f$ is a deterministic strategyproof mechanism. The instances $(1,1|0,0)$ and $(1,1|q,q)$ have optimal social cost $0$, and thus, to achieve a finite approximation ratio, $f$ must place the facility at $0$ when both agents are at $0$, and at $q$ when both agents are at $q$. In addition, since $M(I)=(0,1)$, $f$ places the facility at $1$ when both agents are at $1$. So, by \cref{lem:fixed-phantom}, there exists $\tau\in\{0,1,q\}$ such that $f$ places the facility at $\med(t(j_1),t(j_2),\tau)$ whenever $t(j_1)\leq t(j_2)$. In $J_1$, we have $t(j_1)=0$ and $t(j_2)=1$ (since $\alpha>1/2$), and $F_2$ is placed at $0$; hence, $0=\med(0,1,\tau)$, which implies that $\tau=0$. In $J_2$, we have $t(j_1)=1$ (since $1<\beta<(q+1)/2$) and $t(j_2)=q$, and $F_2$ is placed at $q$; hence, $q=\med(1,q,\tau)$, which implies that $\tau=q$, a contradiction. For the remaining case $M(I)=(1,0)$ we can use symmetric arguments for $F_1$ by fixing the positions of agents $j_1$ and $j_2$ at $1$, and using the instances $J_1'$ and $J_2'$. Therefore, the approximation ratio of $M$ cannot be at most $7-\delta$.
\end{proof}

By combining \cref{thm:lineSClower} with the upper bound of \citet{KVZ}, we obtain the following result.

\begin{corollary}
On the line, the best possible approximation ratio of deterministic strategyproof mechanisms in terms of the social cost is exactly $7$.
\end{corollary}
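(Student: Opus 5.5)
The corollary follows by putting two bounds side by side. The upper bound comes from \citet{KVZ}. They give a deterministic strategyproof mechanism on the line whose approximation ratio for the social cost is at most $7$ on every instance. So the best possible ratio is at most $7$.

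For the lower bound I will use \cref{thm:lineSClower}. Suppose, for contradiction, that some deterministic strategyproof mechanism has social-cost approximation ratio $\rho<7$. Set $\delta=7-\rho>0$. Then the mechanism's ratio is at most $7-\delta$, which \cref{thm:lineSClower} rules out. Hence every deterministic strategyproof mechanism has ratio at least $7$, and this holds even when restricted to singleton instances.

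Combining the two bounds, the infimum over deterministic strategyproof mechanisms of the approximation ratio is exactly $7$. Since the mechanism of \citet{KVZ} attains this value, it is in fact a minimum.

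One point needs checking. The model in \citet{KVZ} must coincide with the one here: distinct candidate locations, known approvals, and additive costs. It must also allow general (not only singleton) instances, so that their upper bound applies to every instance class considered in our lower bound. Since the introduction states that they study exactly this model, this check is routine and nothing else needs to be verified.
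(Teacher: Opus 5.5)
Your proposal is correct and matches the paper's argument: the paper also obtains the corollary by combining the upper bound of $7$ from \citet{KVZ} with \cref{thm:lineSClower}. The step ``ratio at most $7-\delta$ is ruled out'' is justified either by the paper's proof, which derives a contradiction from exactly that assumption, or by applying the theorem's statement with $\delta/2$. Your extra remark that the mechanism of \citet{KVZ} then attains ratio exactly $7$, so the optimum is a minimum, is also valid.
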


\subsubsection{Universally Strategyproof Mechanisms}\label{subsec:line-percentile}
We next show that the approximation ratio of universally strategyproof mechanisms is between $5/2$ and $3$, thus establishing a clear separation between deterministic and randomized strategyproof mechanisms.

For a single facility with candidate locations, \citet{FFG} introduced \emph{weighted-percentile voting} mechanisms, which select an agent by rank according to a fixed distribution and place the facility at her closest candidate location. Their \Mech{Spike} mechanism achieves the optimal approximation ratio of $2$. Since our problem reduces to this single-facility setting when $N_1=\varnothing$ or $N_2=\varnothing$, we may assume in what follows that $N_1$ and $N_2$ are both non-empty. We use a one-parameter extension of this rule to combine two serial mechanisms with a mechanism that selects both locations based on the doubleton agents. The percentile parameters and mixture probabilities depend only on the publicly known counts $n_1$, $n_2$ and $n_{12}$. Ties between candidate locations are broken to the left, and ties between agents according to a fixed ordering.

We next define the weighted-percentile rule that will be used in our mechanisms. For $\lambda\ge 0$, define the symmetric distribution function
\begin{equation}\label{line3:eq:F}
F_\lambda(q)=
\begin{cases}
\dfrac{(1+2\lambda)q}{2(1+\lambda-q)}, & 0\le q\le 1/2,\\[6pt]
1-F_\lambda(1-q), & 1/2\le q\le 1.
\end{cases}
\end{equation}
For a non-empty group of agents $G\subseteq N$, the weighted-percentile rule selects its $k$-th leftmost agent as representative with probability
\[
F_\lambda(k/|G|)-F_\lambda((k-1)/|G|).
\]
Given a non-empty candidate set $C'\subseteq C$, let $W_{\lambda,G}(C')$ denote the candidate location in $C'$ closest to this representative. For $\lambda=0$, this coincides with the \Mech{Spike} rule.

Our mechanism is a mixture of the following three component mechanisms:
\begin{itemize}
\item $\Mech{Serial-Percentile}_1$. When $n_1>0$, this mechanism places $F_1$ at $w_1=W_{(n_2+n_{12})/n_1,\,N_1\setminus N_2}(C)$ and then places $F_2$ at $W_{0,N_2}(C\setminus\{w_1\})$.
\item $\Mech{Serial-Percentile}_2$. This mechanism is defined symmetrically, with the roles of the two facilities exchanged.
\item \Mech{Pair-Percentile}. When $n_{12}>0$, this mechanism samples a representative from $N_1\cap N_2$ using parameter $\lambda=(n_1+n_2)/(2n_{12})$. It then places the two facilities at her two closest distinct candidate locations and assigns the facility labels uniformly at random.
\end{itemize}
The percentile parameter $\lambda$ used by each serial mechanism is chosen so that the corresponding percentile bound also accounts for the displacement of the first facility, including the additional loss that may arise when it occupies the optimal location of the second facility.

We run these mechanisms with probabilities proportional to weights that depend only on the publicly known counts $n_1,n_2,n_{12}$. The weights are chosen to balance the contributions of the two singleton types, while assigning additional weight to the doubleton-based mechanism when needed to keep the approximation ratio at $3$. In particular, we define
\begin{equation}\label{line3:eq:weights}
\begin{aligned}
\omega_1 &= n_1(n_1+n_{12}),\\
\omega_2 &= n_2(n_2+n_{12}),\\
\omega_{12} &= \frac12 \left[n_{12}(n_1+n_2+2n_{12})-(n_1-n_2)^2\right]_+
\end{aligned}
\end{equation}
where $[x]_+=\max\{0,x\}$. The corresponding probabilities are $\omega_1/\Omega$, $\omega_2/\Omega$ and $\omega_{12}/\Omega$, where $\Omega=\omega_1+\omega_2+\omega_{12}$ is the total weight. See~\cref{mech:line3}.

Note that all weights are non-negative, and $\Omega>0$. Any undefined component has zero weight and is therefore omitted: if $n_j=0$, then $\omega_j=0$, while if $n_{12}=0$, then $\omega_{12}=0$. If both singleton groups are empty, then $\omega_{12}=n_{12}^2>0$, and only \Mech{Pair-Percentile} is used.

\begin{algorithm}[t]
\caption{\Mech{Serial-Pair}}
\label{mech:line3}
\KwIn{Reported positions of agents, with $N_1$ and $N_2$ both non-empty}
\KwOut{Facility locations $\w=(w_1,w_2)$}
Compute the weights in \eqref{line3:eq:weights}\;
Run $\Mech{Serial-Percentile}_1$ with probability $\omega_1/\Omega$, $\Mech{Serial-Percentile}_2$ with probability $\omega_2/\Omega$, and \Mech{Pair-Percentile} with probability $\omega_{12}/\Omega$\;
\end{algorithm}

\begin{theorem}\label{line3:thm:usp}
\Mech{Serial-Pair} is universally strategyproof.
\end{theorem}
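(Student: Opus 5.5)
The plan is to exhibit \Mech{Serial-Pair} explicitly as a distribution over deterministic strategyproof mechanisms whose probabilities depend only on $n_1,n_2,n_{12}$. The weights $\omega_1,\omega_2,\omega_{12}$ depend only on these public counts, so the top-level mixture is independent of the reports. Inside each component there is a second layer of randomness: the percentile draw of a representative. We fix it by sampling a rank $k$ (or a pair of ranks for the serial mechanisms, and a rank together with a label assignment for \Mech{Pair-Percentile}). Its distribution is determined by $F_\lambda$, and $\lambda$ and $|G|$ are again functions of the public counts only. With the randomness fixed, each component becomes a deterministic ``$k$-th order statistic'' rule. It therefore suffices to show that every such deterministic rule is strategyproof.

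\textbf{Serial mechanism with fixed ranks $k_1$ (in $N_1\setminus N_2$) and $k_2$ (in $N_2$).}
\begin{itemize}
\item \emph{Agents in $N_1\setminus N_2$.} Such an agent cares only about $F_1$, which is placed at the candidate closest to the $k_1$-th leftmost agent of her group. This is the standard order-statistic argument. If she is the representative, she gets her closest candidate, which is optimal for her. If she is not, say she lies left of the representative, then any misreport either leaves the representative unchanged or replaces it by a position at least as far right. Since candidate selection by ``closest to the representative'' is monotone on the line (with ties broken to the left), the outcome can only move weakly away from her; by single-peakedness of distance, her cost weakly increases. Her report does not affect $F_2$, which she does not care about.
\item \emph{Agents in $N_2\setminus N_1$.} Such an agent does not influence $w_1$. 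For the fixed set $C\setminus\{w_1\}$, the same order-statistic argument applies with the group $N_2$.
\item \emph{Doubleton agents.} This is the main obstacle. A doubleton agent in $N_2$ cannot influence $w_1$, since $w_1$ is computed from $N_1\setminus N_2$ only. Given $w_1$, her cost is $d(i,w_1)+d(i,w_2)$, and the first term is fixed. She therefore only cares about $w_2$, which is chosen by the order-statistic rule on $C\setminus\{w_1\}$, and the previous argument applies.
\end{itemize}
The point to check carefully is that the representative of $N_2$ is defined by rank among \emph{all} of $N_2$, doubletons included, and that monotonicity of ``closest candidate in $C\setminus\{w_1\}$'' holds with left tie-breaking. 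I will verify this by showing that the map from a point $y$ to its closest candidate in a fixed set $C'$ is weakly increasing in $y$. The second serial mechanism is symmetric.

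\textbf{\Mech{Pair-Percentile} with a fixed rank and labelling.} Only doubleton agents influence the outcome. Each of them pays the sum of her distances to the representative's two closest distinct candidates, and this sum does not depend on the labels.
\begin{itemize}
\item \emph{Singleton agents.} Their reports are ignored, so they cannot manipulate.
\item \emph{Doubleton agents.} Let $S(y)$ denote the two closest candidates to a point $y$. As $y$ moves right, $S(y)$ is a sliding window of two consecutive candidates that moves weakly right. A doubleton agent who is not the representative can only shift the representative away from herself, and hence the window weakly away from her. I will argue that her sum-of-distances cost to a pair of consecutive candidates is minimized, among windows, by her own window $S(x_i)$, and is unimodal in the window index. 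The first claim, optimality of her own window, requires a short check: replacing the farther endpoint of her window by the next candidate further out never helps. Unimodality then makes the order-statistic argument go through as before.
\item \emph{The representative herself.} She gets her optimal pair, so she has no incentive to deviate.
\end{itemize}

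Combining the three components, each realization is a deterministic strategyproof mechanism, and the distribution over realizations is report-independent. Hence \Mech{Serial-Pair} is universally strategyproof.
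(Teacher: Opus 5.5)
Your proposal is correct and follows essentially the same route as the paper. You fix the count-dependent randomness, apply the fixed-rank order-statistic argument stage by stage in the serial components, and handle \Mech{Pair-Percentile} via single-peakedness over consecutive candidate pairs. The paper makes your deferred unimodality claim explicit: for an agent at $x$, the cost difference between $\{c_\ell,c_{\ell+1}\}$ and $\{c_{\ell+1},c_{\ell+2}\}$ is $d(x,c_\ell)-d(x,c_{\ell+2})$, which changes sign at $(c_\ell+c_{\ell+2})/2$, and these thresholds increase in $\ell$.

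Two small corrections. First, an agent of $N_1\setminus N_2$ does influence $F_2$ in $\Mech{Serial-Percentile}_1$, through the available set $C\setminus\{w_1\}$; this is harmless only because she does not approve $F_2$. Second, in \Mech{Pair-Percentile} a doubleton agent's own window is optimal by definition, since her two closest distinct candidates minimize her total distance over all pairs, so no check is needed there.
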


\begin{proof}
All random choices are independent of the reported positions. Fix the component mechanism, the sampled ranks, and, for \Mech{Pair-Percentile}, the random assignment of the two selected locations to the two facilities.

We first recall a standard fixed-rank argument. Suppose that each agent reports her position and is associated with her closest candidate location. For any fixed rank $k$, selecting the candidate associated with the $k$-th leftmost agent is strategyproof under single-peaked preferences. Indeed, fix the reports of all other agents. As the reporting agent varies her position, the outcome is her closest candidate location clipped to an interval determined by the other agents. Since her preferences over candidate locations are single-peaked with her true closest candidate location as the peak, reporting truthfully yields her most preferred attainable outcome.

Consider first a serial component. Every agent involved in the first stage approves only the facility placed at that stage. By the fixed-rank argument, she cannot improve its location by misreporting, and she does not care about the second facility. For every agent involved in the second stage, the location of the first facility, and hence the remaining candidate set, is independent of her report. The fixed-rank argument therefore applies to the second stage as well. A singleton agent does not care about the first facility, while for a doubleton agent its contribution to her cost is already fixed.

It remains to consider \Mech{Pair-Percentile}. Write $C=\{c_1<\cdots<c_m\}$. The two closest distinct candidate locations to any position form a consecutive pair. For an agent at position $x$, the difference between her costs for the consecutive pairs $\{c_\ell,c_{\ell+1}\}$ and $\{c_{\ell+1},c_{\ell+2}\}$ is $d(x,c_\ell)-d(x,c_{\ell+2})$, which changes sign at $(c_\ell+c_{\ell+2})/2$. Since these thresholds are increasing in $\ell$, preferences over consecutive pairs are single-peaked, and an agent's most preferred pair is non-decreasing in her position. Thus, selecting the pair associated with an agent of fixed rank is strategyproof by the same fixed-rank argument. Finally, the random assignment of the two selected locations to the facilities does not affect the cost of a doubleton agent, while singleton agents are ignored by this component.
\end{proof}

We next bound the approximation ratio of \Mech{Serial-Pair}. Let $\ov=(o_1,o_2)$ be an optimal solution and, for each $j\in[2]$, let
\[
\opt_j=\sum_{i\in N_j}d(i,o_j).
\]
Thus, $\SC(\ov)=\opt_1+\opt_2$. For the analysis, we further decompose each $\opt_j$ according to the preference types of the agents:
\[
\opt_j^{(j)}=\sum_{i\in N_j\setminus N_{3-j}}d(i,o_j)
\]
and 
\[
\opt_j^{(12)}=\sum_{i\in N_1\cap N_2}d(i,o_j)
\]
so that $\opt_j=\opt_j^{(j)}+\opt_j^{(12)}$. We first establish a weighted-percentile bound and then use it to bound the cost of each component mechanism. For a non-empty group of agents $G$, let $n_G=|G|$ and $\SC_G(t)=\sum_{i\in G}d(i,t)$. For a point $x$ and a non-empty set $S$, let $d(x,S)=\min_{y\in S}d(x,y)$.

\begin{lemma}[Weighted-percentile bounds]\label{line3:lem:percentile}
Let $v$ be the position of a representative sampled from a non-empty group $G$ with parameter $\lambda\ge0$. Then, for every $o\in\mathbb R$,
\begin{equation}\label{line3:eq:representative}
\E\bigl[\SC_G(v)+(1+2\lambda)n_Gd(v,o)\bigr]
\le(2+2\lambda)\SC_G(o).
\end{equation}
Moreover, for every non-empty $C'\subseteq C$ and $W=W_{\lambda,G}(C')$,
\begin{equation}\label{line3:eq:facility}
\E\bigl[\SC_G(W)+\lambda n_Gd(W,o)\bigr]
\le(2+2\lambda)\SC_G(o)+(1+\lambda)n_G d(o,C').
\end{equation}
\end{lemma}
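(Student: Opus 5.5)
The plan is to prove \eqref{line3:eq:representative} by a gap decomposition on the line, and then derive \eqref{line3:eq:facility} from it with the triangle inequality.

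\textbf{Reducing \eqref{line3:eq:representative} to single gaps.} Write $n=n_G$, and sort the agents of $G$ as $x_1\le\dots\le x_n$. Add $o$ to the set of breakpoints, and call each interval between consecutive breakpoints an elementary interval (a gap).

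On the line, every distance $d(a,b)$ between breakpoints is the sum of the lengths of the gaps lying between $a$ and $b$. Both sides of \eqref{line3:eq:representative} are nonnegative combinations of such distances, and the rank probabilities do not depend on the positions. Hence it suffices to prove the inequality gap by gap. That is, for a gap of length $g$ with exactly $k$ agents to its left, we compare its expected contribution to the left-hand side with its contribution to the right-hand side.

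\textbf{The key inequality for one gap.} By symmetry, assume $o$ lies to the right of the gap. Put $q=k/n$.
\begin{itemize}
\item Right-hand side. The $k$ agents to the left of the gap cross it on their way to $o$. The contribution is $(2+2\lambda)kg$.
\item Left-hand side, representative to the left of the gap. This happens with probability $F_\lambda(q)$. The $n-k$ agents on the right cross the gap, and so does $d(v,o)$. The contribution is $\bigl((n-k)+(1+2\lambda)n\bigr)g$.
\item Left-hand side, representative to the right of the gap. The $k$ agents on the left cross the gap, contributing $kg$.
\end{itemize}
Dividing by $ng$, the required inequality becomes
\[
F_\lambda(q)(2+2\lambda-q)+\bigl(1-F_\lambda(q)\bigr)q\le(2+2\lambda)q,
\]
which is equivalent to
\[
F_\lambda(q)\le \frac{(1+2\lambda)q}{2(1+\lambda-q)}.
\]
For $q\le 1/2$ this holds with equality, by the definition \eqref{line3:eq:F}.

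\textbf{The main obstacle: the range $q>1/2$.} Here I must verify that
\[
1-\frac{(1+2\lambda)(1-q)}{2(\lambda+q)}\le\frac{(1+2\lambda)q}{2(1+\lambda-q)}.
\]
I would clear denominators, which are positive since $\lambda\ge0$ and $q\le1$. I expect the difference to reduce to a nonnegative multiple of $(2q-1)^2$ or a similar square. Equivalently, I could use that the right-hand side is convex in $q$ and agrees with $F_\lambda$ at $q=1/2$ and $q=1$ (where it is at least $1$), together with the concavity of $F_\lambda$ on $[1/2,1]$.

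The case where $o$ lies to the left of the gap is the mirror image, with $q$ replaced by $1-q$. It is covered because $F_\lambda$ is symmetric, $F_\lambda(q)=1-F_\lambda(1-q)$.

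\textbf{Deriving \eqref{line3:eq:facility}.} Let $c^*\in C'$ be closest to $o$. Since $W$ is the point of $C'$ closest to $v$, the triangle inequality gives
\[
d(v,W)\le d(v,c^*)\le d(v,o)+d(o,C').
\]
We also have $\SC_G(W)\le\SC_G(v)+n\,d(v,W)$ and $d(W,o)\le d(W,v)+d(v,o)$. Combining these,
\[
\SC_G(W)+\lambda n\,d(W,o)\le \SC_G(v)+(1+\lambda)n\,d(v,W)+\lambda n\,d(v,o).
\]
Substituting the bound on $d(v,W)$, the right-hand side is at most
\[
\SC_G(v)+(1+2\lambda)n\,d(v,o)+(1+\lambda)n\,d(o,C').
\]
Taking expectations and applying \eqref{line3:eq:representative} yields \eqref{line3:eq:facility}.
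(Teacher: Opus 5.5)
Your argument is the paper's. The gap decomposition is its integral over cuts, and your per-gap inequality is exactly its condition $q+2(1+\lambda-q)F_\lambda(q)\le(2+2\lambda)q$, with equality for $q\le 1/2$. Your derivation of \eqref{line3:eq:facility} from $d(v,W)\le d(v,o)+d(o,C')$ also matches the paper line for line. The step you leave open closes exactly as you predict. Write $g(q)=\frac{(1+2\lambda)q}{2(1+\lambda-q)}$. Then for $q\in[1/2,1]$,
\[
g(q)-F_\lambda(q)=\frac{(1+\lambda)(2q-1)^2}{2(\lambda+q)(1+\lambda-q)}\ge 0,
\]
which is the paper's identity $q+2(1+\lambda-q)F_\lambda(q)=(2+2\lambda)q-\frac{(1+\lambda)(2q-1)^2}{\lambda+q}$ rearranged.

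Two corrections are needed.

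\textbf{The convexity fallback fails as stated.} From $F_\lambda(1/2)=g(1/2)$, $F_\lambda(1)\le g(1)$, concavity of $F_\lambda$ and convexity of $g$, nothing follows. A concave function lies above its chord and a convex function below its chord, so the endpoint comparison points the wrong way. What does work is that $F_\lambda(q)=1-g(1-q)$ gives $F_\lambda'(1/2^+)=g'(1/2)$. So the two functions share their tangent line at $q=1/2$, with the concave $F_\lambda$ below it and the convex $g$ above it.

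\textbf{The division by $2(1+\lambda-q)$.} You claim this denominator is positive, but it vanishes when $\lambda=0$ and $q=1$, that is, when all agents lie on the side of the gap opposite $o$. In that case the undivided inequality reads $1\le 2$ and holds trivially. This is why the paper works with the undivided expression throughout.
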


\begin{proof}
We first prove \eqref{line3:eq:representative}. 
Write each distance as an integral over cuts of the line. Fix a cut, and let $q$ be the fraction of agents in $G$ lying on the side of the cut opposite to $o$. By symmetry, the probability that $v$ lies on that side is $F_\lambda(q)$. Dividing the expected contribution of this cut to the left-hand side of \eqref{line3:eq:representative} by $n_G$, we obtain
\[
q(1-F_\lambda(q))+(1-q)F_\lambda(q)+(1+2\lambda)F_\lambda(q)
=q+2(1+\lambda-q)F_\lambda(q).
\]
For $q\le1/2$, this is equal to $(2+2\lambda)q$. For $q\ge1/2$, it is equal to
\[
(2+2\lambda)q-\frac{(1+\lambda)(2q-1)^2}{\lambda+q},
\]
and is therefore at most $(2+2\lambda)q$. Integrating over all cuts proves \eqref{line3:eq:representative}.

We next prove \eqref{line3:eq:facility}. Let $c\in C'$ be a candidate closest to $o$. Since $W$ is the candidate location in $C'$ closest to $v$, $d(v,W)\le d(v,c)\le d(v,o)+d(o,C')$.
Therefore,
\[
\SC_G(W)+\lambda n_Gd(W,o)
\le
\SC_G(v)+(1+2\lambda)n_Gd(v,o)+(1+\lambda)n_Gd(o,C').
\]
Taking expectations and applying \eqref{line3:eq:representative} proves \eqref{line3:eq:facility}.
\end{proof}

\begin{lemma}\label{line3:lem:components}
Assume $n_1\ge n_2$. Each component of \Mech{Serial-Pair} has expected social cost bounded according to the following coefficients:
\[
\renewcommand{\arraystretch}{1.8}
\begin{array}{c|cccc}
&\opt_1^{(1)}&\opt_2^{(2)}&\opt_1^{(12)}&\opt_2^{(12)}\\ \hline
\Mech{Serial-Percentile}_1
&2+\dfrac{2(n_2+n_{12})}{n_1}&2&1&3\\
\Mech{Serial-Percentile}_2
&2&2+\dfrac{2(n_1+n_{12})}{n_2}&3&1\\
\Mech{Pair-Percentile}
&1&1&2+\dfrac{2n_1}{n_{12}}&
2+\dfrac{n_1+3n_2}{2n_{12}}
\end{array}
\]
\end{lemma}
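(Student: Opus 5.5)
The plan is to analyse each component separately. For each, I bound the expected cost of every agent type by its optimal cost plus a displacement term, and then absorb all displacement terms using \cref{line3:lem:percentile}. The only facts about $\ov$ that I will use are that $o_1\neq o_2$ and $o_1,o_2\in C$.

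\textbf{$\Mech{Serial-Percentile}_1$.} Let $G_1=N_1\setminus N_2$ and $\lambda=(n_2+n_{12})/n_1$, so that $\lambda n_1=n_2+n_{12}$.
\begin{enumerate}
\item Apply \eqref{line3:eq:facility} with $G=G_1$, $C'=C$ and $o=o_1$. Since $d(o_1,C)=0$, this gives $\E[\SC_{G_1}(w_1)+(n_2+n_{12})d(w_1,o_1)]\le(2+2\lambda)\opt_1^{(1)}$. The displacement budget $(n_2+n_{12})\,d(w_1,o_1)$ must now pay for all losses caused by $w_1$.
\item For the second stage, apply \eqref{line3:eq:facility} with $G=N_2$, $\lambda=0$, $C'=C\setminus\{w_1\}$ and $o=o_2$. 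This gives at most $2\opt_2^{(2)}+2\opt_2^{(12)}+(n_2+n_{12})\,d(o_2,C\setminus\{w_1\})$.
\item The last term vanishes unless $w_1=o_2$. In that case $o_1\in C\setminus\{w_1\}$, so $d(o_2,C\setminus\{w_1\})\le d(o_1,o_2)=d(o_1,w_1)$.
\item It remains to charge the doubletons' $F_1$-cost $\sum_{i\in N_1\cap N_2}d(i,w_1)$, together with the $n_{12}$ share of the term from step 3. If $w_1\ne o_2$, then $d(i,w_1)\le d(i,o_1)+d(o_1,w_1)$ and the step-3 term is zero. If $w_1=o_2$, then $d(i,w_1)=d(i,o_2)$ and the step-3 contribution is $d(o_1,w_1)$.
\item In both cases the per-doubleton total is at most $d(i,o_1)+d(i,o_2)+d(o_1,w_1)$.
\end{enumerate}
Summing, the displacement terms total $(n_2+n_{12})d(o_1,w_1)$ (the $n_2$ singletons from step 3 and the $n_{12}$ doubletons from step 5), which is exactly the budget from step 1. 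The resulting coefficients are $2+2\lambda$, $2$, $1$ and $2+1=3$, as in the table. $\Mech{Serial-Percentile}_2$ is symmetric.

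\textbf{\Mech{Pair-Percentile}.} Let $G=N_1\cap N_2$, let $\mu=(n_1+n_2)/(2n_{12})$, let $v$ be the sampled representative, and let $\{a,b\}$ be her two closest distinct candidates.
\begin{enumerate}
\item Because $\{o_1,o_2\}$ is a feasible pair, $d(v,a)+d(v,b)\le d(v,o_1)+d(v,o_2)$. Also $\max\{d(v,a),d(v,b)\}\le\max\{d(v,o_1),d(v,o_2)\}$.
\item For a doubleton $i$, $d(i,a)+d(i,b)\le 2d(i,v)+d(v,o_1)+d(v,o_2)$.
\item A singleton $i\in N_j$ receives $F_j$ at $a$ or $b$ with probability $1/2$ each. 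Her expected cost is therefore at most $d(i,o_j)+d(o_j,v)+\tfrac12\bigl(d(v,a)+d(v,b)\bigr)$, which gives coefficient $1$ on $\opt_1^{(1)}$ and $\opt_2^{(2)}$.
\end{enumerate}
What remains is a combination of $\SC_G(v)$, $d(v,o_1)$ and $d(v,o_2)$ with weights depending on $n_1,n_2,n_{12}$. I will bound it by applying \eqref{line3:eq:representative} once with $o=o_1$ and once with $o=o_2$, using nonnegative weights summing to the coefficient of $\SC_G(v)$. The weights will be chosen so that the $(1+2\mu)n_{12}\,d(v,o_j)$ terms dominate the displacement terms. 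This should produce $(2+2\mu)$-type coefficients, with the $o_1$ side absorbing the larger share because $n_1\ge n_2$.

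\textbf{Main obstacle.} I expect the hardest step to be this asymmetric allocation for \Mech{Pair-Percentile}, needed to reach the exact coefficients $2+2n_1/n_{12}$ and $2+(n_1+3n_2)/(2n_{12})$. My first attempt will be to charge $d(v,o_1)$ and $d(v,o_2)$ separately. In particular, I will write $\tfrac12(d(v,a)+d(v,b))\le\tfrac12(d(v,o_1)+d(v,o_2))$ and pair each $d(v,o_j)$ with the representative bound for $o_j$. If the constants do not match, I will fall back on the max-inequality in step 1 and split into cases according to which of $o_1,o_2$ is closer to $v$. In the case $d(v,o_2)\le d(v,o_1)$, all displacement is charged to $o_1$, which is consistent with the larger $n_1$-weighted coefficient on $\opt_1^{(12)}$.
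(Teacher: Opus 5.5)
Your proposal is correct and follows the paper's proof. The serial analysis is the same: the paper conditions on $w_1=c$ and merges the cases $c\neq o_2$ and $c=o_2$ into exactly your per-doubleton bound $d(i,o_1)+d(i,o_2)+d(o_1,w_1)$, with total displacement $(n_2+n_{12})\,d(o_1,w_1)$ paid for by the first-stage term. For \Mech{Pair-Percentile}, your ``first attempt'' is precisely the paper's argument, so no fallback is needed.

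The step you left open closes as follows. Your bounds give, conditional on $v$, displacement coefficients $n_{12}+\frac{3n_1+n_2}{2}$ on $d(v,o_1)$ and $n_{12}+\frac{n_1+3n_2}{2}$ on $d(v,o_2)$. These sum to exactly $2(1+2\mu)n_{12}=2(n_{12}+n_1+n_2)$. Hence, with weights summing to $2$, there is no slack, and the weights on the two applications of \eqref{line3:eq:representative} must be $1\pm\frac{n_1-n_2}{2(n_{12}+n_1+n_2)}$, both positive. Multiplying by $2+2\mu=2+\frac{n_1+n_2}{n_{12}}$ gives
\[
2+\frac{2n_1}{n_{12}}-\frac{(n_1-n_2)(n_1+n_2)}{2n_{12}(n_{12}+n_1+n_2)}
\qquad\text{and}\qquad
2+\frac{n_1+3n_2}{2n_{12}}-\frac{n_1-n_2}{2(n_{12}+n_1+n_2)}.
\]
These are at most the table entries exactly because $n_1\ge n_2$, which is the only place that hypothesis is used.

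Your fallback of splitting according to which of $o_1,o_2$ is closer to $v$ would not have worked. The bound \eqref{line3:eq:representative} holds only in expectation over $v$, so the weights cannot be chosen depending on $v$.
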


\begin{proof}
We first consider $\Mech{Serial-Percentile}_1$ and condition on the first facility being placed at some candidate location $c\in C$, that is, $w_1=c$. If $c\ne o_2$, then $o_2$ remains available for the second facility, and \eqref{line3:eq:facility} with $\lambda=0$ bounds its expected cost by $2\opt_2$. The $F_1$-singletons contribute $\SC_{N_1\setminus N_2}(c)$. For every doubleton agent $i$, by the triangle inequality,
$d(i,c)\le d(i,o_1)+d(o_1,c)$.
Hence, the cost of the first facility is at most
\[
\SC_{N_1\setminus N_2}(c)+\opt_1^{(12)}+n_{12}d(c,o_1).
\]
Therefore,
\[
\E[\SC(\w)\mid w_1=c]
\le \SC_{N_1\setminus N_2}(c) +n_{12}d(c,o_1) +\opt_1^{(12)} +2\opt_2^{(2)} +2\opt_2^{(12)}.
\]
If $c=o_2$, the $F_1$-singletons again contribute $\SC_{N_1\setminus N_2}(c)$, while the cost of the first facility for the doubleton agents is exactly $\opt_2^{(12)}$. The second facility is selected from $C\setminus\{o_2\}$. Since $o_1$ remains available,
$d(o_2,C\setminus\{o_2\})\le d(o_1,o_2)$.
Therefore, applying \eqref{line3:eq:facility} with $G=N_2$, $C'=C\setminus\{o_2\}$, $o=o_2$, and $\lambda=0$ bounds the expected cost of the second facility by
\[
2\opt_2+(n_2+n_{12})d(o_1,o_2).
\]
Since $c=o_2$, we have $d(o_1,o_2)=d(c,o_1)$, and thus
\[
\E[\SC(\w)\mid w_1=c]
\le \SC_{N_1\setminus N_2}(c) +(n_2+n_{12})d(c,o_1) +2\opt_2^{(2)} +3\opt_2^{(12)}.
\]
In both cases,
\[
\E[\SC(\w)\mid w_1=c]
\le
\SC_{N_1\setminus N_2}(c)
+(n_2+n_{12})d(c,o_1)
+2\opt_2^{(2)}
+\opt_1^{(12)}
+3\opt_2^{(12)}.
\]
We now apply \eqref{line3:eq:facility} with $G=N_1\setminus N_2$, $C'=C$, $o=o_1$, and $\lambda=(n_2+n_{12})/n_1$. Since $n_G=n_1$, we have $\lambda n_G=n_2+n_{12}$, while $\SC_G(o_1)=\opt_1^{(1)}$ and $d(o_1,C)=0$. Taking expectation over the choice of $c=w_1$ therefore gives
\[
\E[\SC(\w)]
\le \left(2+\frac{2(n_2+n_{12})}{n_1}\right)\opt_1^{(1)} +2\opt_2^{(2)} +\opt_1^{(12)} +3\opt_2^{(12)}.
\]
The second row for $\Mech{Serial-Percentile}_2$ follows symmetrically.

We next consider \Mech{Pair-Percentile}. Let $v$ be the selected representative, and let $c$ and $c'$ be her two closest distinct candidate locations. Then, $d(v,c)+d(v,c')\le d(v,o_1)+d(v,o_2)$.
\begin{itemize}
\item {\bf $F_1$-singleton agents.} 
Since the two selected candidate locations are assigned uniformly at random to the two facilities, the expected cost of any such agent $i$ is 
\[
\frac{d(i,c)+d(i,c')}{2}.
\]
By the triangle inequality, $d(i,c)\le d(i,o_1)+d(o_1,v)+d(v,c)$ and $d(i,c')\le d(i,o_1)+d(o_1,v)+d(v,c')$.
Therefore,
\begin{align*}
\frac{d(i,c)+d(i,c')}{2}
&\le d(i,o_1)+d(v,o_1) +\frac{d(v,c)+d(v,c')}{2}\\
&\le d(i,o_1) +\frac{3d(v,o_1)+d(v,o_2)}{2}.
\end{align*}
Summing over the $n_1$ $F_1$-singleton agents, we have that their expected total cost is at most
\[
\opt_1^{(1)} +\frac{n_1}{2}\bigl(3d(v,o_1)+d(v,o_2)\bigr).
\]

\item {\bf $F_2$-singleton agents.} 
Symmetrically to above, the expected total cost of the $F_2$-singletons is at most
\[
\opt_2^{(2)} +\frac{n_2}{2}\bigl(d(v,o_1)+3d(v,o_2)\bigr).
\]

\item {\bf Doubleton agents.} 
For every doubleton agent $i$, the assignment of the two selected candidate locations to the facilities does not affect her cost. By the triangle inequality,
\[
d(i,c)+d(i,c')
\le
2d(i,v)+d(v,c)+d(v,c')
\le
2d(i,v)+d(v,o_1)+d(v,o_2).
\]
Summing over the doubleton agents, we obtain an upper bound of
\[
2\SC_{N_1\cap N_2}(v)
+n_{12}d(v,o_1)
+n_{12}d(v,o_2).
\]
\end{itemize}
Summing these bounds over all agents, conditional on the selected representative $v$, gives us that
\begin{align*}
\E[\SC(\w)\mid v]
&\le \opt_1^{(1)}+\opt_2^{(2)} \\ 
&\quad+2\SC_{N_1\cap N_2}(v)
+\left(n_{12}+\frac{3n_1+n_2}{2}\right)d(v,o_1) +\left(n_{12}+\frac{n_1+3n_2}{2}\right)d(v,o_2).
\end{align*}
Taking expectation over the random choice of $v$, we obtain
\begin{align*}
\E[\SC(\w)]
&\le \opt_1^{(1)}+\opt_2^{(2)} \\
&\quad +\E\left[ 2\SC_{N_1\cap N_2}(v)
+\left(n_{12}+\frac{3n_1+n_2}{2}\right)d(v,o_1)
+\left(n_{12}+\frac{n_1+3n_2}{2}\right)d(v,o_2)
\right].
\end{align*}
Recall that \Mech{Pair-Percentile} uses $\lambda=(n_1+n_2)/(2n_{12})$. Applying \eqref{line3:eq:representative} with $G=N_1\cap N_2$, $o=o_1$ and $n_G=n_{12}$ yields
\[
\E\left[\SC_{N_1\cap N_2}(v) +(n_{12}+n_1+n_2)d(v,o_1)\right]
\le \left(2+\frac{n_1+n_2}{n_{12}}\right)\opt_1^{(12)}.
\]
Similarly, applying \eqref{line3:eq:representative} with $o=o_2$ yields
\[
\E\left[\SC_{N_1\cap N_2}(v)+(n_{12}+n_1+n_2)d(v,o_2) \right]
\le \left(2+\frac{n_1+n_2}{n_{12}}\right)\opt_2^{(12)}.
\]
To obtain the coefficient $n_{12}+\frac{3n_1+n_2}{2}$ for $d(v,o_1)$ and the coefficient $n_{12}+\frac{n_1+3n_2}{2}$ for $d(v,o_2)$, we multiply the first inequality by
\[
1+\frac{n_1-n_2}{2(n_{12}+n_1+n_2)}
\]
and the second by
\[
1-\frac{n_1-n_2}{2(n_{12}+n_1+n_2)}
\]
and then add them. The resulting coefficient of $\opt_1^{(12)}$ is
\[
\left(2+\frac{n_1+n_2}{n_{12}}\right)
\left(1+\frac{n_1-n_2}{2(n_{12}+n_1+n_2)}\right)
= 2+\frac{2n_1}{n_{12}}-\frac{(n_1-n_2)(n_1+n_2)}{2n_{12}(n_{12}+n_1+n_2)}
\le 2+\frac{2n_1}{n_{12}},
\]
where the inequality follows from $n_1\ge n_2$. Similarly, the coefficient of $\opt_2^{(12)}$ is
\[
\left(2+\frac{n_1+n_2}{n_{12}}\right)
\left(1-\frac{n_1-n_2}{2(n_{12}+n_1+n_2)}\right)
= 2+\frac{n_1+3n_2}{2n_{12}} -\frac{n_1-n_2}{2(n_{12}+n_1+n_2)}
\le 2+\frac{n_1+3n_2}{2n_{12}}.
\]
Hence,
\[
\E[\SC(\w)]
\le
\opt_1^{(1)}+\opt_2^{(2)} 
+\left(2+\frac{2n_1}{n_{12}}\right)\opt_1^{(12)} 
+\left(2+\frac{n_1+3n_2}{2n_{12}}\right)\opt_2^{(12)},
\]
and the proof is complete. 
\end{proof}

We are now finally ready to prove the upper bound. 

\begin{theorem}\label{thm:approx-line-3}
On the line, the approximation ratio of \Mech{Serial-Pair} is at most $3$ in terms of the social cost.
\end{theorem}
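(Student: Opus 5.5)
By the symmetry of \Mech{Serial-Pair} under exchanging the two facilities, we may assume $n_1\ge n_2$, so \cref{line3:lem:components} applies as stated. We may also assume that $N_1$ and $N_2$ are both non-empty; otherwise we are in the single-facility setting handled by \Mech{Spike}, with ratio $2$. The mechanism runs component $k$ with probability $\omega_k/\Omega$. Averaging the three rows of the table in \cref{line3:lem:components} with these probabilities bounds $\E[\SC(\w)]$ by a combination of $\opt_1^{(1)},\opt_2^{(2)},\opt_1^{(12)},\opt_2^{(12)}$. Since these four quantities are non-negative and sum to $\SC(\ov)$, it suffices to show that each averaged coefficient is at most $3$. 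This gives four inequalities of the form $\sum_k \omega_k(\text{coef}_k-3)\le 0$. Each one is a polynomial inequality in $n_1,n_2,n_{12}$ once the $[\cdot]_+$ in $\omega_{12}$ is resolved.

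\textbf{Singleton columns.} For $\opt_1^{(1)}$ the condition reads $\omega_1\bigl(2(n_2+n_{12})/n_1-1\bigr)\le \omega_2+2\omega_{12}$. Here $\omega_1=n_1(n_1+n_{12})$ cancels the denominator, so the left side is $(n_1+n_{12})(2n_2+2n_{12}-n_1)$. Expanding shows that this equals exactly
\[
n_2(n_2+n_{12})+n_{12}(n_1+n_2+2n_{12})-(n_1-n_2)^2 .
\]
So when the bracket defining $\omega_{12}$ is non-negative, the inequality holds with equality; this is where the weights come from. When the bracket is negative, $\omega_{12}=0$ and the left side is $\omega_2$ plus a negative quantity, so the inequality again holds. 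The $\opt_2^{(2)}$ column is handled the same way with the roles of the indices exchanged, using $\omega_2(2(n_1+n_{12})/n_2-1)$. Because $\omega_{12}$ is not symmetric in the sense needed here, the sign of $(n_1-n_2)^2$ enters identically, but the comparison now uses $\omega_1$ on the right. I expect that expanding again gives the same identity, possibly with slack.

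\textbf{Doubleton columns.} For $\opt_1^{(12)}$ the condition is
\[
2\omega_2+\omega_{12}\Bigl(\frac{2n_1}{n_{12}}-1\Bigr)\le 2\omega_1 .
\]
For $\opt_2^{(12)}$ it is
\[
2\omega_1+\omega_{12}\Bigl(\frac{n_1+3n_2}{2n_{12}}-1\Bigr)\le 2\omega_2 .
\]
I would split into two cases. When $\omega_{12}=0$, i.e.\ $n_{12}(n_1+n_2+2n_{12})\le(n_1-n_2)^2$, the first inequality is just $n_2(n_2+n_{12})\le n_1(n_1+n_{12})$, which holds since $n_1\ge n_2$. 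When $\omega_{12}>0$, I would substitute $\omega_{12}$, multiply through by $n_{12}$, and verify non-negativity of the resulting cubic. I would try writing $n_1=n_2+s$ with $s\ge0$, hoping that all coefficients become non-negative after expansion. If that fails, I would use the constraint $(n_1-n_2)^2\le n_{12}(n_1+n_2+2n_{12})$ to bound the negative terms.

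The $\opt_2^{(12)}$ column is the delicate one. Here $\omega_2\le\omega_1$ works against us when $\omega_{12}=0$, so that case must rely on the regime $(n_1-n_2)^2\ge n_{12}(n_1+n_2+2n_{12})$. In this regime $n_{12}$ is small relative to $n_1-n_2$, and the coefficient $2+\frac{n_1+3n_2}{2n_{12}}$ is multiplied by zero. I expect this column, together with a careful recheck of which side of each inequality the $\omega_1$ and $\omega_2$ terms fall on, to be the main obstacle. If a direct expansion does not close it, I would fall back on the slack terms discarded in \cref{line3:lem:components}, namely the $-\frac{(n_1-n_2)}{2(n_{12}+n_1+n_2)}$ corrections, and redo the comparison with the exact coefficients.
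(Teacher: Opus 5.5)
Your overall plan is the paper's: average the three rows of \cref{line3:lem:components} and bound each of the four coefficients by $3$. The singleton columns are handled correctly, and the exact identity you found is the one the paper uses. For $\opt_2^{(2)}$ you do not need a new expansion: since $\omega_2\cdot 2(n_1+n_{12})/n_2=2(n_1+n_{12})(n_2+n_{12})$, the averaged coefficient is literally the same expression as for $\opt_1^{(1)}$, and $\omega_{12}$ is symmetric in $n_1,n_2$.

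The gap is in the doubleton columns, where both of your target inequalities are mis-derived. In the $\opt_1^{(12)}$ column, $\Mech{Serial-Percentile}_2$ has coefficient exactly $3$, so $\omega_2$ cancels. The condition is therefore $\omega_{12}(2n_1/n_{12}-1)\le 2\omega_1$, with no $2\omega_2$ on the left. Symmetrically, in the $\opt_2^{(12)}$ column $\Mech{Serial-Percentile}_1$ has coefficient $3$, so the condition is $\omega_{12}\bigl(\frac{n_1+3n_2}{2n_{12}}-1\bigr)\le 2\omega_2$, with no $2\omega_1$. The inequalities you wrote down are false. At $n_1=n_2=n_{12}=1$, all three weights equal $2$ and both of yours read $6\le 4$. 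For $\opt_2^{(12)}$ with $\omega_{12}=0$ and $n_1>n_2$ (e.g.\ $n_1=5$, $n_2=n_{12}=1$), yours reduces to $\omega_1\le\omega_2$. So the ``main obstacle'' you anticipate is an artifact of this slip. Your fallback cannot remove it either: when $\omega_{12}=0$, \Mech{Pair-Percentile}, which is where the discarded slack lives, has weight zero.

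With the correct conditions, the cases $\omega_{12}=0$ are trivial. For $\opt_1^{(12)}$ with $\omega_{12}>0$, use $2\omega_{12}\le n_{12}(n_1+n_2+2n_{12})$ and $n_2\le n_1$ to get $2n_1\omega_{12}/n_{12}\le n_1(n_1+n_2+2n_{12})\le 2\omega_1$. For $\opt_2^{(12)}$ with $\omega_{12}>0$, multiply by $4n_{12}$ and substitute the exact value of $2\omega_{12}$. With $s=n_1-n_2\ge 0$, the quantity $8n_{12}\omega_2+2\omega_{12}(2n_{12}-n_1-3n_2)$ equals $2n_2\bigl(2(s-n_{12})^2+sn_{12}\bigr)+(s-2n_{12})^2(s+n_{12})$, which is non-negative. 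A plain expansion in $s$ leaves the negative terms $-6n_2n_{12}s$ and $-3n_{12}s^2$. So your hope that all coefficients become non-negative would not pan out, and you need a grouping like this one (or AM--GM).
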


\begin{proof}
Without loss of generality, assume that $n_1\ge n_2$. We combine the bounds in \cref{line3:lem:components} using the probabilities $\omega_1/\Omega$, $\omega_2/\Omega$, and $\omega_{12}/\Omega$. We bound the resulting coefficients separately.

\begin{itemize}
\item {\bf Coefficient of $\opt_1^{(1)}$.}
By \cref{line3:lem:components}, and using $\omega_1=n_1(n_1+n_{12})$, the coefficient of $\opt_1^{(1)}$ is
\[
\frac{1}{\Omega}\left[ \omega_1\left(2+\frac{2(n_2+n_{12})}{n_1}\right) +2\omega_2+\omega_{12} \right]
= 2-\frac{\omega_{12}}{\Omega} +\frac{2(n_1+n_{12})(n_2+n_{12})}{\Omega}.
\]
By the definition of $\omega_{12}$,
\[
2\omega_{12}
= \left[n_{12}(n_1+n_2+2n_{12})-(n_1-n_2)^2\right]_+
\ge n_{12}(n_1+n_2+2n_{12})-(n_1-n_2)^2.
\]
Equivalently,
\[
2(n_1+n_{12})(n_2+n_{12})
\le \omega_1+\omega_2+2\omega_{12}
= \Omega+\omega_{12}.
\]
Hence,
\[
2-\frac{\omega_{12}}{\Omega} +\frac{2(n_1+n_{12})(n_2+n_{12})}{\Omega} \le3.
\]

\item {\bf Coefficient of $\opt_2^{(2)}$.}
Similarly to above, this coefficient is also at most $3$.

\item {\bf Coefficient of $\opt_1^{(12)}$.}
By \cref{line3:lem:components}, the coefficient of $\opt_1^{(12)}$ is
\[
\frac{1}{\Omega}\left[\omega_1+3\omega_2 +\omega_{12}\left(2+\frac{2n_1}{n_{12}}\right)
\right].
\]
If $\omega_{12}=0$, then $\Omega = \omega_1 + \omega_2$, and the coefficient of $\opt_1^{(12)}$ becomes
\[
\frac{\omega_1+3\omega_2}{\omega_1+\omega_2}\le3.
\]
Suppose now that $\omega_{12}>0$. Then, $n_{12}>0$ and, by definition,
\[
2\omega_{12}
= n_{12}(n_1+n_2+2n_{12})-(n_1-n_2)^2.
\]
Since $(n_1-n_2)^2\ge0$ and $n_1\ge n_2$, we have
\[
\frac{2n_1\omega_{12}}{n_{12}}
\le n_1(n_1+n_2+2n_{12})
\le 2n_1(n_1+n_{12})
= 2\omega_1.
\]
Hence,
\[
\omega_1+3\omega_2 +\omega_{12}\left(2+\frac{2n_1}{n_{12}}\right)
\le 3\omega_1+3\omega_2+2\omega_{12}
\le 3\Omega.
\]
Therefore, the coefficient of $\opt_1^{(12)}$ is at most $3$.

\item {\bf Coefficient of $\opt_2^{(12)}$.}
By \cref{line3:lem:components}, the coefficient of $\opt_2^{(12)}$ is
\[
\frac{1}{\Omega}\left[
3\omega_1+\omega_2
+\omega_{12}\left(2+\frac{n_1+3n_2}{2n_{12}}\right)
\right].
\]
If $\omega_{12}=0$, similarly to the case of $\opt_1^{(12)}$, the coefficient of $\opt_2^{(12)}$ is at most $3$. 
Suppose now that $\omega_{12}>0$. It suffices to show that
\begin{align*}
3\omega_1+\omega_2+\omega_{12}\left(2+\frac{n_1+3n_2}{2n_{12}}\right) \le3\Omega
&\Leftrightarrow
2\omega_2+\omega_{12}\left(1-\frac{n_1+3n_2}{2n_{12}}\right) \ge0 \\
&\Leftrightarrow
8n_{12}\omega_2 +2\omega_{12}(2n_{12}-n_1-3n_2) \ge0.
\end{align*}
Substituting $2\omega_{12} = n_{12}(n_1+n_2+2n_{12})-(n_1-n_2)^2$ and $\omega_2=n_2(n_2+n_{12})$, the left-hand side becomes
\[
2n_2\bigl(2(n_1-n_2-n_{12})^2+(n_1-n_2)n_{12}\bigr) +(n_1-n_2-2n_{12})^2(n_1-n_2+n_{12})
\]
and is non-negative since $n_1 \geq n_2$.
Hence, the coefficient of $\opt_2^{(12)}$ is indeed at most $3$.
\end{itemize}
Since every coefficient is at most $3$, we obtain $\E[\SC(\w)]\le 3\SC(\ov)$, and hence the approximation ratio of \Mech{Serial-Pair} is at most $3$.
\end{proof}

For the lower bound, recall that the proof of \cref{thm:lineSClower} shows that no deterministic strategyproof mechanism can be optimal on four instances with optimal social cost $0$ and, at the same time, have approximation ratio close to $7$ on four further instances. Averaging over the deterministic mechanisms in the support of a universally strategyproof mechanism gives the following bound.

\begin{theorem}\label{thm:line-usp-lb}
On the line, the approximation ratio of any universally strategyproof mechanism in terms of the social cost is at least $5/2-\delta$, for any $\delta>0$.
\end{theorem}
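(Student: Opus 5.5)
The plan is to average the deterministic argument behind \cref{thm:lineSClower} over the support of a universally strategyproof mechanism. Let $M$ be a probability distribution over deterministic strategyproof mechanisms $M_1,M_2,\dots$, with probabilities independent of the reports. I reuse the three candidate locations $\{0,1,q\}$, the two $F_1$-singletons $i_1,i_2$ and the two $F_2$-singletons $j_1,j_2$. The parameters $\alpha\in(1/2,1)$ and $\beta\in(q/2,(q+1)/2)$, with $q>2$, are chosen as before, so that both $(4-\alpha)/\alpha$ and $(\beta+q)/(q-\beta)$ are at least $7-\delta'$ for a small $\delta'$. The test instances are $J_1,J_1',J_2,J_2'$ together with the zero-cost instances $(0,0|1,1)$, $(1,1|0,0)$, $(1,1|q,q)$ and $(q,q|1,1)$.

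\textbf{Step 1 (finite ratio forces exactness on zero-cost instances).} If the expected social cost of $M$ on a zero-cost instance is to be finite relative to $0$, then every $M_k$ with positive probability must return the optimal solution there. Hence every mechanism in the support returns $(0,1)$ on $(0,0|1,1)$, $(1,0)$ on $(1,1|0,0)$, $(1,q)$ on $(1,1|q,q)$, and $(q,1)$ on $(q,q|1,1)$.

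\textbf{Step 2 (each deterministic mechanism fails somewhere).} For each such $M_k$ I rerun the deterministic argument, but now without assuming it has ratio below $7$. The part of the proof showing $M_k(I)\in\{(0,1),(1,0)\}$ for $I=(1,1|1,1)$ used only strategyproofness and optimality on $(0,0|1,1)$ and its mirror, so it carries over unchanged. Suppose $M_k(I)=(0,1)$. Then \cref{lem:fixed-phantom}, applied to the induced single-facility rule for $F_2$ with a phantom $\tau$, shows that $M_k$ cannot both place $F_2$ at $0$ in $J_1$ (this would need $\tau=0$) and place $F_2$ at $q$ in $J_2$ (this would need $\tau=q$). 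Symmetrically, if $M_k(I)=(1,0)$, then $M_k$ fails to place $F_1$ at $0$ in $J_1'$ or fails to place $F_1$ at $q$ in $J_2'$. In every case, each $M_k$ ``fails'' on at least one of the four instances $J_1,J_1',J_2,J_2'$.

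\textbf{Step 3 (averaging).} Let $f_J$ be the probability that $M$ fails on instance $J$. Since every support mechanism fails on at least one of the four instances, $f_{J_1}+f_{J_1'}+f_{J_2}+f_{J_2'}\ge 1$, so some $J$ has $f_J\ge 1/4$. On that $J$, success costs at least $\opt$, and failure costs at least $(7-\delta')\opt$ by the cost computations in the proof of \cref{thm:lineSClower}. The expected ratio is therefore at least $(1-f_J)+(7-\delta')f_J\ge 1+(6-\delta')/4$, which tends to $5/2$ as $\delta'\to 0$.

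The main obstacle is Step 2. I must check that the deterministic proof's use of the ratio bound enters only through the forcing inequalities on $J_1,J_1',J_2,J_2'$ (which I now replace by the notion of ``failure'') and through optimality on zero-cost instances. The unanimity hypotheses of \cref{lem:fixed-phantom} must also be supplied purely by Step 1 together with $M_k(I)$. I will verify this carefully, in particular that the case split on $M_k(I)$ is exhaustive for every support mechanism.
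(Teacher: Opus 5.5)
Your proposal is correct and follows essentially the same route as the paper: optimality on the four zero-cost instances is forced with probability $1$, the deterministic argument then shows that every mechanism in the support fails the forcing condition on at least one of $J_1,J_1',J_2,J_2'$, and averaging over the support yields $5/2-\delta$. The only difference is cosmetic. You pigeonhole on failure probabilities ($\sum_J f_J\ge 1$, so some $f_J\ge 1/4$), while the paper sums the expected ratios over the four instances (more than $10-4\delta$), which is the same computation.
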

\begin{proof}
Fix $\delta>0$, and consider the construction in the proof of \cref{thm:lineSClower} for a target approximation ratio of $\rho = 7-4\delta$. There are two agents approving only $F_1$, two agents approving only $F_2$, the candidate locations are $\{0,1,q\}$, and $J_1=(1,1|0,\alpha)$, $J_2=(1,1|\beta,q)$, $J_1'=(0,\alpha|1,1)$, $J_2'=(\beta,q|1,1)$, where $q>2$, $\alpha\in(1/2,1)$ and $\beta\in(q/2,(q+1)/2)$ are chosen so that \eqref{eq:line-lb-forcing} holds.

The instances $(1,1|0,0),(1,1|q,q),(0,0|1,1),(q,q|1,1)$ have optimal social cost $0$, so a universally strategyproof mechanism with finite approximation ratio must, with probability $1$, choose a deterministic mechanism that returns an optimal solution on them. Besides strategyproofness, the proof of \cref{thm:lineSClower} uses only optimality on these four instances and an approximation ratio of at most $\rho$ on $J_1,J_2,J_1',J_2'$, and shows that they cannot hold together. Hence, such a mechanism has approximation ratio greater than $7-4\delta$ on at least one of $J_1,J_2,J_1',J_2'$, and at least $1$ on each of the others. Averaging over the random choice of the deterministic mechanism, the expected approximation ratios on $J_1,J_2,J_1',J_2'$ sum to more than $10-4\delta$, so at least one of them exceeds $5/2-\delta$.
\end{proof}

Combining \cref{thm:line-usp-lb} with \cref{thm:approx-line-3}, we obtain the following interval.

\begin{corollary}\label{cor:line-usp-tight}
On the line, the best possible approximation ratio of universally strategyproof mechanisms in terms of the social cost is between $5/2$ and $3$.
\end{corollary}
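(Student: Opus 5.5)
The corollary follows by putting together two results already proved, so the plan is simply to combine them. For the upper bound, \cref{line3:thm:usp} shows that \Mech{Serial-Pair} is universally strategyproof, and \cref{thm:approx-line-3} shows that its approximation ratio for the social cost on the line is at most $3$. Hence the best possible ratio over universally strategyproof mechanisms is at most $3$.

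For the lower bound, \cref{thm:line-usp-lb} states that every universally strategyproof mechanism has ratio at least $5/2-\delta$ for every $\delta>0$. Taking the supremum over $\delta$, the best possible ratio is at least $5/2$.

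One detail needs care. \Mech{Serial-Pair} is defined under the assumption that $N_1$ and $N_2$ are both non-empty. The complete mechanism must therefore also cover the degenerate case where one of these groups is empty. In that case the problem reduces to locating a single facility with candidate locations, and the \Mech{Spike} mechanism of \citet{FFG} is strategyproof with ratio $2\le 3$. Since $n_1,n_2,n_{12}$ are public, the mechanism can choose between the two cases without using any reported positions. This keeps the combined mechanism a distribution over deterministic strategyproof mechanisms, so it remains universally strategyproof.

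With this case handled, the two bounds together place the optimal ratio in $[5/2,3]$. There is no real obstacle beyond confirming that the case split depends only on public information.
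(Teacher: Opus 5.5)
Your proposal is correct and follows the paper's argument: the paper obtains the corollary by combining \cref{thm:line-usp-lb} with \cref{line3:thm:usp,thm:approx-line-3}. The degenerate case $N_1=\varnothing$ or $N_2=\varnothing$ is handled exactly as in the paper's remark preceding \Mech{Serial-Pair}, by reducing to the single-facility setting where \Mech{Spike} achieves ratio $2$, and your check that this case split uses only public counts (so universal strategyproofness is preserved) is the right extra detail.
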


\subsection{Max Cost}\label{subsec:line-mc}

For the max cost on the line, \citet{KVZ} showed that the optimal approximation ratio of deterministic strategyproof mechanisms is exactly $3$. We show that the same bound remains tight for universally strategyproof mechanisms. It is worth noting that \citet{TangEtAl} established a lower bound of $2$ for the single-facility problem with candidate locations and truthful-in-expectation mechanisms. This lower bound also applies here, since the single-facility problem is a special case of our model when only one facility is approved. 

\begin{theorem}\label{thm:line-usp-mc3}
On the line, the approximation ratio of any universally strategyproof mechanism in terms of the max cost is at least $3-\delta$, for any $\delta>0$.
\end{theorem}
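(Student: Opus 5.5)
The plan is to reduce the randomized bound to a purely deterministic statement that has \emph{no} case split in its conclusion. The obstacle in \cref{thm:line-usp-lb} was that the deterministic argument only showed that a mechanism fails on \emph{one of several} instances, and averaging then divided the ratio. For the max cost I aim for a single instance $J^\ast$ (depending on $\delta$) and a finite family $\mathcal Z$ of instances with optimal max cost $0$, such that the following holds: every deterministic strategyproof mechanism that is optimal on all of $\mathcal Z$ has max cost at least $(3-\delta)\,\MC(\ov\mid J^\ast)$ on $J^\ast$. The theorem then follows directly. A universally strategyproof mechanism with finite ratio must, with probability $1$, draw a deterministic mechanism that is optimal on every instance of $\mathcal Z$, since any error there gives infinite ratio. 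Hence, with probability $1$, it incurs ratio at least $3-\delta$ on $J^\ast$, and so does its expectation.

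For the construction I would reuse the skeleton of \cref{thm:lineSClower}: candidate locations $\{0,1,q\}$, agents $i_1,i_2$ approving only $F_1$, and agents $j_1,j_2$ approving only $F_2$.

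1. \emph{Collision step.} Exactly as in that proof, strategyproofness plus optimality on $(0,0|1,1)$ and $(1,1|0,0)$ shows that at $I=(1,1|1,1)$ neither facility goes to $q$. So $M(I)\in\{(0,1),(1,0)\}$, i.e.\ one facility is pushed off $1$ to $0$.
2. \emph{Phantom step.} Fixing the agents of the displaced facility's complement at $1$, \cref{lem:fixed-phantom} applies to the displaced facility. Unanimity comes from the zero-cost instances $(1,1|0,0)$, $(1,1|q,q)$ and their mirrors, together with $M(I)$ itself. The displaced facility is then placed at $\med(t(\cdot),t(\cdot),\tau)$ with $\tau=0$, because unanimity at $1$ gives $f(1,1)=0$?—no: here $f(1,1)$ equals the location of the displaced facility, which is $0$. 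Hence $\tau=0$ is \emph{forced} by $M(I)$ rather than being a free parameter. Consequently the displaced facility goes to $0$ whenever one of its agents has $t=0$ or $t=1$.
3. \emph{Choice of $J^\ast$.} I would pick $J^\ast$ symmetric under swapping the roles of $F_1$ and $F_2$: one $F_1$-agent and one $F_2$-agent sit at $1$, and the other agent of each type sits near the midpoint $(1+q)/2$, just on the $1$ side. Then whichever facility was displaced is, by step 2, sent to $0$. Its agent near $(1+q)/2$ pays about $(1+q)/2$. The optimum instead puts that facility at $q$, where the agent pays about $(q-1)/2$, while the other facility stays at $1$ at cost $0$. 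Also, $q$ is not a candidate anyone at $1$ needs, so the optimum is roughly $(q-1)/2$.

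Because $J^\ast$ is symmetric, the conclusion holds for both values of $M(I)$, which removes the averaging loss. Tuning $q$ should make the ratio $\frac{(1+q)/2}{(q-1)/2}$, or a variant with a second agent placed to keep the optimum balanced, tend to $3$. My first choice is $q\to 2$ combined with placing the lone agents so that the optimum equals their distance to $q$.

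The main obstacle is step 3: making the forced outcome \emph{and} the optimum line up to give exactly ratio $3$ on one symmetric instance. The deterministic tight bound of $3$ from \citet{KVZ} suggests the numbers work, but I expect the constants to need adjustment. In particular, I may need to move the fixed agents slightly off $1$, keeping $t=1$, so that the optimum cost is the midpoint distance rather than $0$ plus something. I must also confirm via the ``outcome depends only on $t(\cdot)$'' part of \cref{lem:fixed-phantom} that such perturbations do not change the forced outcome. If the symmetric instance cannot reach $3$, the fallback is to break the symmetry with extra agents. These would be chosen so that, through additional zero-cost instances in $\mathcal Z$ and strategyproofness chains, one specific value of $M(I)$ is forced for every deterministic mechanism in the support, and I would then aim the bad instance at that value.
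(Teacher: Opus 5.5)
Your reduction to a single instance $J^\ast$ is the genuine gap, and tuning constants or adding agents will not repair it. For every singleton instance in general position, there is a deterministic strategyproof mechanism that is optimal on all zero-cost instances and also optimal on that instance. So no choice of $J^\ast$ and $\mathcal Z$ forces any ratio above $1$.

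Concretely, consider the serial mechanism that places $F_1$ at $\med(t(i_1),t(i_2),\tau_1)$ and then $F_2$ at $\med(t'(j_1),t'(j_2),\tau_2(w_1))$. Here $t'$ denotes the closest candidate location in $C\setminus\{w_1\}$, and the phantom $\tau_2(w_1)\in C\setminus\{w_1\}$ may depend on $w_1$. Both stages are phantom medians over single-peaked preferences, the $F_1$-agents do not care about $F_2$, and the set available in the second stage does not depend on the reports of the $F_2$-agents. Hence the mechanism is strategyproof. It also returns $(c_1,c_2)$ whenever all $F_1$-agents are at $c_1$ and all $F_2$-agents are at $c_2\neq c_1$. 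For your $J^\ast$ with $q$ close to $2$, the choice $\tau_1=1$, $\tau_2(1)=q$ yields $(1,q)$, which is optimal.

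More generally, on the line the max-cost-best location of a group within an available set lies between the closest available candidates of its two extreme agents. Moreover, some optimal solution gives one group its unconstrained best location and the other group its best remaining location. So one of the two serial orders, with phantoms at these locations, is optimal; for larger groups, use generalized medians with all phantoms at the target. The two serial orders return $(1,0)$ and $(0,1)$ at $(1,1|1,1)$, so your fallback of forcing one value of $M(I)$ through zero-cost instances is also impossible.

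A lower bound above $1$ must instead come from a mechanism-independent distribution over instances on which every admissible deterministic mechanism fails on most instances. This is what the paper does, and it needs no collision argument. Only $F_1$ is approved, by two agents, and $C=\{0,\ldots,m-1\}$; \cref{lem:fixed-phantom} then turns every deterministic mechanism in the support into $\med(t(i),t(j),\tau)$. On $I_\ell$, with the agents at $\ell\mp(\frac12+\varepsilon)$, such a mechanism is optimal only if $\tau=\ell$, and otherwise its ratio is $(3+2\varepsilon)/(1+2\varepsilon)$. Averaging uniformly over the $m-2$ instances loses only $O(1/m)$. The loss you observed in \cref{thm:line-usp-lb} comes from having only four instances with a guaranteed failure on just one of them, not from averaging itself.

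Steps~2 and~3 also fail on their own. In step~2, the rule for the displaced facility, with the complementary agents at $1$, satisfies $f(1,1)=0$. It is therefore not unanimous, so \cref{lem:fixed-phantom} does not apply, and the conclusion you draw is false. The $F_1$-first serial mechanism above returns $(1,0)$ at $(1,1|1,1)$, so $F_2$ is displaced. Now put $i_1,i_2,j_1$ at $1$ and $j_2$ just to the left of $(1+q)/2$, so that $t(j_1)=t(j_2)=1$. The mechanism places $F_2$ at $q$, since $t'(j_2)=q$.

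In step~3, your $J^\ast$ does not keep the complementary agents at $1$, so step~2 would not transfer even if it were correct. The optimum is also miscomputed. Sending the displaced facility to $q$ costs its agent at $1$ exactly $q-1$. Hence $\MC(\ov\mid J^\ast)$ equals $\min\{q-1,(1+q)/2\}$ up to the perturbation. Even granting your claimed outcome, the ratio would be at most about $(1+q)/(2(q-1))<3/2$ for $q>2$.
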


\begin{proof}
Let $m\geq 4$ and $0<\varepsilon<1/2$, and let $C=\{0,1,\ldots,m-1\}$ be the set of candidate locations. Let $i$ and $j$ be two agents that approve only $F_1$. Since no agent approves $F_2$, its location affects neither the max cost nor the incentives of the agents, so every deterministic strategyproof mechanism induces a strategyproof single-facility mechanism for the location of $F_1$. For every $c\in C$, the instance in which both agents are positioned at $c$ has optimal max cost $0$. Hence, a universally strategyproof mechanism with finite approximation ratio must, with probability $1$, choose a deterministic mechanism whose induced single-facility mechanism is unanimous.

For each $\ell\in\{1,\ldots,m-2\}$, let $I_\ell$ be the instance in which agents $i$ and $j$ are positioned at $\ell-\frac12-\varepsilon$ and $\ell+\frac12+\varepsilon$, respectively. Then, $t(i)=\ell-1$, $t(j)=\ell+1$, the optimal location of $F_1$ is $\ell$, and the optimal max cost is $1/2+\varepsilon$. If $F_1$ is placed at $\ell-1$ or $\ell+1$ instead, the farther agent is at distance $3/2+\varepsilon$, so the approximation ratio is
\[
\rho_\varepsilon=\frac{3+2\varepsilon}{1+2\varepsilon}.
\]

Now consider a deterministic mechanism whose induced single-facility mechanism is unanimous. By \cref{lem:fixed-phantom}, there is a fixed $\tau\in C$ such that $F_1$ is placed at $\med(\ell-1,\ell+1,\tau)$ in $I_\ell$. Since $\med(\ell-1,\ell+1,\tau)=\ell$ if and only if $\tau=\ell$, the mechanism returns an optimal solution on at most one of the instances $I_\ell$, and places $F_1$ at $\ell-1$ or $\ell+1$ on all the others. Consequently, its average approximation ratio over the $m-2$ instances is at least
\[
\frac{1+(m-3)\rho_\varepsilon}{m-2}.
\]
Since this holds with probability $1$ for the deterministic mechanism chosen by the universally strategyproof mechanism, averaging over this random choice shows that some instance $I_\ell$ has expected approximation ratio at least the same quantity. Choosing $\varepsilon$ sufficiently small and $m$ sufficiently large, this is at least $3-\delta$.
\end{proof}

\section{General Metric Spaces}\label{sec:general}
We now turn to general metric spaces. We first establish a set of structural properties that will be useful for our lower bounds, and introduce a fixed-agent mechanism that yields both deterministic upper bounds. We then consider the two objectives in turn. For the social cost, the optimal deterministic approximation ratio is $2n-1$, while universally strategyproof mechanisms achieve an asymptotically tight ratio of $4$. For the max cost, the optimal deterministic approximation ratio is $5$, while the optimal ratio for universally strategyproof mechanisms lies between $15/4$ and $19/5$.

\subsection{Structural Properties and a Basic Mechanism}\label{subsec:metric-tools}
We establish two structural properties: arbitrary metric spaces can realize every strict ranking of the candidate locations, and, once all rankings are realized, the location of a facility approved only by agents with singleton preferences is determined by a dictator over its range. We then introduce the fixed-agent mechanism used for both deterministic upper bounds.

\begin{lemma}\label{lem:samerank}
Consider a deterministic strategyproof mechanism, a facility $F_j$, and an agent $i\in N_j\setminus N_{3-j}$, while fixing the positions of all other agents. If two positions of $i$ induce the same strict ranking over the candidate locations, then $F_j$ must be placed at the same location under both reports.
\end{lemma}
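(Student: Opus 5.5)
The plan is a direct strategyproofness argument by contradiction. Fix the reports $\xv_{-i}$ of all other agents, and let $x_i$ and $x_i'$ be two positions of agent $i$ that induce the same strict ranking $\succ$ over $C$. Since $i\in N_j\setminus N_{3-j}$, her cost under any solution $\cv$ is exactly $d(i,c_j)$. So her preferences over outcomes depend only on where $F_j$ is placed, and they coincide with the ranking $\succ$ induced by her position. Let $a$ be the location of $F_j$ when $i$ reports $x_i$, and $b$ its location when she reports $x_i'$, both with $\xv_{-i}$ fixed. Suppose, towards a contradiction, that $a\neq b$.

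Because $\succ$ is a strict ranking, exactly one of $a\succ b$ or $b\succ a$ holds, and both $x_i$ and $x_i'$ agree on which. If $b\succ a$, then $d(x_i,b)<d(x_i,a)$. Consider the instance in which $i$'s true position is $x_i$. She can report $x_i'$, move $F_j$ to $b$, and strictly lower her cost, which violates strategyproofness. If instead $a\succ b$, then $d(x_i',a)<d(x_i',b)$. In the instance where her true position is $x_i'$, she gains by reporting $x_i$. Either way we reach a contradiction, so $a=b$.

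There is no real obstacle here; the argument is elementary. The two points to state carefully are these. First, strictness of the ranking is what excludes ties, so each of the two deviations is strictly beneficial. Second, the location of $F_{3-j}$, which may change between the two reports, does not enter $i$'s cost because $p_{i,3-j}=0$. The statement only asserts that the location of $F_j$ is unchanged, so the possible change in $F_{3-j}$ is harmless.
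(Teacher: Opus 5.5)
Your proof is correct and follows the same argument as the paper. Since $i$'s cost depends only on the location of $F_j$, and both positions induce the same strict ranking, differing outcomes would let $i$ profitably deviate from whichever position yields her less preferred location; your explicit case split on $a\succ b$ versus $b\succ a$ simply spells out the paper's ``if her true position leads to the less preferred outcome'' step.
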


\begin{proof}
Suppose that two positions of $i$ induce the same strict ranking over the candidate locations, but lead to different locations of $F_j$. Since $i\in N_j\setminus N_{3-j}$, $i$'s cost depends only on the location of $F_j$. Thus, she strictly prefers one of the two outcomes. If her true position leads to the less preferred outcome, $i$ can report the other position and lead to a more preferred outcome, contradicting strategyproofness. Hence, $F_j$ must be placed at the same location under both reports.
\end{proof}

\begin{lemma}\label{lem:rankings}
Let $(X,d)$ be a finite metric space, and $C\subseteq X$ with $|C|\ge2$. Then, $(X,d)$ can be extended to a metric space $(X',d')$ with $X\subseteq X'$, such that $d'$ coincides with $d$ on $X$ and every strict ranking of $C$ is induced by some point of $X'$.
\end{lemma}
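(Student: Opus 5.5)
We add one new point $y_\sigma$ for each strict ranking $\sigma$ of $C$, and prescribe only the distances from $y_\sigma$ to the candidate locations. The remaining distances are then supplied by shortest paths.

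\textbf{Construction.} Let $D=\max_{x,x'\in X} d(x,x')$; this is finite and positive since $|C|\ge 2$ and $X$ is finite. Write $C=\{c_1,\dots,c_m\}$. For a ranking $\sigma$ listing $C$ as $c_{\sigma(1)}\succ\cdots\succ c_{\sigma(m)}$, introduce a new point $y_\sigma$ with
\[
d'(y_\sigma,c_{\sigma(k)})=2D+k\varepsilon,
\]
where $\varepsilon>0$ is small, say $\varepsilon\le D/m$. These values strictly increase with the rank, so they induce exactly the ranking $\sigma$.

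\textbf{Defining $d'$ everywhere.} Form the weighted graph on $X\cup\{y_\sigma\}_\sigma$. Its edges are the original pairs in $X$ with weight $d$, together with the edges $y_\sigma c$ for $c\in C$ with the weights above. Let $d'$ be the shortest-path distance in this graph. The graph is connected: $X$ is internally complete, and each $y_\sigma$ is joined to $C\ne\varnothing$. Every weight is positive, so $d'$ is a metric on the finite set $X'$.

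\textbf{Checking that the prescribed distances survive.} We must show that shortest paths do not shorten anything that was prescribed.

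For $x,x'\in X$, any path leaving $X$ must pass through some $y_\sigma$. That detour uses two edges of length at least $2D$ each, so it costs at least $4D>d(x,x')$. Hence $d'=d$ on $X$.

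For a prescribed edge $y_\sigma c$, an alternative path either goes $y_\sigma\to c'\to\cdots\to c$ inside $X$, or passes through another new point. In the first case its length is at least
\[
2D+\varepsilon+d(c',c)\ \ge\ 2D+\varepsilon+\min_{c\ne c'} d(c,c').
\]
This must dominate $2D+m\varepsilon$. So we need $\varepsilon$ chosen with $(m-1)\varepsilon\le \min_{c\ne c'}d(c,c')$, which is possible because $C$ is finite with distinct points. In the second case the path has length at least $4D$, which exceeds $2D+m\varepsilon$ once $m\varepsilon\le D$ (it is at most $3D$).

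Therefore
\[
d'(y_\sigma,c_{\sigma(k)})=2D+k\varepsilon
\]
for every $k$, and $y_\sigma$ induces $\sigma$.

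\textbf{Main obstacle.} The key step is this triangle-inequality check: prescribed distances must not be undercut by detours. It is handled by choosing
\[
\varepsilon=\min\Bigl\{\tfrac{D}{m},\ \tfrac{\min_{c\ne c'}d(c,c')}{m-1}\Bigr\}.
\]
If $d$ were only a pseudometric with coinciding candidates, strict rankings would be impossible, but the candidates are distinct points of $X$.
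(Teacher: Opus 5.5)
Your proof is correct and uses the same construction as the paper: one new point per strict ranking, joined to each candidate by an edge of length (a large constant) plus (a small step times the rank), with the shortest-path metric and the same two checks. The differences are cosmetic. You use base length $2D$ and bound detours through other new points separately, whereas the paper first shows $d'=d$ on $X$ and then uses it to handle every detour through some $c'\neq c$ at once.
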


\begin{proof}
Let $D$ be larger than the diameter of $(X,d)$, and
\[
0<\eta<\frac1{|C|}\min_{c\neq c'}d(c,c').
\]
For every strict ranking $\pi$ of $C$, we add a point $v_\pi$ and connect it to each $c\in C$ by an edge of length $D+\eta\cdot\operatorname{rank}_\pi(c)$, where $\operatorname{rank}_\pi(c)\in[|C|]$ is the rank of $c$ in $\pi$. Let $d'$ be the shortest-path metric of the resulting graph, in which any two points of $X$ are joined by an edge of length equal to their distance. Any path between two points of $X$ that passes through a new point has length more than $2D$, so $d'$ coincides with $d$ on $X$. Moreover, for any two distinct candidate locations $c$ and $c'$, we have $\eta|\operatorname{rank}_\pi(c)-\operatorname{rank}_\pi(c')|<d(c,c')$, so the edge between $v_\pi$ and $c$ is a shortest path. Hence, $d'(v_\pi,c)=D+\eta\cdot\operatorname{rank}_\pi(c)$ for every $c\in C$, and $v_\pi$ induces exactly the ranking $\pi$.
\end{proof}

We use the range form of Gibbard's theorem~\citep{Gibbard}: a deterministic strategyproof social choice function on the unrestricted domain of strict rankings whose range contains at least three alternatives has a \emph{dictator on its range}, that is, a fixed agent whose most-preferred alternative among those in the range is selected at every profile.

\begin{lemma}\label{lem:dictator}
Consider a metric space in which every strict ranking of $C$ is induced, a deterministic strategyproof mechanism, and $j\in[2]$ such that $G=N_j\setminus N_{3-j}$ is non-empty. 
Fix the positions of all agents outside $G$, and let $\mathcal R$ be the set of candidate locations at which $F_j$ is placed over profiles in which all agents in $G$ are positioned so as to induce strict rankings of $C$.
If $|\mathcal R|\ge3$, then there exists an agent $i^*\in G$ such that, at every such profile, $F_j$ is placed at the location in $\mathcal R$ closest to $i^*$.
\end{lemma}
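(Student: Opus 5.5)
The plan is to reduce the statement to the range form of Gibbard's theorem by building a social choice function on the rankings of the agents in $G$. Fix the positions of all agents outside $G$. By \cref{lem:samerank}, applied one agent at a time, the location of $F_j$ depends only on the strict rankings over $C$ induced by the positions of the agents in $G$, and not on the positions themselves. Starting from two profiles whose induced ranking tuples coincide, we change the positions of the agents in $G$ one by one. Each change preserves that agent's ranking and leaves everyone else fixed, so by \cref{lem:samerank} the location of $F_j$ stays the same at every step. We may therefore define $g(\pi_1,\dots,\pi_{|G|})$ as the location of $F_j$ at any profile where agent $k\in G$ induces $\pi_k$. Since every strict ranking of $C$ is induced by some point, $g$ is defined on the full unrestricted domain $\mathcal L(C)^{G}$, and its range is exactly $\mathcal R$.

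Next I would check that $g$ is strategyproof in the ordinal sense. Take an agent $k\in G$ with true ranking $\pi_k$, realized by a position $x$. Since $k\in N_j\setminus N_{3-j}$, her cost is $d(x,\cdot)$ evaluated at the location of $F_j$. Reporting a ranking $\pi_k'$ amounts to reporting a position that induces $\pi_k'$. Because the mechanism is strategyproof, $g(\pi_k,\pi_{-k})$ is weakly closer to $x$ than $g(\pi_k',\pi_{-k})$. The induced ranking is strict, so weakly closer means weakly preferred under $\pi_k$, and equal distance forces equality of the outcomes. Hence no agent can obtain an outcome ranked strictly higher in $\pi_k$ by misreporting, so $g$ is strategyproof on strict rankings.

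With $|\mathcal R|\ge 3$, Gibbard's theorem in the range form gives an agent $i^*\in G$ such that $g$ always selects $i^*$'s top alternative within $\mathcal R$. Translating back, at every profile where the agents of $G$ induce strict rankings, $F_j$ is placed at the location of $\mathcal R$ ranked highest by $i^*$, which is the location of $\mathcal R$ closest to $i^*$, with no ties since the ranking is strict.

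The main obstacle is the well-definedness step and the correct domain. Agents outside $G$, including doubleton agents, are fixed, so they cause no interference. Agents in $G$ care only about $F_j$, which is exactly why \cref{lem:samerank} applies to each of them. We should also take care that the range is computed over the same strict-ranking profiles. Finally, if $|G|=1$, the theorem trivially yields that agent as dictator; this can also be seen directly, since she obtains her favourite element of $\mathcal R$ by strategyproofness.
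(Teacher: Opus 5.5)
Your proposal is correct and follows the paper's proof. You apply \cref{lem:samerank} one agent at a time to get a well-defined social choice function on the unrestricted domain of strict rankings with range $\mathcal R$, derive its strategyproofness from the fact that agents in $G$ care only about $F_j$, and conclude with the range form of Gibbard's theorem, making explicit a few steps the paper leaves brief (the step-by-step well-definedness and the absence of distance ties under strict rankings).
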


\begin{proof}
By \cref{lem:samerank}, applied to one agent at a time, the location of $F_j$ depends only on the rankings induced by the positions of the agents of $G$. Since every strict ranking is induced by some point, this defines a social choice function on the unrestricted domain of strict rankings of $C$, whose range is $\mathcal R$. This function is strategyproof: every agent of $G$ approves only $F_j$, so a ranking that yields a more preferred location corresponds to a position misreport that strictly decreases her cost. The range form of Gibbard's theorem completes the proof.
\end{proof}

\noindent 
{\bf A basic deterministic mechanism.}
We now introduce a simple mechanism, which selects fixed agents as dictators according to their preference types.
\begin{itemize}
\item If there is at least one agent with doubleton preference, the facilities are placed at the two closest candidate locations to a fixed agent $i$ with doubleton preference.

\item Otherwise, if both facilities are approved by at least one agent, $F_1$ is placed at the closest candidate location to a fixed agent $i$ that approves $F_1$, and $F_2$ is placed at the closest available candidate location to a fixed agent $j$ that approves $F_2$.

\item Otherwise, only one facility is approved. This facility is placed at the closest candidate location to a fixed agent $i$, while the other facility is placed at $i$'s second-closest candidate location.
\end{itemize}
We refer to this mechanism as \Mech{Fixed-Dictator}; see \cref{mech:fixed-dictator}.

\begin{algorithm}[t]
\caption{\Mech{Fixed-Dictator}}
\label{mech:fixed-dictator}
\KwIn{Reported positions of agents}
\KwOut{Facility locations $\w=(w_1,w_2)$}
\If{$N_1\cap N_2\neq\varnothing$}{
    $i\leftarrow$ fixed agent in $N_1\cap N_2$\;
    $w_1\leftarrow t(i)$; $w_2\leftarrow s(i)$\;
}
\ElseIf{$N_1\neq\varnothing$ and $N_2\neq\varnothing$}{
    $i\leftarrow$ fixed agent in $N_1$; $j\leftarrow$ fixed agent in $N_2$\;
    $w_1\leftarrow t(i)$\;
    \eIf{$t(j)$ is available}{
        $w_2\leftarrow t(j)$\;
    }{
        $w_2\leftarrow s(j)$\;
    }
}
\Else{
    $k\leftarrow$ the index in $[2]$ such that $N_k\neq\varnothing$\;
    $i\leftarrow$ fixed agent in $N_k$\;
    $w_k\leftarrow t(i)$; $w_{3-k}\leftarrow s(i)$\;
}
\end{algorithm}

\begin{theorem}\label{thm:fixed-dictator-sp}
\Mech{Fixed-Dictator} is strategyproof.
\end{theorem}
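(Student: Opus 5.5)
The plan is a case analysis that follows the three branches of \Mech{Fixed-Dictator}. Which branch runs depends only on the publicly known preference profile $\pv$, not on the reported positions, so no agent can switch branches by misreporting. It therefore suffices to fix a branch and show that no agent can strictly lower her cost within it. In every branch, an agent other than the designated dictator(s) has no effect on the outcome, so only the dictators need to be checked.

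\textbf{First branch (doubleton dictator).} The fixed doubleton agent $i$ has cost $d(i,w_1)+d(i,w_2)=d(i,t(i))+d(i,s(i))$, because she approves both facilities and her cost does not depend on how the two locations are labelled. Every solution uses two distinct candidate locations. By definition of $t(i)$ and $s(i)$, the sum of her distances to any two distinct candidates is at least $d(i,t(i))+d(i,s(i))$. Hence truthful reporting minimizes her cost over all feasible solutions, and in particular over all outcomes she could induce by misreporting.

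\textbf{Third branch (only one facility approved).} Say only $F_k$ is approved. The fixed agent $i\in N_k$ cares only about $F_k$, which is placed at $t(i)$, her closest candidate overall. She therefore cannot do better.

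\textbf{Second branch (two singleton dictators).} This is the step needing the most care. Since no agent is doubleton, $i$ approves only $F_1$ and $j$ approves only $F_2$.
\begin{itemize}
\item Agent $i$ obtains $w_1=t(i)$, which is optimal for her. The report of $j$ cannot affect $w_1$, so $j$ cannot make $i$ worse off; and $i$ does not care about $w_2$.
\item For agent $j$, the location $w_1$ is fixed independently of her report, because it depends only on $i$'s report and $i\neq j$. Her report therefore only selects $w_2$ from $C\setminus\{w_1\}$. The mechanism picks $t(j)$ if it is available and $s(j)$ otherwise. In both cases this is her closest candidate in $C\setminus\{w_1\}$: if $t(j)=w_1$, then $s(j)$ is the closest remaining candidate. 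Since every outcome $j$ could induce lies in $C\setminus\{w_1\}$, misreporting cannot help her.
\end{itemize}

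The main subtlety is tie-breaking. Because ties are broken by a fixed order, ``closest'' and ``second-closest'' are still cost-minimizing under that order, so the optimality arguments above go through unchanged.
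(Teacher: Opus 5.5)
Your proposal is correct and follows essentially the same argument as the paper. The branch is fixed by the public preferences and only the fixed agents matter; the doubleton and single-facility dictators receive their optimal outcomes; and in the singleton branch $w_1$ is independent of $j$'s report while $w_2$ is $j$'s closest location in $C\setminus\{w_1\}$. Your explicit check that $s(j)$ is this closest location when $t(j)=w_1$ is a detail the paper leaves implicit.
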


\begin{proof}
Since the approval preferences of the agents are known, no agent can affect which of the three cases of the mechanism applies. Moreover, only the fixed agents can affect the outcome, so it suffices to consider their incentives. In the first case, the fixed agent $i$ approves both facilities, and the mechanism places them at her two closest candidate locations. Hence, her individual cost is minimized. In the third case, only one facility is approved, and this facility is placed at the closest candidate location to the fixed agent $i$; the location of the other facility does not affect her cost. Thus, agent $i$ cannot benefit from misreporting in either case.

Consider now the second case, where $N_1\cap N_2=\varnothing$ and both $N_1$ and $N_2$ are non-empty. Agent $i$ approves only $F_1$, which is placed at her closest candidate location. Although her reported position may affect the location of $F_2$, this does not affect her cost. Moreover, the location of $F_1$ is independent of the position of agent $j$. Given the location of $F_1$, agent $j$ approves only $F_2$, which is placed at her closest available candidate location. Hence, neither $i$ nor $j$ can benefit from misreporting, and the mechanism is strategyproof.
\end{proof}

\subsection{Social Cost}\label{subsec:metric-sc}

\subsubsection{Deterministic Mechanisms}

\begin{theorem}\label{thm:gscupper}
On general metric spaces, the approximation ratio of \Mech{Fixed-Dictator} in terms of the social cost is at most $2n-1$.
\end{theorem}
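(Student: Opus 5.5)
We go through the three cases of \Mech{Fixed-Dictator} separately. In each case we charge every agent's cost to the optimal cost, using the triangle inequality through the fixed dictator and the optimal locations $o_1,o_2$. The basic fact we rely on is that each facility $F_k$ is placed at distance $O(d(i,o_k))$ from the relevant dictator $i$. Since the dictator's own distance to $o_k$ is part of $\SC(\ov)$, every other agent's cost is then at most her own optimal cost plus a multiple of the dictator's cost.

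\textbf{Case 1 (doubleton dictator $i$).} Since $\{t(i),s(i)\}$ are $i$'s two closest distinct candidates and $o_1\neq o_2$, we have
\[
d(i,w_1)+d(i,w_2)\le d(i,o_1)+d(i,o_2)=\cost_i(\ov).
\]
For another agent $k$ approving $F_\ell$, we use $d(k,w_\ell)\le d(k,o_\ell)+d(o_\ell,i)+d(i,w_\ell)$. The pair $\{w_1,w_2\}$ may be assigned to the labels unfavourably, so we bound $d(i,w_\ell)\le d(i,w_1)+d(i,w_2)\le\cost_i(\ov)$. A doubleton agent $k$ then pays at most $\cost_k(\ov)+d(i,o_1)+d(i,o_2)+(d(i,w_1)+d(i,w_2))\le\cost_k(\ov)+2\cost_i(\ov)$. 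A singleton agent $k$ pays at most $\cost_k(\ov)+d(i,o_\ell)+\cost_i(\ov)\le\cost_k(\ov)+2\cost_i(\ov)$. Summing over all agents gives
\[
\SC(\w)\le\SC(\ov)+2(n-1)\cost_i(\ov)\le(2n-1)\SC(\ov).
\]

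\textbf{Case 3 (one approved facility $F_k$).} Here $d(i,w_k)\le d(i,o_k)$. For any other agent $a$, $d(a,w_k)\le d(a,o_k)+2d(i,o_k)$. Summing gives $\SC(\w)\le\SC(\ov)+2(n-1)d(i,o_k)\le(2n-1)\SC(\ov)$.

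\textbf{Case 2 (singleton dictators $i\in N_1$, $j\in N_2$).} $F_1$ is handled as in Case 3: $d(i,w_1)\le d(i,o_1)$, so $d(a,w_1)\le d(a,o_1)+2d(i,o_1)$ for $a\in N_1$. For $F_2$ there are two sub-cases.
\begin{itemize}
\item If $w_2=t(j)$, then $d(j,w_2)\le d(j,o_2)$ and the same argument applies.
\item If $t(j)=w_1$, then $w_2=s(j)$. Since $o_1\ne o_2$, at least one of them differs from $t(j)$, and $s(j)$ is no farther from $j$ than that one. This gives $d(j,w_2)\le\max\{d(j,o_1),d(j,o_2)\}$, which is not good enough when the bound is $d(j,o_1)$.
\end{itemize}
In the bad sub-case we instead use $d(j,s(j))\le d(j,o_2)$ when $o_2\neq t(j)$. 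When $o_2=t(j)=w_1$, we bound $d(j,s(j))\le d(j,o_1)\le d(j,o_2)+d(o_2,o_1)$, and then
\[
d(o_1,o_2)=d(o_1,w_1)\le d(o_1,i)+d(i,w_1)\le 2d(i,o_1).
\]
So in all cases $d(j,w_2)\le d(j,o_2)+2d(i,o_1)$. Summing, each agent $a\ne i$ of $N_1$ contributes $\cost_a(\ov)+2d(i,o_1)$, and each agent $b\ne j$ of $N_2$ contributes $\cost_b(\ov)+2d(j,w_2)+\ldots$ after expanding $d(b,w_2)\le d(b,o_2)+d(o_2,j)+d(j,w_2)$. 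The extra terms come to roughly $2(n_1-1)d(i,o_1)$ from $N_1$, and $(n_2-1)(2d(j,o_2)+2d(i,o_1))$ plus $2d(i,o_1)$ from $N_2$. The total coefficient on $d(i,o_1)$ is about $2(n_1-1)+2n_2\le 2n-2$, and the coefficient on $d(j,o_2)$ is at most $2n_2-2\le 2n-2$. Since $d(i,o_1)$ and $d(j,o_2)$ are separate terms of $\SC(\ov)$, their combined contribution is at most $(2n-2)\SC(\ov)$, which gives the bound $2n-1$.

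\textbf{Main obstacle.} The hardest step is the collision sub-case of Case 2. There the displacement of $F_2$ has to be charged to agent $i$'s cost rather than $j$'s, and the accounting of coefficients must still come out to at most $2n-2$ on each term. If the count overshoots, I would refine the bound via $d(o_2,w_2)\le d(o_2,o_1)$, since $o_1$ is available whenever $o_2=w_1$.
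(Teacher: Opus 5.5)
Your proof is correct and uses the same triangle-inequality charging argument as the paper in all three cases. The only cosmetic difference is in Case 2, where you merge the paper's subcases $w_1\neq o_2$ and $w_1=o_2$ into the single bound $d(j,w_2)\le d(j,o_2)+2d(i,o_1)$ (the paper later uses this same inequality for the serial components), which gives exactly the paper's final accounting $\SC(\ov)+(2n-2)\,d(i,o_1)+(2n_2-2)\,d(j,o_2)\le(2n-1)\SC(\ov)$; these coefficients are exact, so your hedge about the count overshooting is unnecessary.
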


\begin{proof}
Let $\ov=(o_1,o_2)$ be an optimal solution. We consider the three cases of the mechanism separately. In all cases, we use the following simple consequence of the triangle inequality: for any agent $a$ and any $t\in[2]$,
\begin{equation}\label{eq:shift}
d(a,w_t)\leq d(a,o_t)+d(w_t,o_t).
\end{equation}

\medskip
\noindent
\emph{Case 1: $N_1\cap N_2\neq\varnothing$.}
Since $\{w_1,w_2\}=\{t(i),s(i)\}$ minimizes the individual cost of $i$ among all pairs of different candidate locations, we have $\cost_i(\w)\leq\cost_i(\ov)$. By the triangle inequality,
\[
d(w_1,o_1)+d(w_2,o_2)
\leq \big(d(i,w_1)+d(i,o_1)\big)+\big(d(i,w_2)+d(i,o_2)\big)
=\cost_i(\w)+\cost_i(\ov)
\leq 2\cdot\cost_i(\ov).
\]
By \eqref{eq:shift}, we have
\begin{itemize}
    \item $\cost_a(\w) \leq \cost_a(\ov) + d(w_1,o_1)$ for every $a \in N_1 \setminus N_2$;
    \item $\cost_a(\w) \leq \cost_a(\ov) + d(w_2,o_2)$ for every $a \in N_2 \setminus N_1$;
    \item $\cost_a(\w) \leq \cost_a(\ov) + d(w_1,o_1) + d(w_2,o_2)$ for every $a \in N_1 \cap N_2$.
\end{itemize}
Combining all the above with the facts that $n_1+n_{12}-1\leq n-1$, $n_2+n_{12}-1\leq n-1$, and $\cost_i(\ov)\leq\SC(\ov)$, we obtain
\begin{align*}
\SC(\w) 
&= \sum_{a \in N_1 \setminus N_2}\cost_a(\w) + \sum_{a \in N_2 \setminus N_1}\cost_a(\w) + \sum_{a \in (N_1 \cap N_2)\setminus\{i\}}\cost_a(\w) + \cost_i(\w) \\
&\leq \SC(\ov)+(n_1+n_{12}-1)\cdot d(w_1,o_1)+(n_2+n_{12}-1)\cdot d(w_2,o_2)\\
&\leq \SC(\ov)+(n-1)\cdot\big(d(w_1,o_1)+d(w_2,o_2)\big)\\
&\leq \SC(\ov)+2(n-1)\cdot\cost_i(\ov)\\
&\leq (2n-1)\cdot\SC(\ov).
\end{align*}

\medskip
\noindent
\emph{Case 2: $N_1\cap N_2=\varnothing$, and $N_1$ and $N_2$ are both non-empty.}
Since $w_1=t(i)$, we have $d(i,w_1)\leq d(i,o_1)$, and thus, by the triangle inequality,
\[
d(w_1,o_1)\leq d(i,w_1)+d(i,o_1)\leq 2\cdot d(i,o_1).
\]
We distinguish between two subcases depending on whether $w_1=o_2$.
\begin{itemize}
\item $w_1\neq o_2$. Then, $o_2$ is available when $F_2$ is placed, implying that $d(j,w_2)\leq d(j,o_2)$ and, similarly to above, $d(w_2,o_2)\leq 2\cdot d(j,o_2)$. We have that $\cost_i(\w) \leq \cost_i(\ov)$ and $\cost_j(\w) \leq \cost_j(\ov)$. By \eqref{eq:shift}, we also have
\begin{itemize}
    \item For every $a \in N_1\setminus\{i\}$: $\cost_a(\w) \leq d(a,o_1) + d(w_1,o_1) \leq \cost_a(\ov) + 2\cdot d(i,o_1)$. 
    \item For every $a \in N_2\setminus\{j\}$: $\cost_a(\w) \leq d(a,o_2) + d(w_2,o_2) \leq \cost_a(\ov) + 2\cdot d(j,o_2)$. 
\end{itemize}
Since $d(i,o_1)+d(j,o_2)\leq\SC(\ov)$, we obtain
\begin{align*}
    \SC(\w) 
    &= \cost_i(\w) + \sum_{a \in N_1 \setminus\{i\}}\cost_a(\w) + \cost_j(\w) + \sum_{a \in N_2 \setminus\{j\}}\cost_a(\w) \\
    &\leq \SC(\ov)+2(n_1-1)\cdot d(i,o_1)+2(n_2-1)\cdot d(j,o_2) \\
    &\leq (2n-1)\cdot\SC(\ov).
\end{align*}

\item 
$w_1=o_2$. Then, since $w_1=t(i)$, agent $i$ is (weakly) closer to $o_2$ than to $o_1$, and thus, by the triangle inequality,
\[
d(o_1,o_2)\leq d(i,o_1)+d(i,o_2)\leq 2\cdot \cost_i(\ov).
\]
Agent $i$ has individual cost $\cost_i(\w) \leq \cost_i(\ov)$, and, by the triangle inequality, every other agent $a\in N_1$ has individual cost $\cost_a(\w) = d(a,o_2)\leq d(a,o_1)+d(o_1,o_2) \leq d(a, o_1)+2\cdot \cost_i(\ov)$. Hence,
\[
\sum_{a\in N_1} \cost_a(\w) \leq \sum_{a\in N_1} d(a,o_1)+2(n_1-1)\cdot \cost_i(\ov).
\]
Since $o_1\neq o_2=w_1$, the candidate location $o_1$ is available when $F_2$ is placed, and thus
\[
\cost_j(\w) = d(j,w_2)\leq d(j,o_1)\leq d(j,o_2)+d(o_1,o_2) \leq \cost_j(\ov) + 2\cdot \cost_i(\ov).
\]
For every other agent $a\in N_2$, by the triangle inequality and the previous inequality,
\[
\cost_a(\w) = d(a,w_2)\leq d(a,o_2)+d(j,o_2)+d(j,w_2)\leq \cost_a(\ov)+2\cdot \cost_j(\ov)+ 2\cdot \cost_i(\ov).
\]
Hence,
\[
\sum_{a\in N_2} \cost_a(\w)\leq \sum_{a\in N_2} \cost_a(\ov)+2n_2\cdot \cost_i(\ov)+2(n_2-1)\cdot \cost_j(\ov).
\]
By combining these bounds with $n_1+n_2=n$ and $\cost_i(\ov) + \cost_j(\ov) \leq\SC(\ov)$, we obtain
\begin{align*}
\SC(\w)
&\leq \SC(\ov)+2(n-1)\cdot \cost_i(\ov) +2(n_2-1)\cdot \cost_j(\ov) \\
&\leq \SC(\ov)+2(n-1)\cdot\big(\cost_i(\ov) + \cost_j(\ov)\big)\\
&\leq (2n-1)\cdot\SC(\ov).
\end{align*}
\end{itemize}

\medskip
\noindent
\emph{Case 3: all agents approve only $F_k$ for some $k\in[2]$.}
Then, only the location of $F_k$ affects the social cost. Since $w_k=t(i)$, we have $d(w_k,o_k)\leq d(i,w_k)+d(i,o_k)\leq 2\cdot d(i,o_k)$, and agent $i$ has individual cost at most $d(i,o_k)$. By \eqref{eq:shift} for every other agent, we obtain
\[
\SC(\w)\leq \SC(\ov)+2(n-1)\cdot d(i,o_k)\leq(2n-1)\cdot\SC(\ov).
\]

Therefore, the approximation ratio is at most $2n-1$ in every case.
\end{proof}

For the lower bound, we use \cref{lem:rankings,lem:dictator} to show that some agent acts as a dictator, and then position her so that her most preferred location is far from all other agents.

\begin{theorem}\label{thm:gsclower}
On general metric spaces, the approximation ratio of any deterministic strategyproof mechanism in terms of the social cost is at least $2n-1-\delta$, for any $\delta>0$.
\end{theorem}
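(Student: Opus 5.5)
Our plan is to build, for each $n$, a metric space in which \cref{lem:dictator} forces a dictator. The dictator is then positioned so that her favourite candidate is slightly better for her than the candidate at which all other agents sit. We use a singleton instance in which every agent approves only $F_1$, so $F_2$ is irrelevant to all costs and incentives. This reduces the problem to a single-facility strategyproof mechanism for $F_1$.

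\textbf{Setup.} Take candidate locations $c_0, c_1, c_2$ (and possibly more), pairwise at distance roughly $2$ in a base finite metric, and extend it by \cref{lem:rankings} so that every strict ranking of $C$ is realized.

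\textbf{Step 1: the range contains every candidate.} Suppose the mechanism has approximation ratio below $2n-1$. If all agents sit exactly at some candidate $c$, the optimal social cost is $0$, so $F_1$ must be placed at $c$. These positions induce strict rankings whenever the point $c$ is the unique closest point and the ties among the other candidates are broken by perturbation. If exact ties are a problem, we instead use points very close to $c$ and apply \cref{lem:samerank}. Either way, $\mathcal R=C$ and $|\mathcal R|\ge 3$.

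\textbf{Step 2: a dictator exists.} By \cref{lem:dictator} there is an agent $i^*$ such that $F_1$ is always placed at $i^*$'s closest candidate, whenever all agents induce strict rankings.

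\textbf{Step 3: the bad instance.} We need a point $x$ with $d(x,c_1)=1$ and $d(x,c_0)=1+\varepsilon$, with all other candidates farther from $x$. We also want the distance $d(c_0,c_1)$ to be as large as the triangle inequality permits, namely $2+\varepsilon$. So the base metric should be realized as a graph containing $x$ with edges $x c_1$ of length $1$ and $x c_0$ of length $1+\varepsilon$; these fix $d(c_0,c_1)=2+\varepsilon$ via the path through $x$, and the other candidates are placed far away. We then apply \cref{lem:rankings} to this base space, which keeps $x$ and its distances unchanged.

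Place $i^*$ at $x$ and the other $n-1$ agents at points inducing strict rankings with $c_0$ on top, at distance at most $\varepsilon$ from $c_0$. For instance, these can be extension points adjacent to $c_0$ after a slight modification of the construction of \cref{lem:rankings}, or simply $c_0$ itself with the tie issue handled as in Step 1.

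The mechanism places $F_1$ at $c_1$. Its social cost is
\[
1+(n-1)(2+\varepsilon)-O(\varepsilon),
\]
whereas placing $F_1$ at $c_0$ costs
\[
1+\varepsilon+(n-1)\varepsilon.
\]
The ratio therefore tends to $2n-1$ as $\varepsilon\to 0$.

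\textbf{Main obstacle.} The delicate step is reconciling exact positions with the requirement that all agents induce strict rankings, as \cref{lem:dictator} needs. The optimal cost $0$ in Step 1 needs agents exactly at $c$, and in Step 3 agents should be essentially at $c_0$. We would handle this by placing agents at auxiliary points at tiny distance $\varepsilon'$ from the candidate with perturbed distances to the rest. Step 1 then uses only a finite ratio, since an optimal cost near $0$ forces the output to be $c$. Step 3 loses only $O(\varepsilon')$.
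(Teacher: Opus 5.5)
Your proposal is correct and follows essentially the paper's route: you reduce to $F_1$ alone, use unanimity at three candidates to get a range of size at least three, invoke \cref{lem:dictator}, and place the dictator just off the midpoint of two candidates with the other $n-1$ agents at the farther one (the paper uses $d(z,a)=1-\varepsilon$, $d(z,b)=1+\varepsilon$, $d(a,b)=2$ rather than your rescaled version). The tie issue you flag as the main obstacle goes away if, as in the paper, you give the candidates pairwise distinct distances (e.g., $2,3,4$), so that agents can sit exactly at candidate locations while still inducing strict rankings.
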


\begin{proof}
We consider instances in which all agents approve only $F_1$. Let $\varepsilon>0$. Consider three candidate locations $a$, $b$, and $c$ such that
\[
d(a,b)=2,\qquad d(a,c)=3,\qquad d(b,c)=4,
\]
and a point $z$ such that
\[
d(z,a)=1-\varepsilon,\qquad d(z,b)=1+\varepsilon,\qquad d(z,c)=3-\varepsilon.
\]
For sufficiently small $\varepsilon$, these distances define a metric space. Each of the points $a$, $b$, $c$, and $z$ induces a strict ranking over $\{a,b,c\}$; in particular, $z$ induces the ranking $a\succ b\succ c$. By \cref{lem:rankings}, we may add points that induce every strict ranking over $\{a,b,c\}$, without changing the distances above.

Consider now an arbitrary deterministic strategyproof mechanism with finite approximation ratio. When all agents are positioned at $a$, the optimal social cost is $0$, and therefore the mechanism must place $F_1$ at $a$. Similarly, it must place $F_1$ at $b$ when all agents are positioned at $b$, and at $c$ when all agents are positioned at $c$. Hence, the range of the rule that determines the location of $F_1$ is $\{a,b,c\}$, and, by \cref{lem:dictator} with $G=N$, there is an agent $i$ such that $F_1$ is placed at her most preferred candidate location whenever all agents are positioned at points that induce strict rankings.

Consider the profile in which agent $i$ is positioned at $z$ and all other $n-1$ agents are positioned at $b$. Since $i$ ranks $a$ highest, the mechanism places $F_1$ at $a$. The resulting social cost is
\[
\SC(a)=d(z,a)+(n-1)d(a,b)
=(1-\varepsilon)+2(n-1)
=2n-1-\varepsilon.
\]
On the other hand, placing $F_1$ at $b$ gives social cost
\[
\SC(b)=d(z,b)=1+\varepsilon.
\]
Therefore, the approximation ratio of the mechanism is at least
\[
\frac{2n-1-\varepsilon}{1+\varepsilon}.
\]
As $\varepsilon$ tends to $0$, this ratio tends to $2n-1$. Hence, for any $\delta>0$, we can choose $\varepsilon>0$ sufficiently small so that the approximation ratio is at least $2n-1-\delta$.
\end{proof}

By \cref{thm:fixed-dictator-sp,thm:gscupper,thm:gsclower}, we obtain the following result.

\begin{corollary}
On general metric spaces, the best possible approximation ratio of deterministic strategyproof mechanisms in terms of the social cost is exactly $2n-1$.
\end{corollary}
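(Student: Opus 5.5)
The corollary follows by putting the two sides of the bound together, so I will simply assemble results already established.

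For the upper bound, \cref{thm:fixed-dictator-sp} shows that \Mech{Fixed-Dictator} is a deterministic strategyproof mechanism. \cref{thm:gscupper} shows that on every instance in an arbitrary metric space its social cost is at most $(2n-1)\cdot\SC(\ov)$. Hence the best achievable ratio is at most $2n-1$.

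For the lower bound, \cref{thm:gsclower} shows that every deterministic strategyproof mechanism has ratio at least $2n-1-\delta$ for every $\delta>0$. That proof builds, for each $n$, a metric space in which all rankings of $\{a,b,c\}$ are realized (via \cref{lem:rankings}). It then uses unanimity to force the range $\{a,b,c\}$ and obtains a dictator from \cref{lem:dictator}.

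Taking the infimum over $\delta$ gives a lower bound of $2n-1$ on the best possible ratio, which matches the upper bound. The only point to check is the quantifiers. The lower bound holds for each fixed $n$, and the supremum defining the approximation ratio ranges over instances with that $n$. So the two bounds are compared at the same $n$, and there is no real obstacle.
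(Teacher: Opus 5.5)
Your proposal is correct and is exactly the paper's argument: the corollary follows by combining \cref{thm:fixed-dictator-sp,thm:gscupper} for the upper bound with \cref{thm:gsclower} for the lower bound, both at the same fixed $n$. One cosmetic fix: the lower bound of $2n-1$ comes from the supremum of $2n-1-\delta$ over $\delta>0$ (i.e., letting $\delta\to0$), not an infimum.
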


\subsubsection{Universally Strategyproof Mechanisms}\label{subsec:metric-usp-sc}
We now turn to universally strategyproof mechanisms for the social cost. We first present a mechanism with approximation ratio $4$, and then show that this guarantee is asymptotically tight.

\bigskip
\noindent 
{\bf Upper bound.}
If \(N_1=\varnothing\) or \(N_2=\varnothing\), the problem reduces to the single-facility setting, for which 
\Mech{Random Dictatorship} achieves an approximation ratio of at most \(3\) \citep{FFG}. Hence, in the remainder, we assume that \(N_1,N_2\neq\varnothing\). We randomize over the following three component mechanisms:
\begin{itemize}
\item $\Mech{Serial}_1$. 
When $n_1>0$, choose an agent $i\in N_1\setminus N_2$ uniformly at random and place $F_1$ at $w_1=t(i)$. Then choose an agent $j\in N_2$ uniformly at random and place $F_2$ at the location in $C\setminus\{w_1\}$ closest to $j$.

\item $\Mech{Serial}_2$. This mechanism is defined symmetrically, with the roles of the two facilities exchanged.

\item \Mech{Random-Pair}.
When \(n_{12}>0\), choose an agent \(i\in N_1\cap N_2\) uniformly at random, and assign \(F_1\) and \(F_2\) to \(t(i)\) and \(s(i)\) uniformly at random.
\end{itemize}
We run these mechanisms with probabilities proportional to weights that depend only on the publicly known counts $n_1,n_2,n_{12}$. The weights are chosen to balance the contributions of the two singleton types, while
giving additional weight to the doubleton-based mechanism when needed to keep the approximation ratio at $4$. In particular, define
\begin{equation}\label{eq:mixed-dictator-weights}
\begin{aligned}
\omega_1&=n_1(n_1+2n_{12}),\\
\omega_2&=n_2(n_2+2n_{12}),\\
\omega_{12}
&=\frac13\left[
2n_{12}(n+3n_{12})-(n_1-n_2)^2
\right]_+,
\end{aligned}
\end{equation}
where, as before, $[x]_+=\max\{0,x\}$. The corresponding probabilities are $\omega_1/\Omega$, $\omega_2/\Omega$, and $\omega_{12}/\Omega$ with $\Omega=\omega_1+\omega_2+\omega_{12}$. See \cref{mech:mixed-dictator}.

All weights are non-negative, and any inapplicable mechanism receives zero weight: if \(n_1=0\), then \(\omega_1=0\); if \(n_2=0\), then \(\omega_2=0\); and if \(n_{12}=0\), then \(\omega_{12}=0\). Moreover, when both facilities are approved, \(\Omega>0\).

\begin{algorithm}[t]
\caption{\Mech{Mixed-Dictator}}
\label{mech:mixed-dictator}
\KwIn{Reported positions of agents with $N_1$ and $N_2$ both non-empty}
\KwOut{Facility locations $\w=(w_1,w_2)$}
Compute the weights in \eqref{eq:mixed-dictator-weights}\;
Run $\Mech{Serial}_1$ with probability $\omega_1/\Omega$, $\Mech{Serial}_2$ with probability $\omega_2/\Omega$, and \Mech{Random-Pair} with probability $\omega_{12}/\Omega$\;
\end{algorithm}

\begin{theorem}\label{thm:mixed-dictator-usp}
\Mech{Mixed-Dictator} is universally strategyproof.
\end{theorem}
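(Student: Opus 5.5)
The plan is to exhibit \Mech{Mixed-Dictator} explicitly as a probability distribution over deterministic mechanisms whose weights do not depend on the reported positions, and then check that each deterministic mechanism in the support is strategyproof. The mixing weights $\omega_1,\omega_2,\omega_{12}$ depend only on the publicly known counts $n_1,n_2,n_{12}$. Inside each component, the remaining random choices are the sampled agents and, for \Mech{Random-Pair}, the label assignment. These are also drawn independently of the reports: a uniformly random agent of $N_1\setminus N_2$ and of $N_2$, a uniformly random agent of $N_1\cap N_2$, and a fair coin. So I will fix all these random outcomes in advance. This leaves a finite family of deterministic mechanisms, each of which names fixed agents who act as dictators.

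For a fixed realization of $\Mech{Serial}_1$, the mechanism names an agent $i\in N_1\setminus N_2$ and an agent $j\in N_2$. It places $F_1$ at $t(i)$ and $F_2$ at $j$'s closest location in $C\setminus\{t(i)\}$. Only $i$ and $j$ can influence the outcome. Agent $i$ approves only $F_1$, which is already placed at her closest candidate location, so she cannot gain; changes she causes to $F_2$ are irrelevant to her. If $i=j$ were possible, this would need care, but $i\notin N_2$, so $i\neq j$. The available set $C\setminus\{w_1\}$ does not depend on $j$'s report. Agent $j$ obtains her closest location in it, so her $F_2$-cost is minimized. If $j$ is a doubleton agent, her $F_1$-cost is fixed independently of her report. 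The same argument as in the second case of \cref{thm:fixed-dictator-sp} applies here, and $\Mech{Serial}_2$ is symmetric.

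For a fixed realization of \Mech{Random-Pair}, the mechanism names a doubleton agent $i$ and a fixed labelling. It places the facilities at $\{t(i),s(i)\}$. Among all pairs of distinct candidates, this pair minimizes $i$'s total cost, and the labelling does not affect her cost. All other agents are ignored. Hence this component is strategyproof exactly as in the first case of \cref{thm:fixed-dictator-sp}.

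I expect no real obstacle. The only subtle point is to be explicit that the distribution, including the mixing weights and the choice of dictators, is independent of the reports; universal strategyproofness then follows directly from the definition.
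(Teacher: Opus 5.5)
Your proposal is correct and follows essentially the same route as the paper: fix the component, the sampled agents, and the label assignment, whose distribution depends only on $n_1,n_2,n_{12}$, and then verify that each serial and pair realization is strategyproof by the same dictator-style arguments. Your explicit remark that $i\neq j$ in $\Mech{Serial}_1$ because $i\notin N_2$ is a small clarification that the paper leaves implicit.
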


\begin{proof}
We will show that every deterministic realization of \Mech{Mixed-Dictator} is strategyproof. To this end, fix a component mechanism and all of its random choices, including the sampled agents and, for \Mech{Random-Pair}, the assignment of the two selected locations to the facilities.

Consider first $\Mech{Serial}_1$. The agent selected in the first stage approves only $F_1$, which is placed at her closest candidate location. Although her report may affect the candidate set available in the second stage, she does not care about the location of $F_2$. For the agent selected in the second stage, the location of $F_1$, and hence the set of available candidate locations, is independent of her report. The mechanism places $F_2$ at her closest available candidate location. If she has doubleton preference, her cost from $F_1$ is already fixed. Hence, neither selected agent can benefit from misreporting. All other agents are ignored and therefore cannot affect the outcome. The same argument applies symmetrically to $\Mech{Serial}_2$.

In \Mech{Random-Pair}, the selected agent has doubleton preference, and the mechanism chooses her two closest distinct candidate locations. These minimize her total cost among all pairs of distinct candidate locations. Since she approves both facilities, the assignment of the two locations to $F_1$ and $F_2$ does not affect her cost. Again, all other agents are ignored.
\end{proof}

We next bound the approximation ratio. As in the analysis for the line, for an optimal solution $\ov=(o_1,o_2)$, we write the optimal social cost as 
\[
\SC(\ov)=\opt_1+\opt_2, 
\]
where 
\[
\opt_j=\sum_{i\in N_j}d(i,o_j)
\]
is the contribution of facility $F_j$, decomposed into the costs of the singleton and doubleton agents. We start by establishing a simple representative-agent bound and then use it to bound the cost of each component mechanism.

\begin{lemma}
\label{lem:random-representative}
Let $A$ be a non-empty set of candidate locations, let $y\in A$, and let $G$ be a non-empty set of agents. Choose an agent $i\in G$ uniformly at random, and let $\hat y$ be her closest location in $A$. If $A$ and $y$ are fixed independently of the choice of $i$, then, for every set $B$ of agents,
\begin{equation}\label{eq:random-representative}
\E_i\left[\sum_{k\in B}d(k,\hat y)\right]
\le
\sum_{k\in B}d(k,y)
+\frac{2|B|}{|G|}
\sum_{g\in G}d(g,y).
\end{equation}
\end{lemma}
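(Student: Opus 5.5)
We prove \eqref{eq:random-representative} pointwise for each fixed choice of the sampled agent $i$, and then average over $i$.

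\textbf{Step 1: a deterministic bound for a fixed $i$.} Fix the sampled agent $i\in G$, and let $\hat y$ be her closest location in $A$. Since $y\in A$, minimality of $\hat y$ gives $d(i,\hat y)\le d(i,y)$. Combining this with the triangle inequality,
\[
d(\hat y,y)\le d(\hat y,i)+d(i,y)\le 2\,d(i,y).
\]
Now take any agent $k\in B$. Another application of the triangle inequality yields
\[
d(k,\hat y)\le d(k,y)+d(y,\hat y)\le d(k,y)+2\,d(i,y).
\]
Summing over $k\in B$ gives the deterministic bound
\[
\sum_{k\in B}d(k,\hat y)\le \sum_{k\in B}d(k,y)+2|B|\,d(i,y).
\]

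\textbf{Step 2: averaging over $i$.} Take expectations over the uniform choice of $i\in G$. The sets $A$ and $B$ and the point $y$ do not depend on $i$, so the term $\sum_{k\in B}d(k,y)$ is a constant. The only remaining random quantity is $d(i,y)$, and
\[
\E_i[d(i,y)]=\frac{1}{|G|}\sum_{g\in G}d(g,y).
\]
Substituting this into the bound of Step 1 gives exactly \eqref{eq:random-representative}.

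\textbf{Where the hypothesis is used.} The argument is elementary, and there is no real obstacle. The only delicate point is the independence assumption. It is needed so that $y\in A$ holds for every realization of $i$, which is what makes $d(i,\hat y)\le d(i,y)$ valid, and so that $\sum_{k\in B}d(k,y)$ can be pulled out of the expectation. This is precisely the situation in the second stage of $\Mech{Serial}_1$, where $A=C\setminus\{w_1\}$ is fixed before $j$ is drawn.

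The lemma holds for an arbitrary set $B$, which may overlap $G$ or contain $i$ itself. The bound in Step 1 is valid for every $k\in B$, including $k=i$, so no case distinction is required.
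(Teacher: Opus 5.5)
Your proof is correct and matches the paper's argument. Both bound $d(k,\hat y)\le d(k,y)+2d(i,y)$ using the triangle inequality and the minimality of $\hat y$, then sum over $B$ and average over the uniform choice of $i$. One aside: the paper applies this lemma only to the first stage of $\Mech{Serial}_1$ (with $A=C$, $y=o_1$), and handles the second stage separately because $o_2$ may have been removed from $C\setminus\{w_1\}$.
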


\begin{proof}
Since $d(i,\hat y)\le d(i,y)$, by the triangle inequality, we have that
\[
d(k,\hat y)
\le d(k,y)+d(y,i)+d(i,\hat y)
\le d(k,y)+2d(i,y).
\]
Summing over $k\in B$ and averaging over the uniformly selected $i\in G$ gives us \eqref{eq:random-representative}.
\end{proof}

We next bound the coefficients for the three component mechanisms.

\begin{lemma}\label{lem:metric-sc-components}
Each component of \Mech{Mixed-Dictator} has expected social cost bounded
according to the following coefficients:
\[
\renewcommand{\arraystretch}{1.8}
\begin{array}{c|cccc}
&\opt_1^{(1)}&\opt_2^{(2)}
&\opt_1^{(12)}&\opt_2^{(12)}\\ \hline
\Mech{Serial}_1
&3+\dfrac{2n_2+4n_{12}}{n_1}&3&1&3\\
\Mech{Serial}_2
&3&3+\dfrac{2n_1+4n_{12}}{n_2}&3&1\\
\Mech{Random-Pair}
&1&1&
3+\dfrac{3n_1+n_2}{2n_{12}}&
3+\dfrac{n_1+3n_2}{2n_{12}}
\end{array}
\]
\end{lemma}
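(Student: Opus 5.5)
The plan mirrors the proof of \cref{line3:lem:components}, with the weighted-percentile bounds replaced by \cref{lem:random-representative}. The key difference is that, in general metrics, the representative's closest location can be up to $2d(i,o)$ away from $o$, rather than $d(i,o)$ on average. We treat the two serial components first (the second follows by symmetry) and then \Mech{Random-Pair}, in each case collecting the coefficients of $\opt_1^{(1)},\opt_2^{(2)},\opt_1^{(12)},\opt_2^{(12)}$.

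\textbf{$\Mech{Serial}_1$.} We condition on the first-stage agent $i\in N_1\setminus N_2$, with $w_1=t(i)$, and bound the cost of the second stage given $w_1$.
\begin{itemize}
\item If $w_1\neq o_2$, we apply \cref{lem:random-representative} with $A=C\setminus\{w_1\}$, $y=o_2$, $G=B=N_2$. The second facility then costs at most $3\opt_2=3\opt_2^{(2)}+3\opt_2^{(12)}$. For the first facility, each doubleton pays $d(k,w_1)\le d(k,o_1)+d(o_1,w_1)$.
\item If $w_1=o_2$, the doubletons pay exactly $\opt_2^{(12)}$ for $F_1$. For $F_2$ we apply \cref{lem:random-representative} with $y=o_1$, which is still available. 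This gives at most $\sum_{k\in N_2}d(k,o_1)+2\sum_{g\in N_2}d(g,o_1)$, and each $d(k,o_1)\le d(k,o_2)+d(o_1,o_2)$ with $d(o_1,o_2)=d(w_1,o_1)$. So the $F_2$ cost is at most $3\opt_2+3(n_2+n_{12})d(w_1,o_1)$.
\end{itemize}
In both cases the conditional cost is at most
\[
\textstyle\sum_{k\in N_1\setminus N_2}d(k,w_1)+\opt_1^{(12)}+3\opt_2^{(2)}+3\opt_2^{(12)}+c\,d(w_1,o_1),
\]
for a suitable linear coefficient $c$ in $n_2,n_{12}$. We then average over $i$, using $d(k,w_1)\le d(k,o_1)+2d(i,o_1)$ and $d(w_1,o_1)\le 2d(i,o_1)$. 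This yields $\opt_1^{(1)}(1+2+2c/n_1)$.

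The main obstacle is the second case, where the $d(w_1,o_1)$ penalty must come out to only $(2n_2+4n_{12})/n_1$. A naive count gives $3(n_2+n_{12})$ times $2$, which is too large. I would try to bound the $F_2$ cost more carefully when $w_1=o_2$. The candidate $o_1$ is available and each second-stage agent $j$ satisfies $d(j,w_2)\le d(j,o_1)$, so agent $k$ pays at most $d(k,o_2)+d(o_2,j)+d(j,o_1)\le d(k,o_2)+2d(j,o_2)+d(o_1,o_2)$. That costs only one copy of $d(o_1,o_2)$ per agent in $N_2$, i.e.\ penalty $(n_2+n_{12})d(w_1,o_1)$. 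The doubletons then need care. In the first case they pay $d(o_1,w_1)$ via $F_1$, while in the second case they pay one extra $d(o_1,o_2)$ via $F_2$ but save on $F_1$. Taking the maximum gives $c=n_2+2n_{12}$, whence the coefficient $3+(2n_2+4n_{12})/n_1$. If the accounting for doubletons comes out differently, I would split their $F_1$ and $F_2$ contributions to match the table exactly.

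\textbf{\Mech{Random-Pair}.} Condition on the selected doubleton $v$, with $\{c,c'\}=\{t(v),s(v)\}$ and $d(v,c)+d(v,c')\le d(v,o_1)+d(v,o_2)$.
\begin{itemize}
\item As on the line, an $F_1$-singleton pays in expectation at most $d(k,o_1)+\tfrac12(3d(v,o_1)+d(v,o_2))$, and symmetrically for $F_2$-singletons.
\item A doubleton pays at most $2d(k,v)+d(v,o_1)+d(v,o_2)$. Instead of this, I would bound each doubleton by $d(k,o_1)+d(k,o_2)+2d(v,o_1)+2d(v,o_2)$, using $d(k,c)+d(k,c')\le d(k,o_1)+d(k,o_2)+d(o_1,v)+d(o_2,v)+d(v,c)+d(v,c')$.
\end{itemize}
Averaging $v$ uniformly over $N_1\cap N_2$ turns $\E[d(v,o_j)]$ into $\opt_j^{(12)}/n_{12}$. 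This gives the coefficients $1+2+\tfrac{3n_1+n_2}{2n_{12}}$ and $3+\tfrac{n_1+3n_2}{2n_{12}}$, matching the table.
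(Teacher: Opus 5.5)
Your proposal is correct and is essentially the paper's argument: the paper merges your case split into the single bound $d(j,w_2)\le d(j,o_2)+2d(i,o_1)$ (using $d(o_1,o_2)\le d(i,o_1)+d(i,o_2)\le 2d(i,o_1)$ when $t(i)=o_2$) and bounds every doubleton's $F_1$ cost by $d(k,o_1)+2d(i,o_1)$, which yields exactly your $c=n_2+2n_{12}$, while your \Mech{Random-Pair} argument coincides with the paper's. One small correction: when $w_1=o_2$ the doubletons do not really ``save'' on $F_1$, since charging them their exact cost $\opt_2^{(12)}$ would put coefficient $4$ on $\opt_2^{(12)}$; you must still use $d(k,w_1)\le d(k,o_1)+d(o_1,w_1)$ there, which is precisely what your choice of $c$ accounts for.
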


\begin{proof}
We first consider $\Mech{Serial}_1$. Let $i\in N_1\setminus N_2$ be the first selected agent and $j\in N_2$ the second one. For the contribution of $F_1$, we apply \eqref{eq:random-representative} with $A=C$, $y=o_1$, $G=N_1\setminus N_2$ and $B=N_1$.
Since
\[
\sum_{k\in N_1}d(k,o_1) = \opt_1^{(1)}+\opt_1^{(12)}, \
|N_1|=n_1+n_{12}, \
\text{ and } \
\sum_{g\in N_1\setminus N_2}d(g,o_1) = \opt_1^{(1)},
\]
we obtain
\[
\E\left[\sum_{k\in N_1}d(k,w_1)\right]
\le
\left(3+\frac{2n_{12}}{n_1}\right)\opt_1^{(1)}
+\opt_1^{(12)}.
\]
The only additional issue for $F_2$ is that the first facility may occupy
$o_2$. We claim that, conditional on $i$ and $j$, $d(j,w_2) \le d(j,o_2)+2d(i,o_1)$.
If $w_1\neq o_2$, then $o_2$ remains available, and the stronger
inequality $d(j,w_2)\le d(j,o_2)$ holds. Otherwise,
$w_1=t(i)=o_2$, and thus $o_1$ remains available. By the triangle inequality, we have
\[
d(j,w_2)
\le d(j,o_1)
\le d(j,o_2)+d(o_1,o_2)
\le d(j,o_2)+d(i,o_1)+d(i,o_2) 
\le d(j,o_2)+2d(i,o_1),
\]
where the last inequality follows from $d(i,o_2)\le d(i,o_1)$.
Consequently, for every $k\in N_2$, using the triangle inequality, 
\[
d(k,w_2)
\le d(k,o_2) + d(j,o_2) + d(j,w_2) 
\le d(k,o_2)+2d(j,o_2)+2d(i,o_1).
\]
Summing over $N_2$ and taking expectation, using
\[
\E[d(j,o_2)] = \frac{\opt_2^{(2)}+\opt_2^{(12)}}{n_2+n_{12}} \
\text{ and } \
\E[d(i,o_1)] = \frac{\opt_1^{(1)}}{n_1},
\]
gives us that
\[
\E\left[\sum_{k\in N_2}d(k,w_2)\right]
\le 3\opt_2^{(2)} +3\opt_2^{(12)} +\frac{2(n_2+n_{12})}{n_1}\opt_1^{(1)}.
\]
By adding the two facility contributions, we have
\[
\E[\SC(\Mech{Serial}_1)]
\le \left(3+\frac{2n_2+4n_{12}}{n_1}\right)\opt_1^{(1)} +3\opt_2^{(2)} +\opt_1^{(12)} +3\opt_2^{(12)}.
\]
The second row for $\Mech{Serial}_2$ follows symmetrically.

We next consider \Mech{Random-Pair}. Let $i\in N_1\cap N_2$ be the selected agent.
Since $t(i)$ and $s(i)$ are her two closest distinct candidate locations,
$d(i,t(i))+d(i,s(i)) \le d(i,o_1)+d(i,o_2)$.
By the triangle inequality,
\[
d(o_1,t(i))+d(o_1,s(i)) \le 3d(i,o_1)+d(i,o_2)
\]
and, symmetrically,
\[
d(o_2,t(i))+d(o_2,s(i)) \le d(i,o_1)+3d(i,o_2).
\]
After averaging over the two equally likely assignments of $t(i)$ and $s(i)$ to $F_1$ and $F_2$, the contribution of the $F_1$-singleton agents is at most
\[
\opt_1^{(1)} +\frac{3n_1}{2}d(i,o_1) +\frac{n_1}{2}d(i,o_2),
\]
while the $F_2$-singleton agents contribute at most
\[
\opt_2^{(2)} +\frac{n_2}{2}d(i,o_1) +\frac{3n_2}{2}d(i,o_2).
\]
Hence, the total contribution of the singleton agents is at most
\[
\opt_1^{(1)}+\opt_2^{(2)} +\frac{3n_1+n_2}{2}d(i,o_1) +\frac{n_1+3n_2}{2}d(i,o_2).
\]
For every doubleton agent $k$, by the triangle inequality,
\[
\cost_k(\w) \le \cost_k(\ov) +d(i,o_1)+d(i,w_1) +d(i,o_2)+d(i,w_2).
\]
Since $w_1$ and $w_2$ are the two closest distinct candidate locations to $i$, $d(i,w_1)+d(i,w_2) \le d(i,o_1)+d(i,o_2)$, and thus 
\[
\cost_k(\w) \le \cost_k(\ov) +2d(i,o_1)+2d(i,o_2).
\]
Overall, the doubleton agents contribute at most
\[
\opt_1^{(12)}+\opt_2^{(12)} +2n_{12}\bigl(d(i,o_1)+d(i,o_2)\bigr).
\]
Combining the two bounds, conditional on $i$,
\begin{align*}
\E[\SC(\w)\mid i]
&\le \opt_1^{(1)}+\opt_2^{(2)} +\opt_1^{(12)}+\opt_2^{(12)}\\
&+\left( 2n_{12}+\frac{3n_1+n_2}{2} \right)d(i,o_1)\\
&+\left(2n_{12}+\frac{n_1+3n_2}{2} \right)d(i,o_2).
\end{align*}
Since $i$ is chosen uniformly from $N_1\cap N_2$, we have that 
\[
\E[d(i,o_1)] = \frac{\opt_1^{(12)}}{n_{12}} \
\text{ and }
\E[d(i,o_2)] = \frac{\opt_2^{(12)}}{n_{12}}.
\]
Therefore,
\[
\E[\SC(\Mech{Random-Pair})]
\le \opt_1^{(1)}+\opt_2^{(2)} 
+ \left( 3+\frac{3n_1+n_2}{2n_{12}} \right)\opt_1^{(12)}
+ \left( 3+\frac{n_1+3n_2}{2n_{12}} \right)\opt_2^{(12)}
\]
and the proof is complete.
\end{proof}

We are now ready to prove the upper bound.

\begin{theorem}\label{thm:metric-usp-sc4}
On general metric spaces, the approximation ratio of
\Mech{Mixed-Dictator} in terms of the social cost is at most $4$.
\end{theorem}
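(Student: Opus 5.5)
The plan is to follow the proof of \cref{thm:approx-line-3}. We mix the three rows of \cref{lem:metric-sc-components} with probabilities $\omega_1/\Omega$, $\omega_2/\Omega$, $\omega_{12}/\Omega$. The expected social cost is then at most a combination of $\opt_1^{(1)},\opt_2^{(2)},\opt_1^{(12)},\opt_2^{(12)}$, and it suffices to show that each of the four resulting coefficients is at most $4$. Since $\SC(\ov)$ is the sum of these four terms, this gives ratio $4$. The single-facility case ($N_1$ or $N_2$ empty) is handled by \Mech{Random Dictatorship} with ratio $3$, as noted before the mechanism. The whole argument is symmetric in the two facilities, so I treat only $\opt_1^{(1)}$ and $\opt_1^{(12)}$; the other two coefficients follow by swapping indices. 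I do not assume $n_1\ge n_2$ here.

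\textbf{Singleton coefficient.} The coefficient of $\opt_1^{(1)}$ is
\[
\frac{1}{\Omega}\Bigl[\omega_1\Bigl(3+\tfrac{2n_2+4n_{12}}{n_1}\Bigr)+3\omega_2+\omega_{12}\Bigr]
=3-\frac{2\omega_{12}}{\Omega}+\frac{2(n_1+2n_{12})(n_2+2n_{12})}{\Omega},
\]
using $\omega_1=n_1(n_1+2n_{12})$. This is at most $4$ if and only if $2(n_1+2n_{12})(n_2+2n_{12})\le\Omega+2\omega_{12}=\omega_1+\omega_2+3\omega_{12}$.

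Since $[x]_+\ge x$, we have $3\omega_{12}\ge 2n_{12}(n+3n_{12})-(n_1-n_2)^2$. Substituting $n=n_1+n_2+n_{12}$ gives
\[
\omega_1+\omega_2+3\omega_{12}\ge 2n_1n_2+4n_{12}(n_1+n_2)+8n_{12}^2=2(n_1+2n_{12})(n_2+2n_{12}).
\]
So this case closes exactly; this is evidently how $\omega_{12}$ was calibrated. If $n_1=0$, the coefficient is simply $(3\omega_2+\omega_{12})/\Omega\le 3$.

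\textbf{Doubleton coefficient.} The coefficient of $\opt_1^{(12)}$ is
\[
\frac{1}{\Omega}\Bigl[\omega_1+3\omega_2+\omega_{12}\Bigl(3+\tfrac{3n_1+n_2}{2n_{12}}\Bigr)\Bigr].
\]
It is at most $4$ if and only if
\[
\omega_{12}\Bigl(\frac{3n_1+n_2}{2n_{12}}-1\Bigr)\le 3\omega_1+\omega_2 .
\]
If $\omega_{12}=0$ this holds trivially. Otherwise I substitute $3\omega_{12}=2n_{12}(n+3n_{12})-(n_1-n_2)^2$ and multiply through by $6n_{12}$. The target becomes a polynomial inequality
\[
\bigl[2n_{12}(n+3n_{12})-(n_1-n_2)^2\bigr](3n_1+n_2-2n_{12})\le 6n_{12}\bigl(3n_1(n_1+2n_{12})+n_2(n_2+2n_{12})\bigr).
\]

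To prove it, I first use $-(n_1-n_2)^2\le 0$ only where the factor $3n_1+n_2-2n_{12}$ is positive; when that factor is nonpositive the inequality is immediate. I then expand and try to write the difference as a sum of manifestly nonnegative terms, as was done for the line. If $3n_1\ge n_2$, the positive multiple of $(n_1-n_2)^2$ that appears should help. In the regime $n_2\gg n_1$, I expect the $6n_{12}n_2(n_2+2n_{12})$ term together with $-(n_1-n_2)^2(\cdots)$ to dominate.

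\textbf{Main obstacle.} The hard step is this last polynomial verification, because it must hold without an ordering assumption on $n_1$ and $n_2$. If a direct sum-of-squares decomposition is not apparent, I will split into the cases $n_1\ge n_2$ and $n_1<n_2$. In each case I will set $n_1=n_2+s$ (respectively $n_2=n_1+s$) with $s\ge0$ and check that all coefficients of the resulting polynomial in $n_2$ (respectively $n_1$), $s$ and $n_{12}$ are nonnegative.
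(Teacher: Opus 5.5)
Your proposal is correct and follows the paper's proof essentially step by step. It uses the same coefficient-wise mixing of \cref{lem:metric-sc-components}, the same exact calibration identity for the coefficient of $\opt_1^{(1)}$, and, for $\opt_1^{(12)}$, the same device of discarding $-(n_1-n_2)^2$ only when $3n_1+n_2>2n_{12}$. The step you flag as the main obstacle is routine and needs no case split on the order of $n_1,n_2$: after that discard, your target reduces to $(n_1+n_2+4n_{12})(3n_1+n_2-2n_{12})\le 3\bigl(3n_1(n_1+2n_{12})+n_2(n_2+2n_{12})\bigr)$, whose difference is $2\bigl(2n_1^2+(n_1-n_2)^2+4n_1n_{12}+2n_2n_{12}+4n_{12}^2\bigr)\ge 0$, exactly as in the paper.
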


\begin{proof}
We will combine the bounds in \cref{lem:metric-sc-components} using the probabilities $\omega_1/\Omega$, $\omega_2/\Omega$, and $\omega_{12}/\Omega$, and bound the resulting coefficients separately. 
Since there is a symmetry between the coefficients of $\opt_1^{(1)}$ and $\opt_2^{(2)}$, and between the
coefficients of $\opt_1^{(12)}$ and $\opt_2^{(12)}$, it suffices to bound the coefficients of $\opt_1^{(1)}$ and $\opt_1^{(12)}$. 
\begin{itemize}
\item {\bf Coefficient of $\opt_1^{(1)}$.}
If $n_1=0$, then $\opt_1^{(1)}=0$. Otherwise, by
\cref{lem:metric-sc-components}, the coefficient is
\[
\frac{1}{\Omega}\left[\omega_1\left(3+\frac{2n_2+4n_{12}}{n_1}\right)+3\omega_2+\omega_{12}\right].
\]
Using $\omega_1=n_1(n_1+2n_{12})$, it becomes
\[
\frac{3(\omega_1+\omega_2)+2(n_1+2n_{12})(n_2+2n_{12})+\omega_{12}}{\Omega}.
\]
To show that this coefficient is at most $4$, using $\Omega = \omega_1 + \omega_2+\omega_{12}$, we need to equivalently show that 
\[
2(n_1+2n_{12})(n_2+2n_{12}) \le \omega_1+\omega_2+3\omega_{12}.
\]
By the definitions of $\omega_1$ and $\omega_2$, the difference between the first term and the first two terms on the right-hand side is
\begin{align*}
&2(n_1+2n_{12})(n_2+2n_{12})-\omega_1-\omega_2\\
&\qquad=
2n_{12}(n_1+n_2+4n_{12})-(n_1-n_2)^2\\
&\qquad=
2n_{12}(n+3n_{12})-(n_1-n_2)^2.
\end{align*}
By the definition of $\omega_{12}$,
\[
3\omega_{12} = \left[2n_{12}(n+3n_{12})-(n_1-n_2)^2\right]_+
\geq  2n_{12}(n+3n_{12})-(n_1-n_2)^2.
\]
Therefore,
\[
2(n_1+2n_{12})(n_2+2n_{12}) \le \omega_1+\omega_2+3\omega_{12},
\]
as required.

\item {\bf Coefficient of $\opt_1^{(12)}$.}
If $n_{12}=0$, then $\opt_1^{(12)}=0$.
In case $n_{12}>0$ and $\omega_{12}=0$, the coefficient is
\[
\frac{\omega_1+3\omega_2}{\omega_1+\omega_2} \le3.
\]

Suppose now that $\omega_{12}>0$. By
\cref{lem:metric-sc-components}, the coefficient is
\[
\frac{1}{\Omega}\bigg[\omega_1+3\omega_2+\left(3+\frac{3n_1+n_2}{2n_{12}}\right)\omega_{12}\bigg]
\]
To show that the coefficient is at most $4$, we equivalently want
\[
3\omega_1+\omega_2 \ge \frac{3n_1+n_2-2n_{12}}{2n_{12}}\omega_{12}.
\]
This is immediately true when $3n_1+n_2\le2n_{12}$, so suppose that $3n_1+n_2>2n_{12}$. 
Since $\omega_{12}>0$, by definition,
\[
\omega_{12} = \frac13 ( 2n_{12}(n_1+n_2+4n_{12})-(n_1-n_2)^2)
\le \frac23 n_{12}(n_1+n_2+4n_{12}).
\]
Therefore, it suffices to show that
\begin{align*}
&3n_1(n_1+2n_{12})+n_2(n_2+2n_{12})
\ge \frac13 (3n_1+n_2-2n_{12})(n_1+n_2+4n_{12}) \\
&\Leftrightarrow
\frac23
\left(2n_1^2+(n_1-n_2)^2+4n_1n_{12}+2n_2n_{12}+4n_{12}^2\right)\ge0,
\end{align*}
which is true. 
\end{itemize}
Since every coefficient is at most $4$, we obtain $\E[\SC(\w)] \le 4\SC(\ov)$, and hence the approximation ratio of \Mech{Mixed-Dictator} is at most $4$.
\end{proof}

\bigskip
\noindent 
{\bf Lower bound.}
We next show that the upper bound of $4$ is asymptotically tight. The proof
uses a star metric together with \cref{lem:rankings,lem:dictator}.

\begin{theorem}\label{thm:usp-sc-lb4}
On general metric spaces with $n=2k$ agents, the approximation ratio of any
universally strategyproof mechanism in terms of the social cost is at least
$4-4/n-\delta$, for any $\delta>0$.
\end{theorem}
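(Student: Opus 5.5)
The plan is to reduce a universally strategyproof mechanism to a distribution over deterministic strategyproof mechanisms. On a suitably symmetric family of instances, each of these deterministic mechanisms will turn out to be a dictatorship in the sense of \cref{lem:dictator}. An averaging argument over the family will then give the bound, in the same spirit as the proofs of \cref{thm:line-usp-lb} and \cref{thm:line-usp-mc3}. I would use singleton instances with $k$ agents approving only $F_1$ and $k$ agents approving only $F_2$ (this is why $n=2k$). The metric is a star whose leaves are the candidate locations: a centre $o$ and leaves $c_1,\dots,c_m$ at distance $1$ from $o$. By \cref{lem:rankings} I would extend it so that every strict ranking of $C$ is induced by some point, keeping all relevant distances fixed.

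\textbf{Step 1 (forcing a dictator).} For every candidate $c$, the instance in which all agents of $N_1$ sit at $c$ has zero $F_1$-cost, and similarly for $F_2$. I would first fix the $F_2$-agents at a profile whose optimal placement of $F_2$ avoids many leaves. Then, with probability $1$, the deterministic mechanism chosen must place $F_1$ optimally on all such unanimous profiles, so the range $\mathcal R$ of $F_1$ contains at least three leaves. \cref{lem:dictator} then yields a dictator $i^*\in N_1$ over $\mathcal R$. The symmetric argument gives a dictator $j^*\in N_2$ for $F_2$, for each fixed profile of $N_1$.

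\textbf{Step 2 (the symmetric bad family).} Next, for each choice of a ``special'' agent, I would build an instance in which all other agents of its group sit at (or near) a common leaf $b$, while every agent is placed so that her top-ranked candidate is her own leaf $a_\ell$. Each agent is at distance $1-\varepsilon$ from her own leaf and $1+\varepsilon$ from $b$, in the style of the point $z$ in the proof of \cref{thm:gsclower}. The distances are to be tuned so that two things hold. If the dictator is an agent whose own leaf is far from $b$, the facility goes to her leaf, and the social cost is roughly four times the optimum, which uses $b$. The extra factor relative to $3$ should come from the coupling: the displaced facility pushes the other facility off its optimum, as in the $w_1=o_2$ case of \cref{lem:metric-sc-components}. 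If the dictator happens to be the ``aligned'' agent, her choice is (near-)optimal.

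\textbf{Step 3 (averaging).} Since the dictator $i^*$ is a fixed agent for each deterministic mechanism, while the family of instances is invariant under permuting agents within each group, averaging over the $n$ (or $k$) symmetric instances shows that some instance has expected ratio at least $\bigl((n-1)\cdot 4+\text{(one good case)}\bigr)/n$. With $\varepsilon\to0$ this yields $4-4/n-\delta$.

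The main obstacle is Step 2: choosing distances so that the ratio of a bad dictator really tends to $4$ and not merely $3$. This requires exploiting the interaction between the two facilities, while keeping every intermediate profile inside the rankings-induced domain where \cref{lem:dictator} applies and keeping the range of size at least three. I would first try placing the shared leaf $b$ as the optimum for both groups, so that a dictated $F_1$ at $b$ forces $F_2$ to a leaf of cost about $2$ per agent. I would then adjust until the count gives exactly $4(n-1)/n$.
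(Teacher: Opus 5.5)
There is a genuine gap. Step 2 is the entire proof, and you leave it unconstructed. Moreover, the structure you aim for cannot produce $4$.

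\textbf{Why your averaging picture fails.} Your family is closed under relabelling agents within each group. So averaging a fixed-dictator mechanism over the family is the same as running it with uniformly random dictators on a single instance. Step 3 therefore needs an instance on which random dictators give ratio about $4$ with probability about $1-1/n$. Your count does not even match the target: the good case has ratio at least $1$, so it gives $4-3/n$, not $4-4/n$.

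No such instance exists. For the serial mechanisms, the loss beyond $3$ comes only from the coupling. In $\Mech{Serial}_1$ this is paid only for dictators $i$ with $t(i)=o_2$. Let $S$ be the set of these $F_1$-agents and $s=|S|/n_1$. The argument of \cref{lem:metric-sc-components} sharpens, on singleton instances with $n_1=n_2$, to
\[
\E[\SC(\w)]\le 3\opt_1+2(1-s)\sum_{i\in S}d(i,o_1)+3\opt_2 .
\]
Here \Mech{Mixed-Dictator} is just the uniform mixture of the two serial orders. It comes close to $4$ only if two things hold. First, $s$ is negligible and almost all of $\opt_1$ is carried by the few agents in $S$. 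Second, the other $F_1$-agents sit essentially at $o_1$, so they are harmless as dictators. The same holds symmetrically for $F_2$. The bound must therefore come from a special agent who is the dictator with probability only about $1/k$, and who then causes social cost of order $k$ times the optimum. This is the opposite of your picture of $n-1$ bad dictators and one good one.

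\textbf{First missing idea.} You cannot force the coupling onto a facility of your choice. A strategyproof mechanism may simply keep the shared leaf $b$ out of the range of $F_1$ whenever the $F_2$-agents are at $b$. The paper fixes the other group unanimously at a leaf $c_0$. This is also what Step 1 really needs: the six profiles with one group at $c_\ell$ ($\ell\in\{1,2,3\}$) and the other at $c_0$ have optimal cost $0$. So every deterministic mechanism in the support has ranges $\mathcal R_1,\mathcal R_2\supseteq\{c_1,c_2,c_3\}$; with optimal cost positive, nothing would be forced with probability $1$. The instance with everyone at $c_0$ then shows that exactly one range, say $\mathcal R_r$, contains $c_0$. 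Both branches are then exploited.

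\textbf{Second missing idea.} \Cref{lem:dictator} gives a dictator only for a fixed profile of the other group. So every instance of the family must keep that group at the same profile. Your Step 2 moves both groups, so the dictator may change from instance to instance.

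\textbf{The paper's construction.} The metric is a star with centre $z^*$ and leaves $c_0,\dots,c_3$ at distances $1,1+\varepsilon,1+2\varepsilon,1+3\varepsilon$, so $z^*$ ranks $c_0$ first. The family consists of the $2k$ instances $I^g_x$: $N_{3-g}$ is at $c_0$, agent $x\in N_g$ is at $z^*$, and the rest of $N_g$ is at $c_2$. Each has optimum at most $1+2\varepsilon$. Only two instances are bad:
\begin{itemize}
\item For $x=d_r$, $F_r$ goes to $c_0$ and displaces the whole other group, giving cost at least $4k-1$.
\item For $x=d_{3-r}$, $F_{3-r}$ is forced to $c_1$, giving cost at least $2k-1$.
\end{itemize}
All other instances cost at least $1$. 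Averaging gives $(8k-4)/(2k)=4-4/n$.
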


\begin{proof}
Let $|N_1|=|N_2|=k$, with all agents having singleton preferences, and fix
$\varepsilon>0$. Let $c_0,c_1,c_2,c_3$ be the candidate locations, placed at
the leaves of a star with center $z^*$, at distances $1$, $1+\varepsilon$,
$1+2\varepsilon$, and $1+3\varepsilon$ from $z^*$, respectively. Thus, the
distance between two distinct candidate locations is the sum of their
distances from $z^*$. In particular, both $z^*$ and $c_0$ induce the strict
ranking
\[
c_0\succ c_1\succ c_2\succ c_3,
\]
and each of $c_1$, $c_2$, and $c_3$ also induces a strict ranking. By
\cref{lem:rankings}, we may add points so that every strict ranking of the four
candidate locations is induced by some point, without changing these distances.

For each $\ell\in\{1,2,3\}$, consider the instance in which all agents of
$N_1$ are positioned at $c_\ell$ and all agents of $N_2$ are positioned at
$c_0$, together with the symmetric instance obtained by exchanging the two
groups. Each of these six instances has a unique solution with social cost
$0$. Hence, every deterministic mechanism in the support of a universally
strategyproof mechanism with finite approximation ratio must return this
solution on all six instances. Fix such a deterministic mechanism $M$.

Fix the positions of all agents of $N_2$ at $c_0$, and consider the rule that
determines the location of $F_1$. Its range $\mathcal R_1$ contains $c_1$,
$c_2$, and $c_3$, since, when all agents of $N_1$ are positioned at $c_\ell$,
the mechanism must place $F_1$ at $c_\ell$. Hence, by \cref{lem:dictator},
there is an agent $d_1\in N_1$ such that, whenever the agents of $N_1$ are
positioned at points that induce strict rankings, $F_1$ is placed at the
location of $\mathcal R_1$ that $d_1$ ranks highest. Symmetrically, fixing all
agents of $N_1$ at $c_0$ yields a range
$\mathcal R_2\supseteq\{c_1,c_2,c_3\}$ and a dictator $d_2\in N_2$ for the
location of $F_2$.

Consider the instance in which all agents are positioned at $c_0$. Since every
agent has the ranking of $c_0$, each rule selects $c_0$ if it belongs to its
range, and $c_1$ otherwise. If both ranges contained $c_0$, both facilities
would be placed at $c_0$; if neither contained $c_0$, both would be placed at
$c_1$. Since the facilities must be placed at distinct locations, exactly one
of $\mathcal R_1$ and $\mathcal R_2$ contains $c_0$. Let $r\in[2]$ be such
that $c_0\in\mathcal R_r$.

For $g\in[2]$ and $x\in N_g$, let $I_x^g$ be the instance in which all agents
of $N_{3-g}$ are positioned at $c_0$, agent $x$ is positioned at $z^*$, and
all other agents of $N_g$ are positioned at $c_2$, and let
\[
\mathcal I=\{I_x^g:g\in[2],\ x\in N_g\}.
\]
This is a fixed family of $2k$ instances, independent of $M$. For every
$I_x^g\in\mathcal I$, placing $F_{3-g}$ at $c_0$ and $F_g$ at $c_2$ leaves
only agent $x$ with positive cost, equal to
$d(z^*,c_2)=1+2\varepsilon$. Hence,
\[
\SC(\ov_{I_x^g}\mid I_x^g)\le1+2\varepsilon.
\]
Moreover, every solution has social cost at least $1$ on $I_x^g$, since agent
$x$ is at distance at least $1$ from every candidate location.

Consider first $I^r_{d_r}$. Since $d_r$ is positioned at $z^*$, she has the
ranking of $c_0$, and $c_0\in\mathcal R_r$ implies that $F_r$ is placed at
$c_0$. The other facility must be placed at a different location, so each of
the $k$ agents of $N_{3-r}$ at $c_0$ incurs cost at least
$d(c_0,c_1)=2+\varepsilon$. Each of the $k-1$ agents of $N_r$ at $c_2$
incurs cost $d(c_2,c_0)=2+2\varepsilon$, while $d_r$ incurs cost
$d(z^*,c_0)=1$. Therefore,
\[
\SC(M(I^r_{d_r})\mid I^r_{d_r})\ge4k-1.
\]

Next, consider $I^{3-r}_{d_{3-r}}$. Again, $d_{3-r}$ is positioned at $z^*$
and has the ranking of $c_0$. Since $c_0\notin\mathcal R_{3-r}$ and
$c_1\in\mathcal R_{3-r}$, the mechanism places $F_{3-r}$ at $c_1$. Each of
the $k-1$ agents of $N_{3-r}$ at $c_2$ incurs cost
$d(c_2,c_1)=2+3\varepsilon$, while $d_{3-r}$ incurs cost
$d(z^*,c_1)=1+\varepsilon$. Therefore,
\[
\SC(M(I^{3-r}_{d_{3-r}})\mid I^{3-r}_{d_{3-r}})\ge2k-1.
\]

Choose an instance uniformly at random from $\mathcal I$. Besides the two
instances above, each of the remaining $2k-2$ instances has social cost at
least $1$. Therefore,
\[
\E_{I\sim\mathcal I}[\SC(M(I)\mid I)]
\ge\frac{(4k-1)+(2k-1)+(2k-2)}{2k}
=4-\frac2k.
\]
This bound holds for every deterministic mechanism in the support, while
$\mathcal I$ is independent of that mechanism. Averaging over the random
choice of the deterministic mechanism, some instance in $\mathcal I$ has
expected social cost at least $4-2/k$. Since its optimal social cost is at
most $1+2\varepsilon$, its approximation ratio is at least
\[
\frac{4-2/k}{1+2\varepsilon}=\frac{4-4/n}{1+2\varepsilon}.
\]
Choosing $\varepsilon$ sufficiently small completes the proof.
\end{proof}

\begin{corollary}\label{cor:metric-usp-sc}
For even $n$, the best possible approximation ratio of universally
strategyproof mechanisms in terms of the social cost on general metric spaces
lies in $[4-4/n,4]$, and tends to $4$ as $n$ grows.
\end{corollary}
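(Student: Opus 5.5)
The corollary follows directly from two earlier results. \cref{thm:mixed-dictator-usp} and \cref{thm:metric-usp-sc4} give the upper end of the interval. \cref{thm:usp-sc-lb4} gives the lower end.

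\emph{Upper bound.} \cref{thm:mixed-dictator-usp} shows that \Mech{Mixed-Dictator} is universally strategyproof, and \cref{thm:metric-usp-sc4} shows that its approximation ratio is at most $4$ whenever $N_1$ and $N_2$ are both non-empty. The remaining instances have $N_1=\varnothing$ or $N_2=\varnothing$. There I use the reduction to the single-facility setting noted before the mechanism: run \Mech{Random Dictatorship} for the approved facility, and place the unapproved facility at the sampled agent's second-closest candidate location. Each realization is a dictatorship over the only facility that affects any agent's cost, so it is strategyproof. By \citep{FFG}, its ratio is at most $3\le 4$. The mixture is chosen according to the publicly known preference profile, not the reported positions, so the combined mechanism is still a position-independent distribution over deterministic strategyproof mechanisms. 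Its ratio is therefore at most $4$ on every instance.

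\emph{Lower bound.} For even $n=2k$, \cref{thm:usp-sc-lb4} shows that every universally strategyproof mechanism has ratio at least $4-4/n-\delta$ for every $\delta>0$. Taking the supremum over $\delta$, the best possible ratio is at least $4-4/n$.

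Together these place the best possible ratio in $[4-4/n,4]$. Since $4-4/n\to 4$ as $n\to\infty$, the ratio tends to $4$. The only step that needs any care is stating the degenerate-case patch, so that the upper bound covers all instances rather than just those in which both facilities are approved.
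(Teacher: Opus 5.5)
Your proposal is correct and follows the same route as the paper. The upper bound comes from \cref{thm:mixed-dictator-usp} and \cref{thm:metric-usp-sc4}, with the case $N_1=\varnothing$ or $N_2=\varnothing$ handled by \Mech{Random Dictatorship} (ratio at most $3$), exactly as the paper notes before defining \Mech{Mixed-Dictator}; placing the unapproved facility at the dictator's second-closest location is a harmless detail the paper leaves implicit, and the lower bound comes from \cref{thm:usp-sc-lb4} by letting $\delta\to 0$.
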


\subsection{Max Cost}\label{subsec:metric-mc}

\subsubsection{Deterministic Mechanisms}

We first consider deterministic mechanisms for the max cost and establish a tight bound of $5$. The upper bound is achieved by \Mech{Fixed-Dictator}.

\begin{theorem}\label{thm:gmcupper}
On general metric spaces, the approximation ratio of \Mech{Fixed-Dictator} in terms of the max cost is at most $5$.
\end{theorem}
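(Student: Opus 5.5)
We will bound the max cost of \Mech{Fixed-Dictator} separately in each of its three cases, following the structure of the proof of \cref{thm:gscupper}. Let $\opt=\MC(\ov)$, so that $\cost_a(\ov)\le\opt$ for every agent $a$. In particular, $d(a,o_t)\le\opt$ whenever $a\in N_t$. Throughout, we will use the shift inequality \eqref{eq:shift}, $d(a,w_t)\le d(a,o_t)+d(w_t,o_t)$. The aim in each case is to bound how far each $w_t$ is displaced from $o_t$ (or, in one subcase, from the fixed agent $j$) in terms of $\opt$.

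\emph{Case 1 ($N_1\cap N_2\neq\varnothing$).} As in the social-cost analysis, $\{w_1,w_2\}$ minimizes $i$'s cost among pairs of distinct candidate locations. Hence
\[
d(w_1,o_1)+d(w_2,o_2)\le\cost_i(\w)+\cost_i(\ov)\le2\cost_i(\ov)\le2\opt .
\]
Every agent $a$ then satisfies $\cost_a(\w)\le\cost_a(\ov)+d(w_1,o_1)+d(w_2,o_2)\le3\opt$, regardless of her preference type.

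\emph{Case 3 (only $F_k$ approved).} We have $d(w_k,o_k)\le d(i,w_k)+d(i,o_k)\le2d(i,o_k)\le2\opt$. So every agent has cost at most $3\opt$.

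\emph{Case 2 (singleton instance, both facilities approved).} Again $d(w_1,o_1)\le2d(i,o_1)\le2\opt$, so every $a\in N_1$ has cost at most $3\opt$. If $w_1\ne o_2$, then $o_2$ is available when $F_2$ is placed. This gives $d(j,w_2)\le d(j,o_2)$, hence $d(w_2,o_2)\le2\opt$, and every $a\in N_2$ has cost at most $3\opt$.

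The remaining subcase $w_1=o_2$ is where we expect the main loss, and it is what forces the constant $5$. Since $w_1=t(i)=o_2$, we have $d(i,o_2)\le d(i,o_1)\le\opt$, and therefore $d(o_1,o_2)\le2\opt$. Since $o_1\neq o_2=w_1$, the location $o_1$ remains available, so
\[
d(j,w_2)\le d(j,o_1)\le d(j,o_2)+d(o_1,o_2)\le3\opt .
\]
Then, for any $a\in N_2$, we route through $o_2$ and $j$:
\[
d(a,w_2)\le d(a,o_2)+d(o_2,j)+d(j,w_2)\le\opt+\opt+3\opt=5\opt .
\]
Combining the three cases gives $\MC(\w)\le5\opt$. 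Only this collision subcase reaches $5$; all other cases give $3$.
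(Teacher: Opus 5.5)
Your proposal is correct and follows essentially the same approach as the paper: the same three-case split, the same bound $d(w_1,o_1)+d(w_2,o_2)\le 2\MC(\ov)$ in Case 1, and the same collision subcase $w_1=o_2$ in Case 2, where $o_1$ stays available and the bound reaches $5$. Your route $d(a,w_2)\le d(a,o_2)+d(o_2,j)+d(j,w_2)$ is just the paper's two-step bound ($d(w_2,o_2)\le 4\MC(\ov)$, then the shift inequality) written as one chain.
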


\begin{proof}
Let $\ov=(o_1,o_2)$ be an optimal solution. We consider the three cases of the mechanism separately. 

\medskip
\noindent
\emph{Case 1: $N_1\cap N_2\neq\varnothing$.}
Let $i$ be the fixed agent that determines the outcome. Since $\{w_1,w_2\}=\{t(i),s(i)\}$ minimizes the individual cost of $i$ among all pairs of distinct candidate locations,
\[
\cost_i(\w)\leq \cost_i(\ov)\leq \MC(\ov).
\]
By the triangle inequality,
\[
\begin{aligned}
d(w_1,o_1)+d(w_2,o_2)
&\leq d(i,w_1)+d(i,o_1)+d(i,w_2)+d(i,o_2)\\
&=\cost_i(\w)+\cost_i(\ov)\\
&\leq 2\cdot\MC(\ov).
\end{aligned}
\]
We now consider the different types of agents:
\begin{itemize}
    \item For any agent $a\in N_1\setminus N_2$,
\[
\cost_a(\w)
=d(a,w_1)
\leq d(a,o_1)+d(w_1,o_1)
\leq 3\cdot\MC(\ov).
\]

\item For any agent $a\in N_2\setminus N_1$, the same argument as above implies that $\cost_a(\w)\leq 3\cdot\MC(\ov)$.

\item For any agent $a\in N_1\cap N_2$,
\[
\begin{aligned}
\cost_a(\w)
&=d(a,w_1)+d(a,w_2)\\
&\leq d(a,o_1)+d(a,o_2)+d(w_1,o_1)+d(w_2,o_2)\\
&\leq \cost_a(\ov)+2\cdot\MC(\ov)\\
&\leq 3\cdot\MC(\ov).
\end{aligned}
\]
\end{itemize}
Hence, $\MC(\w)\leq 3\cdot\MC(\ov)$ in this case.

\medskip
\noindent
\emph{Case 2: $N_1\cap N_2=\varnothing$, and $N_1$ and $N_2$ are both non-empty.}
Let $i\in N_1$ and $j\in N_2$ be the fixed agents that determine the outcome. Since $w_1=t(i)$,
$d(i,w_1)\leq d(i,o_1)$,
and therefore
\[
d(w_1,o_1)
\leq d(i,w_1)+d(i,o_1)
\leq 2\cdot d(i,o_1)
\leq 2\cdot\MC(\ov).
\]
We distinguish between two subcases depending on whether $w_1=o_2$.
\begin{itemize}
\item $w_1\neq o_2$. Then $o_2$ is available when $F_2$ is placed. Since $w_2$ is the closest available candidate location to $j$,
$d(j,w_2)\leq d(j,o_2)$,
and hence
\[
d(w_2,o_2)
\leq d(j,w_2)+d(j,o_2)
\leq 2\cdot d(j,o_2)
\leq 2\cdot\MC(\ov).
\]
Therefore, for every $a\in N_1$,
\[
\cost_a(\w)
=d(a,w_1)
\leq d(a,o_1)+d(w_1,o_1)
\leq 3\cdot\MC(\ov),
\]
and, similarly, for every $a\in N_2$,
\[
\cost_a(\w)
=d(a,w_2)
\leq d(a,o_2)+d(w_2,o_2)
\leq 3\cdot\MC(\ov).
\]
Thus, $\MC(\w)\leq 3\cdot\MC(\ov)$.

\item $w_1=o_2$. Since $w_1=t(i)$, agent $i$ is weakly closer to $o_2$ than to $o_1$, and therefore
\[
d(o_1,o_2)
\leq d(i,o_1)+d(i,o_2)
\leq 2\cdot d(i,o_1)
\leq 2\cdot\MC(\ov).
\]
For every agent $a\in N_1$,
\[
\cost_a(\w)
=d(a,o_2)
\leq d(a,o_1)+d(o_1,o_2)
\leq 3\cdot\MC(\ov).
\]

Since $o_1\neq o_2=w_1$, the candidate location $o_1$ is available when $F_2$ is placed. Hence,
\[
d(j,w_2)
\leq d(j,o_1)
\leq d(j,o_2)+d(o_1,o_2)
\leq 3\cdot\MC(\ov).
\]
By another application of the triangle inequality,
\[
d(w_2,o_2)
\leq d(j,w_2)+d(j,o_2)
\leq 4\cdot\MC(\ov).
\]
Thus, for every agent $a\in N_2$,
\[
\cost_a(\w)
=d(a,w_2)
\leq d(a,o_2)+d(o_2,w_2)
\leq 5\cdot\MC(\ov).
\]
Therefore, $\MC(\w)\leq 5\cdot\MC(\ov)$ in this subcase.
\end{itemize}

\medskip
\noindent
\emph{Case 3: all agents approve only $F_k$ for some $k\in[2]$.}
Only the location of $F_k$ affects the max cost. Let $i$ be the fixed agent that determines the outcome. Since $w_k=t(i)$, $d(i,w_k)\leq d(i,o_k)$, and thus
\[
d(w_k,o_k)
\leq d(i,w_k)+d(i,o_k)
\leq 2\cdot d(i,o_k)
\leq 2\cdot\MC(\ov).
\]
Consequently, for every agent $a$,
\[
\cost_a(\w)
=d(a,w_k)
\leq d(a,o_k)+d(w_k,o_k)
\leq 3\cdot\MC(\ov).
\]
Hence, $\MC(\w)\leq 3\cdot\MC(\ov)$.

The approximation ratio is at most $5$ in all cases, and the proof is complete.
\end{proof}

We next show that the upper bound of $5$ is tight. 

\begin{theorem}\label{thm:gmclower}
On general metric spaces, the approximation ratio of any deterministic strategyproof mechanism in terms of the max cost is at least $5-\delta$, for any $\delta>0$.
\end{theorem}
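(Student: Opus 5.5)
The plan is to realize the tight subcase of \cref{thm:gmcupper}, where $w_1=o_2$. In that subcase the $F_1$-dictator takes the location that the $F_2$-agents need. The $F_2$-dictator then moves to a location at distance about $3$ from her and about $4$ from $o_2$. Another $F_2$-agent, on the far side of $o_2$, is left at distance about $5$. Throughout I use singleton instances with $N_1$ and $N_2$ both non-empty, say $|N_1|\ge1$ and $|N_2|\ge2$. I build a finite metric on a few candidate locations and special points, then apply \cref{lem:rankings} so that every strict ranking is induced. Strategyproofness plus a finite approximation ratio then yields dictators through \cref{lem:dictator}.

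The concrete gadget has candidates $a,b,c,e$.
\begin{itemize}
\item Agent $z_1$ (approving $F_1$) has $d(z_1,a)=1-\varepsilon$ and $d(z_1,b)=1$, so she ranks $a$ first and $b$ second.
\item Agent $z_2$ (approving $F_2$) has $d(z_2,a)=1$ and $d(z_2,c)=3-\varepsilon$, with $d(z_2,b)=3$, so she ranks $a\succ c\succ b$.
\item Agent $z_3$ (approving $F_2$) has $d(z_3,a)=1$ and $d(z_3,c)=5$.
\item The distances $d(a,b)=2$, $d(a,c)=4-\varepsilon$ and $d(b,c)$ are chosen consistently with the triangle inequality.
\item The extra candidate $e$ is placed far away and serves only to enlarge ranges.
\end{itemize}
With $F_1$ at $b$ and $F_2$ at $a$, the max cost is about $1$. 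If $F_1$ is at $a$ and $F_2$ is at $c$, agent $z_3$ pays about $5$. I will verify that all these distances, together with the points of \cref{lem:rankings}, form a metric, as in \cref{thm:gsclower}.

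The mechanism argument proceeds in three steps.
\begin{enumerate}
\item Fix the $N_2$-agents at a point whose unique optimal partner location is $e$. Unanimity-type instances with optimal max cost $0$ force $F_1$'s range to contain $a,b,c$. \cref{lem:dictator} then gives a dictator $d_1\in N_1$.
\item Put $d_1$ at $z_1$ and the rest of $N_1$ at $a$ as well (or at points ranking $a$ first), so that $F_1$ lands at $a$.
\item With $F_1$ fixed at $a$, the $F_2$-rule over $C\setminus\{a\}$ must have range containing $b,c,e$, again by zero-cost instances. This gives a dictator $d_2\in N_2$. Placing $d_2$ at $z_2$ and another $N_2$-agent at $z_3$ forces $F_2=c$, and the ratio $(5-O(\varepsilon))/1$ follows.
\end{enumerate}

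The main obstacle is coordinating the fixed positions across steps. The $F_1$-dictator is obtained with the $N_2$-agents at one position profile, but she must still dictate once the $N_2$-agents move to $z_2,z_3$. I intend to handle this by making the $F_1$-dictator argument for the actual final $N_2$ profile. The difficulty is that the unanimity instances needed for the range claim then lack cost $0$. If that fails, I would fall back on an approximation-forcing argument like \eqref{eq:line-lb-forcing}: choose instances where any location other than the intended one costs more than $5-\delta$ times optimal, which places $a$, $b$ and $c$ in the range anyway. A symmetric care is needed for $F_2$, whose range must be shown to contain at least three locations given $F_1=a$.
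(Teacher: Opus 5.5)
\textbf{There is a genuine gap at the point you flag yourself, and neither proposed repair closes it.}

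Your construction needs $F_1$ to be placed at $a$ in the final instance $(z_1\mid z_2,z_3)$, and nothing forces this.
\begin{itemize}
\item The dictator $d_1$ from Step~1 controls only the $F_1$-rule induced by the $N_2$-profile near $e$. Once the $N_2$-agents move to $z_2,z_3$, the induced $F_1$-rule, its range and its dictator may change arbitrarily. The $N_2$-agents approve only $F_2$, so their strategyproofness constrains only where $F_2$ goes, not $F_1$.
\item Redoing Step~1 with $N_2$ at $(z_2,z_3)$ fails, and approximation-forcing cannot put $a$ into the $F_1$-range. For any report $x$ of the $N_1$-agent, every solution with $F_1=a$ costs at least $3-\varepsilon$ because of $z_2$. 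Meanwhile $(b,a)$ costs at most $\max\{1,d(x,a)+2\}$. So $(b,a)$ is within a factor $(5-\varepsilon)/(3-\varepsilon)<2$ of every solution with $F_1=a$, and no $x$ forces $F_1=a$ against a ratio of $5-\delta$.
\item Indeed, a mechanism may simply return the optimal solution $(b,a)$ on your final instance.
\item Step~3 has the same defect. The $F_2$-range is taken over $N_2$-profiles with the $N_1$-profile held fixed, and along these profiles $F_1$ need not stay at $a$. So ``the $F_2$-rule over $C\setminus\{a\}$'' is not well defined. The range may contain $a$, in which case $d_2$ at $z_2$ simply gets $a$.
\end{itemize}

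\textbf{The missing idea} is to get the blocking from a collision, not from an $F_1$-dictator, and to keep the blocking group at one fixed profile throughout. The paper does this as follows.
\begin{itemize}
\item At the profile with all four agents at $h$ (top candidate $q$), the facilities are distinct, so one of them, without loss of generality $F_2$, is not at $q$.
\item Hold $N_1$ at $h$ from then on. Start from the zero-cost profiles with $N_1$ at $q$ and $N_2$ at $c\in\{g,z_1,z_2\}$. Strategyproofness of the $N_1$-agents keeps $F_1$ at $q$ as they move to $h$. Pairwise distances of at least $6-4\varepsilon$ among $g,z_1,z_2$ then force $F_2$ to $c$.
\item So \cref{lem:dictator} gives an $F_2$-dictator $j_d$. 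She ranks $q$ first at the all-at-$h$ profile, yet $F_2\neq q$ there, so $q$ is not in the range.
\item The final instance keeps $N_1$ at $h$, so the range and dictator are unchanged. $F_2$ goes to $z_d$, at distance about $5$ from the other $N_2$-agent, wherever $F_1$ ends up, while $(g,q)$ costs $1$.
\end{itemize}
The ``without loss of generality'' step requires a gadget symmetric in the two groups: two agents each, with positions usable by either group. Your asymmetric gadget with a single $N_1$-agent cannot handle the case where $F_1$ is the facility that misses the shared top location.

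A minor point: $d(z_3,c)=5$ violates the triangle inequality, given $d(z_3,a)=1$ and $d(a,c)=4-\varepsilon$.
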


\begin{proof}
Suppose towards a contradiction that there is a deterministic strategyproof mechanism $M$ with approximation ratio at most $\rho=5-\delta$ for some $\delta>0$. Choose
\[
0<\varepsilon<\min\{1/4,\delta/3\}.
\]
We start with a tree metric with candidate locations $C=\{q,g,z_1,z_2\}$ and three branches
\[
\begin{aligned}
q&\;\overset{1-\varepsilon}{\rule{1.5em}{0.4pt}}\;h
  \;\overset{1}{\rule{1.5em}{0.4pt}}\;g,\\
q&\;\overset{1}{\rule{1.5em}{0.4pt}}\;y_1
  \;\overset{3-2\varepsilon}{\rule{1.5em}{0.4pt}}\;z_1,\\
q&\;\overset{1}{\rule{1.5em}{0.4pt}}\;y_2
  \;\overset{3-3\varepsilon}{\rule{1.5em}{0.4pt}}\;z_2,
\end{aligned}
\]
where $h,y_1,y_2$ will be used as agent positions. All seven points induce strict rankings of $C$. By \cref{lem:rankings}, we extend the metric so that every strict ranking of $C$ is induced by some point, without changing the distances between the points above.

Let $N_1=\{i_1,i_2\}$ and $N_2=\{j_1,j_2\}$, with all agents having singleton preferences. We first identify a facility whose induced range we will show excludes $q$. Consider the profile in which all four agents are positioned at $h$. Since the facilities must be placed at distinct locations, at least one of them is not placed at $q$; without loss of generality, assume that this is $F_2$. We fix the agents of $N_1$ at $h$ and let $\mathcal R$ denote the range of the induced rule for $F_2$ as the agents of $N_2$ vary over positions inducing strict rankings of $C$.

We first show that $\{g,z_1,z_2\}\subseteq\mathcal R$. Fix $c\in\{g,z_1,z_2\}$, and consider the instance in which the agents of $N_1$ are positioned at $q$ and those of $N_2$ at $c$. The optimal max cost is $0$, so $M$ must return $(q,c)$. Move the agents of $N_1$ one at a time from $q$ to $h$. By strategyproofness, $F_1$ remains at $q$: otherwise, the moving agent at her new position $h$ could misreport $q$ and obtain her unique closest candidate location.

In the resulting instance, the solution $(q,c)$ has max cost $1-\varepsilon$. If $F_2$ were placed at a location other than $c$, then, since $q$ is occupied by $F_1$, the agents of $N_2$ would incur cost at least
\[
\min_{\substack{a,b\in\{g,z_1,z_2\}\\a\ne b}}d(a,b)
=
6-4\varepsilon
>
\rho(1-\varepsilon),
\]
contradicting the approximation guarantee. Hence, $F_2$ is placed at $c$, proving the range claim.

By \cref{lem:dictator}, there is a dictator $j_d\in N_2$ whose closest candidate location in $\mathcal R$ is selected by the induced rule. When all four agents are positioned at $h$, this dictator ranks $q$ first, but $F_2$ is not placed at $q$. Therefore, $q\notin\mathcal R$, and thus
\[
\mathcal R=\{g,z_1,z_2\}.
\]

Finally, keep the agents of $N_1$ at $h$ and position $j_r$ at $y_r$ for each $r\in\{1,2\}$. The closest candidate location in $\mathcal R$ to the dictator $j_d$ is $z_d$, so $F_2$ is placed there. The other agent of $N_2$ then incurs cost
\[
d(y_{3-d},z_d)\ge 5-3\varepsilon>\rho.
\]
However, the solution $(g,q)$ has max cost $1$, contradicting the approximation guarantee of $M$.
\end{proof}

By \cref{thm:fixed-dictator-sp,thm:gmcupper,thm:gmclower}, we obtain the following result.

\begin{corollary}
On general metric spaces, the best possible approximation ratio of deterministic strategyproof mechanisms in terms of the max cost is exactly $5$.
\end{corollary}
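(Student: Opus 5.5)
The corollary is a direct combination of three results already established, so the plan is to assemble them rather than prove anything new. For the upper bound, I will invoke \cref{thm:fixed-dictator-sp}, which shows that \Mech{Fixed-Dictator} is strategyproof on every instance. It holds regardless of the metric, since the proof only uses that the fixed agents obtain their closest available candidate locations. I then apply \cref{thm:gmcupper}, which bounds the max cost of \Mech{Fixed-Dictator} by $5\cdot\MC(\ov)$ in every case. Together these give a deterministic strategyproof mechanism with approximation ratio at most $5$ on general metric spaces, so the best possible ratio is at most $5$.

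For the lower bound, I will invoke \cref{thm:gmclower}: every deterministic strategyproof mechanism has approximation ratio at least $5-\delta$ for every $\delta>0$. The optimal ratio is an infimum over mechanisms of a supremum over instances. If some mechanism had ratio $\rho<5$, then taking $\delta=(5-\rho)/2$ in \cref{thm:gmclower} would give a contradiction. So every deterministic strategyproof mechanism has ratio at least $5$, and hence so does the infimum.

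Combining the two directions, the optimal ratio equals $5$, and it is attained by \Mech{Fixed-Dictator}. The only point needing care is conceptual rather than technical. The lower-bound instances of \cref{thm:gmclower} live in an extended metric from \cref{lem:rankings}, which is still a general metric space. The statement concerns the class of all metric spaces, so the two bounds refer to the same setting and no step is a genuine obstacle.
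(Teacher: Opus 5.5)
Your proposal is correct and matches the paper, which obtains this corollary directly by combining \cref{thm:fixed-dictator-sp,thm:gmcupper,thm:gmclower}. Your extra step turning the family of $5-\delta$ lower bounds into a lower bound of exactly $5$ (via $\delta=(5-\rho)/2$) is just the routine detail the paper leaves implicit.
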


\subsubsection{Universally Strategyproof Mechanisms}\label{subsec:metric-usp-mc}

We now consider universally strategyproof mechanisms for the max cost. We first present a mechanism with approximation ratio $19/5=3.8$, and then prove a lower bound of $15/4=3.75$, even for singleton instances.

\bigskip
\noindent
{\bf Upper bound.}
If $N_1\cap N_2\neq\varnothing$, or if exactly one of $N_1,N_2$ is non-empty, we run the corresponding case of \Mech{Fixed-Dictator}, whose approximation ratio is at most $3$ by Cases~1 and~3 in the proof of \cref{thm:gmcupper}. Hence, in the remainder, we consider singleton instances with $N_1,N_2\neq\varnothing$.

Fix an arbitrary agent $i_j\in N_j$ for every $j\in[2]$, independently of the reported positions. We randomize over the following two component mechanisms:
\begin{itemize}
\item \Mech{Serial-Dictator}. Choose $j\in[2]$ uniformly at random,
place $F_j$ at $t(i_j)$, and then place $F_{3-j}$ at the closest candidate
location to $i_{3-j}$ in $C\setminus\{t(i_j)\}$.

\item $\Mech{Guarded-Serial}_\theta$. Fix $\theta\in[1,4/3]$.
Choose $j\in[2]$ uniformly at random and place $F_j$ at $t(i_j)$. If
\[
d(i_j,s(i_j))>\frac{\theta}{2}\bigl(d(i_j,t(i_j))+d(i_j,s(i_j))\bigr),
\]
then place $F_{3-j}$ at the closest candidate location to $i_{3-j}$ in
$C\setminus\{t(i_j)\}$. Otherwise, place $F_{3-j}$ at the closest
candidate location to $i_{3-j}$ in
\[
\{s(i_j)\}\cup
\left\{
c\in C\setminus\{t(i_j)\}:
d(t(i_j),c)>(1+\theta)\bigl(d(i_j,t(i_j))+d(i_j,s(i_j))\bigr)
\right\}.
\]
\end{itemize}
We run \Mech{Serial-Dictator} with probability $4/5$. With the remaining probability $1/5$, we choose $\theta$ uniformly at random from $[1,4/3]$ and run $\Mech{Guarded-Serial}_\theta$. See \cref{mech:guarded-fixed-dictator}.

\begin{algorithm}[t]
\caption{\Mech{Guarded-Fixed-Dictator}}
\label{mech:guarded-fixed-dictator}
\KwIn{Reported positions of agents}
\KwOut{Facility locations $\w=(w_1,w_2)$}
\eIf{$N_1\cap N_2\neq\varnothing$, or exactly one of $N_1,N_2$ is non-empty}{
    Run the corresponding case of \Mech{Fixed-Dictator}\;
}{
    With probability $4/5$, run \Mech{Serial-Dictator}\;
    With probability $1/5$, sample $\theta$ uniformly from $[1,4/3]$ and run $\Mech{Guarded-Serial}_\theta$\;
}
\end{algorithm}

\begin{theorem}\label{thm:max-general-usp}
\Mech{Guarded-Fixed-Dictator} is universally strategyproof.
\end{theorem}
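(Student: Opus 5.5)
We need to show every deterministic realization is strategyproof, using the same style as the proof of \cref{thm:mixed-dictator-usp}. First, observe that the random choices of \Mech{Guarded-Fixed-Dictator} are independent of the reported positions: which branch runs depends only on the publicly known preference profile, and the coin selecting the component, the index $j\in[2]$, and $\theta\in[1,4/3]$ are all drawn independently of the reports. The agents $i_1,i_2$ are fixed in advance. If $N_1\cap N_2\neq\varnothing$ or only one facility is approved, the realization is a case of \Mech{Fixed-Dictator}, which is strategyproof by \cref{thm:fixed-dictator-sp}. So it suffices to fix a singleton instance with $N_1,N_2\neq\varnothing$, a component, an index $j$, and (for the guarded component) a value of $\theta$, and to show that the resulting deterministic mechanism is strategyproof.

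In each such realization only $i_j$ and $i_{3-j}$ influence the outcome, so all other agents trivially cannot benefit from misreporting. For $i_j$: she approves only $F_j$ (the instance is singleton), and $F_j$ is placed at $t(i_j)$, her closest candidate location, under her true report. Her report may change the set from which $F_{3-j}$ is chosen; in the guarded component this happens through the threshold test and through the set $\{s(i_j)\}\cup\{c:\dots\}$. But she does not care about $F_{3-j}$, so she has no profitable deviation.

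For $i_{3-j}$: she approves only $F_{3-j}$. The key point is that the candidate set $A$ from which $F_{3-j}$ is chosen depends only on the report of $i_j$ and on $\theta$, never on her own report. This holds for $C\setminus\{t(i_j)\}$ and for the guarded set alike, because the test $d(i_j,s(i_j))>\tfrac{\theta}{2}(\cdots)$ and the distance threshold $(1+\theta)(d(i_j,t(i_j))+d(i_j,s(i_j)))$ involve only $i_j$'s reported position. We should also check that $A$ is nonempty, which holds since $s(i_j)\in A$ in the guarded case and $|C|\ge 2$ otherwise. The mechanism then places $F_{3-j}$ at her closest location in $A$, which minimizes her cost over everything she could induce. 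Hence truthful reporting is optimal, and with the fixed tie-breaking order this argument is exact.

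The only step needing care is confirming that the guard set is a function of $i_j$'s report alone and contains $s(i_j)\ne t(i_j)$, so that the placement is feasible with distinct locations. Once these are verified, the argument is routine.
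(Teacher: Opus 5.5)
Your proposal is correct and follows the paper's proof essentially step by step. You fix a realization (component, facility placed first, and $\theta$), observe that the first fixed agent receives her closest candidate location and is indifferent to the other facility, and observe that the second fixed agent faces a candidate set determined independently of her report and is given her closest available location; your extra checks (report-independence of the randomization, and that the guarded set is non-empty and excludes $t(i_j)$) are details the paper leaves implicit.
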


\begin{proof}
We show that every deterministic realization of \Mech{Guarded-Fixed-Dictator} is strategyproof. Fix the component mechanism, the facility placed first, and, for $\Mech{Guarded-Serial}_\theta$, the value of $\theta$.
The first branch uses \Mech{Fixed-Dictator}, which is strategyproof by \cref{thm:fixed-dictator-sp}. In the singleton branch, the first fixed agent approves only the facility placed first, which is located at her closest candidate location. Although her report may affect the candidate set available in the second stage, she does not care about the location of the other facility. For the second fixed agent, the location of the first facility, and hence the set of available candidate locations, is independent of her report. The mechanism places her approved facility at her closest available candidate location. Thus, neither fixed agent can benefit from misreporting, while all other agents are ignored.
\end{proof}

We next bound the approximation ratio. Let $\ov=(o_1,o_2)$ be an optimal solution. If $\MC(\ov)=0$, then $i_j$ is positioned at $o_j=t(i_j)$ for each $j\in[2]$. Since $\theta<2$, the unrestricted candidate set is always used, and both \Mech{Serial-Dictator} and $\Mech{Guarded-Serial}_\theta$ return an optimal solution.
Hence, we may assume that $\MC(\ov)>0$ and rescale all distances so that $\MC(\ov)=1$.
Then, for each $j\in[2]$, 
\[
d(i_j,t(i_j))\le d(i_j,o_j)\le1
\]
and
\[
d(t(i_j),s(i_j))\le d(i_j,t(i_j))+d(i_j,s(i_j)).
\]

We use the following bound in the analysis.

\begin{lemma}\label{lem:max-removal-probability}
Fix $j\in[2]$ and $c\in C$. Let $p$ be the probability, over the uniform
choice of $\theta\in[1,4/3]$, that the restricted candidate set is used
when $F_j$ is placed first and does not contain $c$. If $p>0$, then
\[
d(i_j,t(i_j))+d(i_j,s(i_j))\le3-p, \ 
d(i_j,s(i_j))\le2-p, \text{ and } \
d(t(i_j),c)\le7-3p.
\]
\end{lemma}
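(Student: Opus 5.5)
The plan is to turn the probability $p$ into explicit thresholds on $\theta$ and then convert these into distance bounds using the normalization $d(i_j,t(i_j))\le 1$. Write $t=t(i_j)$, $s=s(i_j)$ and $T=d(i_j,t)+d(i_j,s)$, so that $T>0$ whenever $p>0$. Let $\theta_0=2d(i_j,s)/T$; note $\theta_0\ge1$ since $d(i_j,s)\ge d(i_j,t)$. By the definition of $\Mech{Guarded-Serial}_\theta$, the restricted candidate set is used exactly when $\theta\ge\theta_0$.

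\textbf{Reading off the $\theta$-thresholds.} The restricted set always contains $s$, so the event in the lemma can only occur when $c\neq s$. If $c=t$, the third inequality is trivial. If $c\notin\{s,t\}$, then $c$ is excluded from the restricted set exactly when $d(t,c)\le(1+\theta)T$, that is, when $\theta\ge d(t,c)/T-1$. Hence the event occurs exactly for $\theta\in[\max\{\theta_0,\,d(t,c)/T-1\},\,4/3]$. Since $\theta$ is uniform on an interval of length $1/3$, $p>0$ gives
\[
\max\Bigl\{\theta_0,\ \tfrac{d(t,c)}{T}-1\Bigr\}\le\frac{4-p}{3}.
\]
In the case $c=t$, the restricted-set event alone gives the same bound with $\theta_0$ only.

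\textbf{Bounding $d(i_j,s)$ and $T$.} From $\theta_0\le(4-p)/3$ we get $6d(i_j,s)\le(4-p)\bigl(d(i_j,t)+d(i_j,s)\bigr)$, so $(2+p)\,d(i_j,s)\le(4-p)\,d(i_j,t)\le 4-p$. This yields
\[
d(i_j,s)\le\frac{4-p}{2+p},\qquad T\le 1+\frac{4-p}{2+p}=\frac{6}{2+p}.
\]
Both target bounds then reduce to elementary polynomial inequalities that hold because $0<p\le1$:
\begin{itemize}
\item $(4-p)/(2+p)\le 2-p$ is equivalent to $p^2\le p$;
\item $6/(2+p)\le 3-p$ is equivalent to $6\le 6+p-p^2$.
\end{itemize}

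\textbf{Bounding $d(t,c)$.} From $d(t,c)\le\bigl(1+\tfrac{4-p}{3}\bigr)T=\tfrac{7-p}{3}\,T$ and $T\le 6/(2+p)$ we obtain
\[
d(t,c)\le\frac{2(7-p)}{2+p}.
\]
Comparing with $7-3p$ reduces to $14-2p\le 14+p-3p^2$, i.e.\ $3p^2\le 3p$, which again holds for $p\le1$.

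The main point needing care is the second step: the two conditions defining the event must be intersected correctly. The first bound must come only from the restricted-set condition, while the third uses both conditions together with the bound on $T$. The $c=s$ and $c=t$ cases should be disposed of first. Beyond that, only the $p\le1$ polynomial checks remain.
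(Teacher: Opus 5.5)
Your proposal is correct and follows essentially the same route as the paper. Both identify the event as $\theta$ lying in a terminal interval of length $p/3$ with left endpoint $(4-p)/3$, evaluate the two removal conditions there using $d(i_j,t(i_j))\le 1$ to obtain $d(i_j,s(i_j))\le (4-p)/(2+p)$, $d(i_j,t(i_j))+d(i_j,s(i_j))\le 6/(2+p)$ and $d(t(i_j),c)\le 2(7-p)/(2+p)$, and finish with the same elementary inequalities valid for $0<p\le 1$.
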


\begin{proof}
Since $s(i_j)$ is always retained, $p>0$ implies $c\neq s(i_j)$. The candidate location $c$ is removed exactly when
\[
2d(i_j,s(i_j)) \le\theta\bigl(d(i_j,t(i_j))+d(i_j,s(i_j))\bigr)
\]
and
\[
d(t(i_j),c) \le(1+\theta)\bigl(d(i_j,t(i_j))+d(i_j,s(i_j))\bigr).
\]
Both conditions become weaker as $\theta$ increases. Hence, if $c$ is removed with positive probability, it is removed exactly for $\theta\in[\ell,4/3]$ for some $\ell\in[1,4/3]$. Since $\theta$ is uniform on an interval of length $1/3$, $p=4-3\ell$, and thus $\ell=\frac{4-p}{3}$.
When $c=t(i_j)$, the second condition is automatic, so $p$ is simply the probability that the restricted candidate set is used.

At $\theta=\ell$, the first condition gives
\[
(2-\ell)d(i_j,s(i_j))
\le \ell d(i_j,t(i_j))
\le \ell,
\]
and therefore
\[
d(i_j,s(i_j))
\le \frac{\ell}{2-\ell}
= \frac{4-p}{2+p}
\le 2-p.
\]
Together with $d(i_j,t(i_j))\le1$, this yields
\[
d(i_j,t(i_j))+d(i_j,s(i_j))
\le \frac{2}{2-\ell}
= \frac6{2+p}
\le 3-p.
\]
Evaluating the second condition at $\theta=\ell$ and using the same bound gives us
\[
d(t(i_j),c)
\le \frac{2(1+\ell)}{2-\ell}
= \frac{2(7-p)}{2+p}
\le 7-3p. \qedhere
\]
\end{proof}

We are now ready to prove the upper bound. 

\begin{theorem}\label{thm:max-general-upper-19-5}
On general metric spaces, the approximation ratio of
\Mech{Guarded-Fixed-Dictator} in terms of the max cost is at most
$19/5$.
\end{theorem}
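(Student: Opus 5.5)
The plan is to fix an optimal solution $\ov$, rescale so that $\MC(\ov)=1$ as above, and bound the expected cost of every single agent by $19/5$. The expectation is over the choice of $j\in[2]$, of the component, and of $\theta$. The non-singleton branches are already handled by Cases~1 and~3 of \cref{thm:gmcupper}, so we only consider singleton instances with $N_1,N_2\neq\varnothing$. I would prove the stronger statement that, for each fixed order ($F_j$ placed first) and each agent $a$, the conditional expectation of $\cost_a(\w)$ over the component and $\theta$ is at most $19/5$. Averaging over $j$ then gives the theorem.

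\textbf{Easy cases.} First, every agent $a\in N_j$ pays at most $3$ regardless of the component. Indeed, $w_j=t(i_j)$ gives $d(w_j,o_j)\le 2d(i_j,o_j)\le 2$, so $\cost_a(\w)\le d(a,o_j)+2\le3$. Second, consider $a\in N_{3-j}$ with $t(i_j)\neq o_{3-j}$. Then $o_{3-j}$ is available in the unrestricted set, so the unrestricted choice $c^*$ satisfies $d(o_{3-j},c^*)\le 2$, and $a$ pays at most $3$. If the restricted set is used instead, the chosen location is either $s(i_j)$ or a location far from $t(i_j)$. For these I would bound the distance to $o_{3-j}$ using $d(i_{3-j},\cdot)\le$ the distance from $i_{3-j}$ to the best retained location, together with \cref{lem:max-removal-probability}. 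I expect this case to stay below $19/5$ with room to spare, but it has to be checked.

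\textbf{Critical case.} The main case is $t(i_j)=o_{3-j}$ and $a\in N_{3-j}$. Let $D=d(o_1,o_2)\le d(i_j,t(i_j))+d(i_j,o_j)\le 2$. Under the unrestricted set, $i_{3-j}$ moves to its closest location in $C\setminus\{t(i_j)\}$, call it $c^*$. We have $d(i_{3-j},c^*)\le 1+D$, hence $d(o_{3-j},c^*)\le 2+D$, and $a$ pays at most $3+D\le 5$. The guard is designed to cut this off. Let $p$ be the probability that $c^*$ is removed, i.e.\ that the restricted set is used and $c^*\notin$ it. By \cref{lem:max-removal-probability} applied with $c=c^*$, the unrestricted outcome (which is also the outcome with probability $1-p$ inside \Mech{Guarded-Serial}) has $d(t(i_j),c^*)\le 7-3p$. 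Combining this with $d(o_{3-j},c^*)\le 4$ gives $\cost_a\le 1+\min\{4,7-3p\}$. With probability $p$ the realized location $w$ is either $s(i_j)$ or some $c'$ with $d(i_{3-j},c')\le d(i_{3-j},s(i_j))$. Using $d(t(i_j),s(i_j))\le d(i_j,t(i_j))+d(i_j,s(i_j))\le 3-p$ from the lemma, and $d(i_{3-j},t(i_j))\le1$, I get $d(o_{3-j},w)\le 2\,d(i_{3-j},o_{3-j})+d(t(i_j),s(i_j))$, so $\cost_a\le 3+(3-p)$ in the crude form. This must be sharpened, e.g.\ by using $d(i_j,s(i_j))\le 2-p$ and $d(i_j,t(i_j))\le d(i_j,o_j)$ more carefully.

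The final step is to combine the two components, \Mech{Serial-Dictator} with weight $4/5$ and the guarded component with weight $1/5$. This yields a bound of the form $\tfrac45 U+\tfrac15\bigl[(1-p)U+pR(p)\bigr]$, where the unrestricted bound is $U=\min\{5,8-3p\}$ and $R(p)$ is the restricted-case bound. I then maximize this over $p\in[0,1]$ and aim to show the maximum is exactly $19/5$; note that $p=0$ needs $U\le 19/5$, which must come from the refined geometry. The main obstacle is exactly this step: getting a restricted-case bound $R(p)$ tight enough, and correctly coupling $U$ and $R$ through the single parameter $p$ (equivalently the threshold $\ell=(4-p)/3$), so that the one-variable optimization closes at $19/5$. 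If the crude bounds fail, I would first try bounding $d(t(i_j),c^*)$ directly through $D$ and $S=d(i_j,t(i_j))+d(i_j,s(i_j))$ and split by whether $c^*=s(i_j)$.
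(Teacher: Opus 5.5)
\emph{The per-order reduction fails.} Your claim that ``for each fixed order and each agent $a$, the conditional expectation of $\cost_a(\w)$ is at most $19/5$'' is false. It would also not imply the theorem. A bound on $\max_a\E[\cost_a(\w)]$ is weaker, not stronger, than a bound on $\E[\MC(\w)]=\E[\max_a\cost_a(\w)]$. That part could be patched, since your estimates pass through $d(o_{3-j},w_{3-j})$ and are uniform in $a$.

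The real problem is that the expectation conditional on the order can exceed $19/5$. \Mech{Serial-Dictator} ignores $\theta$, has weight $4/5$, and can cost nearly $5$ in one order. Take a tree with center $o_2$ and three branches: a path $o_2,i_1,o_1$ with edge lengths $1-\eta$ and $1$; a path $o_2,i_2,c^*$ with edge lengths $1$ and $3-2\eta$; and an edge $o_2a$ of length $1$. Let $C=\{o_1,o_2,c^*\}$, $N_1=\{i_1\}$ and $N_2=\{i_2,a\}$. Then $\MC(\ov)=1$, attained by $(o_1,o_2)$. With $F_1$ first, \Mech{Serial-Dictator} puts $F_1$ at $t(i_1)=o_2$ and $F_2$ at $c^*$, since $d(i_2,c^*)=3-2\eta<3-\eta=d(i_2,o_1)$. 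So $a$ pays $5-2\eta$, and $\E[\cost_a(\w)\mid F_1\text{ first}]\ge\frac45(5-2\eta)>\frac{19}{5}$ for $\eta<1/8$. Your own $U=\min\{5,8-3p\}$ equals $5$ for every $p\in[0,1]$, so $\frac45U=4$ already. No sharpening of $R(p)$ can close the optimization.

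Your ``easy case'' $t(i_j)\neq o_{3-j}$ also fails per order. Take the tree with edge $Ti_1$ of length $1-\eta$; edges $i_1o_1$, $o_1b$, $o_2i_2$, $o_2a$ of length $1$; edge $To_2$ of length $4$; and edge $i_2w$ of length $7-2\eta$. Let $C=\{T,o_1,o_2,w\}$, $N_1=\{i_1,b\}$ and $N_2=\{i_2,a\}$. Then $t(i_1)=T$ and $\MC(\ov)=1$. With $F_1$ first, \Mech{Serial-Dictator} costs $3-\eta$ (agent $b$). The guard removes $o_2$ with probability tending to $1$; then $i_2$ picks $w$ and $a$ pays $9-2\eta$. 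The conditional expected max cost therefore tends to $21/5$, even though every agent's conditional expected cost is at most $3$.

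\emph{The missing idea: couple the two orders inside the argument.} Averaging over which facility is placed first cannot be left to the end. Since $o_1\neq o_2$, \Mech{Serial-Dictator} can exceed $3$ only when $t(i_1)=t(i_2)=o_{3-j}$, and this holds for at most one $j$. In the other order $o_{3-j}$ stays available, and the cost is at most $3$. The paper bounds $4\E[\MC(\Mech{Serial-Dictator})]+\E[\MC(\Mech{Guarded-Serial}_\theta)]$ by $19$, with both expectations also taken over the order.

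If $t(i_1)=t(i_2)=o_2$, the paper splits on the probability $q$ that the guard fires in the good order ($F_2$ first).
If $q>0$, \cref{lem:max-removal-probability} gives $d(i_2,s(i_2))\le 2-q$. Since $s(i_2)$ remains available, this caps \Mech{Serial-Dictator} at $4$ in the bad order, yielding $4\cdot\frac72+5=19$.
If $q=0$, the good order has no excess. Let $S=d(i_1,t(i_1))+d(i_1,s(i_1))$. The guard in the bad order gives excess at most $3(2-S)$, against $\E[\MC(\Mech{Serial-Dictator})]\le 3+S/2$, yielding $18+S/2\le 19$.

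Otherwise \Mech{Serial-Dictator} costs at most $3$ in both orders. Write $E_j$ for the expected excess of the guarded component over $3$ when $F_j$ is placed first. One needs $E_1+E_2\le 8$, although a single $E_j$ can be close to $6$, as in the second example. The paper obtains this as follows. Let $p_j$ be the probability that the guard removes $s(i_{3-j})$ when $F_j$ is first. Applying \cref{lem:max-removal-probability} with the roles of the two orders exchanged gives $E_1+E_2\le 8-6(1-p_1)(1-p_2)$.
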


\begin{proof}
Consider a singleton instance with $N_1,N_2\neq\varnothing$ such that $\MC(\ov)=1$.
We first prove a bound that will be used for both component mechanisms. Suppose that $F_j$ is placed first at $t(i_j)$ and $F_{3-j}$ is subsequently placed at $w_{3-j}$. Every agent $a\in N_j$ has cost at most $3$, since
\[
d(a,t(i_j))
\le d(a,o_j)+d(o_j,i_j)+d(i_j,t(i_j))
\le3.
\]
If a candidate location $c$ remains available to the second fixed agent, then $d(i_{3-j},w_{3-j})\le d(i_{3-j},c)$
and hence, for every $a\in N_{3-j}$,
\[
d(a,w_{3-j})
\le d(a,o_{3-j})+d(o_{3-j},i_{3-j})+d(i_{3-j},c)
\le2+d(i_{3-j},c).
\]
Therefore,
\begin{equation}\label{eq:max-available-candidate}
\MC(\w)\le \max\{3,\,2+d(i_{3-j},c)\}.
\end{equation}
We now make some useful observations for the max cost of the component mechanisms. 
\begin{itemize}
\item \Mech{Serial-Dictator}. If either $t(i_{3-j})$ or $o_{3-j}$ remains available after placing the first facility at $t(i_j)$, then \eqref{eq:max-available-candidate} gives max cost at most $3$. Thus, the max cost can exceed $3$ only if both locations coincide with the unavailable candidate, that is,
\[
t(i_1)=t(i_2)=o_{3-j}.
\]
Since $o_1\neq o_2$, the conditions $t(i_1)=t(i_2)=o_2$ and $t(i_1)=t(i_2)=o_1$ cannot both hold. In particular, if $t(i_1)\neq t(i_2)$ or $t(i_1)=t(i_2)\notin\{o_1,o_2\}$, then the max cost of \Mech{Serial-Dictator} is at most $3$ in both orders.

\item $\Mech{Guarded-Serial}_\theta$. 
For $j\in[2]$, let
\[
E_j = \E\!\left[ \max\{0,\MC(\Mech{Guarded-Serial}_\theta)-3\} \,\middle|\, F_j\text{ is placed first} \right],
\]
where the expectation is over $\theta$.
Since each facility is placed first with probability $1/2$,
\begin{equation}\label{eq:max-excess-average}
\E[\MC(\Mech{Guarded-Serial}_\theta)]
\le 3+\frac{E_1+E_2}{2}.
\end{equation}
\end{itemize}
Since \Mech{Guarded-Fixed-Dictator} runs \Mech{Serial-Dictator} and $\Mech{Guarded-Serial}_\theta$ with probabilities $4/5$ and $1/5$, respectively, to prove the upper bound $19/5$, it suffices to show that
\[
4\E[\MC(\Mech{Serial-Dictator})]
+
\E[\MC(\Mech{Guarded-Serial}_\theta)]
\le19.
\]
The remaining analysis depends on whether \Mech{Serial-Dictator} can have max cost greater than $3$. As shown above, this is possible only when $t(i_1)=t(i_2)\in\{o_1,o_2\}$.

\medskip
\noindent
{\bf Case 1: $t(i_1)=t(i_2)\in\{o_1,o_2\}$.}
Without loss of generality, suppose that $t(i_1)=t(i_2)=o_2$.
Since $o_1\neq t(i_1)$ and $s(i_1)$ is the second-closest candidate location to $i_1$,
$d(i_1,s(i_1))\le d(i_1,o_1)\le1$ and 
$d(i_1,t(i_1))+d(i_1,s(i_1))\le2$.
When $F_1$ is placed first, $s(i_1)$ remains available for every realization of $\Mech{Guarded-Serial}_\theta$, and
\[
d(i_2,s(i_1))
\le
d(i_2,o_2)+d(o_2,s(i_1))
\le
1+d(i_1,t(i_1))+d(i_1,s(i_1)).
\]
Using $c=s(i_1)$ in \eqref{eq:max-available-candidate}, we obtain
\[
\MC(\Mech{Guarded-Serial}_\theta)
\le \max\{3,2+d(i_2,s(i_1))\} 
\le 3+d(i_1,t(i_1))+d(i_1,s(i_1)).
\]
Therefore, for every value of $\theta$, the excess over $3$ is  at most $d(i_1,t(i_1))+d(i_1,s(i_1))$. Taking expectation over $\theta$, we have 
\[
E_1 \le d(i_1,t(i_1))+d(i_1,s(i_1)) \le2.
\]

Let $q$ be the probability that the restricted candidate set is used when $F_2$ is placed first. Without the restriction, $o_1$ remains available and the max cost is at most $3$. With the restriction, $s(i_2)$ remains available, and, since
$o_2=t(i_1)=t(i_2)$,
\[
d(i_1,s(i_2))
\le d(i_1,o_2)+d(o_2,s(i_2))
\le 1+d(i_2,t(i_2))+d(i_2,s(i_2)).
\]
Therefore,
\[
E_2
\le q\bigl(d(i_2,t(i_2))+d(i_2,s(i_2))\bigr).
\]
The value of $q$ also determines how strongly the geometry constrains \Mech{Serial-Dictator}, so we consider the two possibilities separately.

\begin{itemize}
\item $q>0$. By \cref{lem:max-removal-probability} with $c=t(i_2)$, $d(i_2,t(i_2))+d(i_2,s(i_2))\le3-q$ and $d(i_2,s(i_2))\le2-q$. Thus, $E_2\le q(3-q)\le2$, and \eqref{eq:max-excess-average} gives
\[
\E[\MC(\Mech{Guarded-Serial}_\theta)] \le 3+\frac{2+2}{2} = 5.
\]
For \Mech{Serial-Dictator}, when $F_1$ is placed first, the unavailable candidate location is $t(i_1)=t(i_2)$, so $s(i_2)$ remains available. By \eqref{eq:max-available-candidate},
\[
\MC(\Mech{Serial-Dictator}) \le \max\{3,\,2+d(i_2,s(i_2))\} \le4.
\]
In the opposite order, $o_1$ remains available, and thus the max cost is at most $3$. Hence,
\[
\E[\MC(\Mech{Serial-Dictator})] \le \frac{4+3}{2} = \frac72.
\]
Therefore,
\[
4\E[\MC(\Mech{Serial-Dictator})] + \E[\MC(\Mech{Guarded-Serial}_\theta)] \le19.
\]

\item $q=0$. Hence, $E_2=0$.
Consider now the order in which $F_1$ is placed first. 
The restricted candidate set is used when
\[
d(i_1,s(i_1)) \leq \frac12 \theta\bigl(d(i_1,t(i_1))+d(i_1,s(i_1))\bigr).
\]
Since $d(i_1,s(i_1))\le 1$, this condition holds whenever $\theta\bigl(d(i_1,t(i_1))+d(i_1,s(i_1))\bigr) \ge 2$.
In this case, every retained candidate location $c\neq s(i_1)$ satisfies
\[
d(o_2,c)
> (1+\theta)\bigl(d(i_1,t(i_1))+d(i_1,s(i_1))\bigr)
\ge d(i_1,t(i_1))+d(i_1,s(i_1))+2.
\]
On the other hand,
\[
d(i_2,s(i_1))
\le d(i_2,o_2)+d(o_2,s(i_1))
\le 1+d(i_1,t(i_1))+d(i_1,s(i_1)),
\]
whereas
\[
d(i_2,c)
\ge d(o_2,c)-d(i_2,o_2)
> d(i_1,t(i_1))+d(i_1,s(i_1))+1.
\]
Thus, the second fixed agent chooses $s(i_1)$. Since
\[
d(o_2,s(i_1))
\le
d(i_1,t(i_1))+d(i_1,s(i_1))
\le2,
\]
the max cost is at most $3$, and thus the excess is zero whenever $\theta\bigl(d(i_1,t(i_1))+d(i_1,s(i_1))\bigr) \ge 2$. Otherwise, since $s(i_1)$ is available, \eqref{eq:max-available-candidate} and the bound above on $d(i_2,s(i_1))$ show that the excess is always at most
\[
d(i_1,t(i_1))+d(i_1,s(i_1)).
\]
Since
$0<d(i_1,t(i_1))+d(i_1,s(i_1))\le2$
and
$\Pr[\theta<x]\le3(x-1)$ for any $x\ge1$, 
\[
\Pr\!\left[ \theta\bigl(d(i_1,t(i_1))+d(i_1,s(i_1))\bigr)<2 \right]
\le 3\left( \frac{2}{d(i_1,t(i_1))+d(i_1,s(i_1))}-1 \right).
\]
Therefore,
\[
E_1 \le 3\bigl( 2-d(i_1,t(i_1))-d(i_1,s(i_1)) \bigr).
\]

For \Mech{Serial-Dictator}, when $F_1$ is placed first, $s(i_1)$ remains available. 
By \eqref{eq:max-available-candidate},
\[
\MC(\Mech{Serial-Dictator})
\le 3+d(i_1,t(i_1))+d(i_1,s(i_1)).
\]
In the opposite order, $o_1$ remains available, and thus the max cost is at most $3$. Hence,
\[
\E[\MC(\Mech{Serial-Dictator})]
\le 3+\frac{d(i_1,t(i_1))+d(i_1,s(i_1))}{2}.
\]
Together with \eqref{eq:max-excess-average} and $E_2=0$,
\begin{align*}
&4\E[\MC(\Mech{Serial-Dictator})]
+\E[\MC(\Mech{Guarded-Serial}_\theta)] \\
&\le
4\left( 3+\frac{d(i_1,t(i_1))+d(i_1,s(i_1))}{2} \right) +3
+\frac32\bigl( 2-d(i_1,t(i_1))-d(i_1,s(i_1)) \bigr)\\
&= 18+\frac{d(i_1,t(i_1))+d(i_1,s(i_1))}{2}\\
&\le19.
\end{align*}
\end{itemize}

\medskip
\noindent
{\bf Case 2: $t(i_1)\neq t(i_2)$ or $t(i_1)=t(i_2)\notin\{o_1,o_2\}$.}
In this case, \Mech{Serial-Dictator} has max cost at most $3$ in both orders. It therefore suffices to prove
\[
\E[\MC(\Mech{Guarded-Serial}_\theta)]\le7,
\]
or, by \eqref{eq:max-excess-average},
\[
E_1+E_2\le8.
\]

We first show that $E_j\le6$ for any $j \in [2]$. 
Let $q$ be the probability that the restricted candidate set is used and does not contain $o_{3-j}$. 
If $q=0$, the excess is $0$: either the unrestricted rule is used, in which case the outcome is that of \Mech{Serial-Dictator} in the same order, or $o_{3-j}$ remains available. Otherwise, when $q > 0$, the protected candidate location $s(i_j)$ remains available, and
\[
d(i_{3-j},s(i_j)) \le d(i_{3-j},o_{3-j}) +d(o_{3-j},t(i_j)) +d(t(i_j),s(i_j)).
\]
Hence,
\[
d(i_{3-j},s(i_j))-1
\le d(t(i_j),o_{3-j}) +d(i_j,t(i_j)) +d(i_j,s(i_j)).
\]
Using \cref{lem:max-removal-probability} with $c=o_{3-j}$, we have
\[
d(t(i_j),o_{3-j})\le7-3q
\]
and
\[
d(i_j,t(i_j))+d(i_j,s(i_j))\le3-q.
\]
Therefore,
\begin{equation}\label{eq:max-uniform-excess}
E_j \le q(10-4q) \le 6.
\end{equation}

If either $E_1\le2$ or $E_2\le2$, then we immediately have that $E_1+E_2\le8$. 
Hence, assume $E_1>2$ and $E_2>2$. 
In particular, the restricted candidate set is used with positive probability in both orders. 
By its activation condition,
\[
d(i_j,s(i_j)) \le \frac{\theta}{2}\bigl(d(i_j,t(i_j))+d(i_j,s(i_j))\bigr).
\]
Since $\theta\le4/3$, we have $d(i_j,s(i_j))\le2d(i_j,t(i_j))\le2$ for each $j\in[2]$.

Now let $p_j$ be the probability that the restricted candidate set is used when $F_j$ is placed first and does not contain $s(i_{3-j})$. Outside this event, either the unrestricted rule is used, in which case the excess is zero, or $s(i_{3-j})$ remains available. In the latter case, \eqref{eq:max-available-candidate} bounds the excess by
\[
\max\{0,d(i_{3-j},s(i_{3-j}))-1\}.
\]
On the event defining $p_j$, the protected candidate location $s(i_j)$ remains available. Since
\[
d(i_{3-j},s(i_j))
\le d(i_{3-j},t(i_{3-j})) +d(t(i_{3-j}),s(i_j))
\le 1+d(t(i_{3-j}),s(i_j)),
\]
the excess is at most $d(t(i_{3-j}),s(i_j))$. Thus, 
\begin{equation}\label{eq:max-coupled-removals}
E_j \le \max\{0,d(i_{3-j},s(i_{3-j}))-1\} + p_jd(t(i_{3-j}),s(i_j)).
\end{equation}
Since $E_j>2$ and $d(i_{3-j},s(i_{3-j}))\le2$, the first term is at most $1$, and hence $p_j>0$ for both $j$.
Applying \cref{lem:max-removal-probability} in the opposite order, with first facility $F_{3-j}$ and candidate location $c=s(i_j)$, we obtain
\[
d(i_{3-j},s(i_{3-j})) \le 2-p_{3-j}
\]
and 
\[
d(t(i_{3-j}),s(i_j)) \le 7-3p_{3-j}.
\]
Substituting these bounds into \eqref{eq:max-coupled-removals},
\[
E_j \le 1-p_{3-j} + p_j(7-3p_{3-j}).
\]
Summing over the two orders,
\[
E_1+E_2
\le 2+6(p_1+p_2)-6p_1p_2
= 8-6(1-p_1)(1-p_2) 
\le8.
\]
This completes the proof.
\end{proof}

\bigskip
\noindent
{\bf Lower bound.}
We now show that no universally strategyproof mechanism can achieve an approximation ratio better than $15/4$, even on singleton instances. We construct a metric space consisting of $k$ clusters of candidate locations, together with two types of profiles for each group of agents. In every instance we use, the optimal max cost is close to $1$. We first show that the max cost is typically at least $3$. We then identify two families of instances: in the first, the max cost is close to $5$ with a probability that decreases with a parameter $\mu$, while in the second it is close to $7$ with a probability that increases with $\mu$. Combining the two families yields the bound.

\medskip
\noindent
\emph{The metric space.}
Fix an integer $k\geq5$ and let $\varepsilon=1/k$. We construct the metric space as follows.
\begin{itemize}
    \item For each $j\in[k]$, cluster $j$ contains the candidate locations $q_j$ and
    $B_j=\{b_{j1},\ldots,b_{jk}\}$, together with two points $a_j$ and $u_j$ that will be used as agent positions.

    \item Within cluster $j$, we connect 
    $a_j$ to $q_j$ by an edge of length $1$,
    $a_j$ to each $b_{ji}$ by an edge of length $1+\varepsilon$, 
    and $q_j$ to $u_j$ by an edge of length $1$.

    \item We attach a leaf at distance $1$ to $q_j$ and to each $b_{ji}$. These leaves will be used as the positions of exceptional agents.

    \item For every two distinct clusters $j$ and $h$, we connect $u_j$ to $q_h$ and to every location of $B_h$ by an edge of length $3-\varepsilon$.

    \item We add three auxiliary candidate locations, each attached to $a_1$ by an edge of length $k^4$. These locations are far from every agent position we use. Their only role is to guarantee that the ranges of the single-facility rules induced by a mechanism contain at least three locations, so that \cref{lem:dictator} applies.
\end{itemize}
The resulting weighted graph is connected and has strictly positive edge lengths, so its shortest-path distances define a metric.

To apply \cref{lem:dictator}, we need all agent positions used below to induce strict rankings of the candidate locations. We therefore break potential ties in the construction as follows. We replace every candidate location $c$ by a new candidate location attached to $c$ by an edge of length $\delta_c\in(0,\varepsilon^2)$, where all $\delta_c$ are distinct. The point $c$ is no longer a candidate location, and we use the name $c$ for the new one. Since all distances before this modification are multiples of $\varepsilon$, this breaks every tie without reversing any strict preference, and changes every relevant distance by less than $\varepsilon^2$. Finally, by \cref{lem:rankings}, we add points so that every strict ranking of the candidate set $C$ is induced by some point, without changing the distances between the points above.

\medskip
\noindent
\emph{The profiles.}
There are two groups of $k$ agents, where the agents of group $r$ approve only facility $F_r$, for each $r\in[2]$. For $j,i\in[k]$ and each choice of an exceptional agent in a group, we define the following profiles.

\begin{itemize}
    \item In $L_{ji}$, the $k-1$ non-exceptional agents are positioned at $a_j$, while the exceptional agent is positioned at the leaf attached to $b_{ji}$.

    \item In $H_j$, the $k-1$ non-exceptional agents are positioned at $u_j$, while the exceptional agent is positioned at the leaf attached to $q_j$.
\end{itemize}
Thus, the identity of the exceptional agent is regarded as part of the profile.
Ignoring the perturbations of less than $\varepsilon^2$, the max cost of a group when its facility is placed at each location is as follows, where $h\neq j$:
\[
\begin{array}{c|ccccc}
 &q_j&b_{ji}&B_j\setminus\{b_{ji}\}&q_h&B_h\\ \hline
L_{ji}&3+\varepsilon&1+\varepsilon&3+2\varepsilon&5-\varepsilon&7-2\varepsilon\\
H_j&1&3+\varepsilon&3+\varepsilon&5-\varepsilon&5-\varepsilon.
\end{array}
\]

Every auxiliary location gives max cost at least $k^4$. In particular, since the two facilities can be placed at $q_j$ and $b_{ji}$, or at $b_{ji}$ and $b_{h\ell}$, every instance of the form $(H_j,L_{ji})$, $(L_{ji},H_j)$, or $(L_{ji},L_{h\ell})$ with $h\neq j$ has optimal max cost at most $1+\varepsilon+\varepsilon^2$.

Since any mechanism with approximation ratio greater than $4$ already satisfies our desired lower bound, it suffices to consider mechanisms with approximation ratio at most $4$. We first show that such a mechanism induces dictatorial single-facility rules with high probability.

\begin{lemma}\label{lem:max-induced-dictatorship}
Consider a universally strategyproof mechanism with approximation ratio at most $4$. With probability $1-O(1/k)$ over the random choice of a deterministic mechanism, the following holds: whenever the agents of one group have profile $H_j$ or $L_{ji}$, for any $j,i\in[k]$ and any choice of the exceptional agent, the single-facility rule induced for the other group is dictatorial on a range that contains all three auxiliary locations.
\end{lemma}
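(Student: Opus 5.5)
The plan is to prove that, for each fixed profile of one group, the range of the induced rule for the other group contains all three auxiliary locations with very high probability. A union bound over all such fixed profiles then gives the $1-O(1/k)$ guarantee, and \cref{lem:dictator} turns range containment into dictatorship. Since the construction is symmetric in the two groups, I fix the group $r\in[2]$ whose profile is held fixed. That profile $P$ is one of $H_j$ or $L_{ji}$, together with a choice of the exceptional agent. I then study the rule for $F_{3-r}$ as the agents of group $3-r$ vary over positions inducing strict rankings of $C$.

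\textbf{Forcing instances.} Let $x$ be an auxiliary location. After the tie-breaking modification, $x$ is a new candidate location attached by an edge of length $\delta_x<\varepsilon^2$ to the original point, still denoted by its old name. I place all agents of group $3-r$ at this original point. That point induces a strict ranking of $C$ with $x$ first, since the perturbation lengths are distinct and every other candidate is at distance at least $k^4$. Call the resulting instance $I(P,x)$. In it, the agents of group $3-r$ incur cost $\delta_x<\varepsilon^2$ if $F_{3-r}$ is placed at $x$, and cost at least $k^4-O(1)$ otherwise. Group $r$ can be served at cost at most $1+\varepsilon^2$, by $q_j$ under $H_j$ and by $b_{ji}$ under $L_{ji}$, and neither of these coincides with $x$. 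Hence $\MC(\ov\mid I(P,x))\le 1+2\varepsilon^2$.

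\textbf{Probability bound.} The mechanism has approximation ratio at most $4$, so its expected max cost on $I(P,x)$ is at most $4(1+2\varepsilon^2)\le 5$. Every deterministic mechanism in the support that fails to place $F_{3-r}$ at $x$ on $I(P,x)$ has max cost at least $k^4-O(1)$. By Markov's inequality, the probability of choosing such a mechanism is $O(1/k^4)$. Whenever the chosen deterministic mechanism does place $F_{3-r}$ at $x$, the location $x$ lies in the range $\mathcal R$ of the induced rule for $F_{3-r}$ given $P$.

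\textbf{Union bound.} I count the triples $(r,P,x)$. There are $2$ choices of $r$ and $3$ auxiliary locations. For each group there are $k\cdot k$ profiles of type $H_j$ (choice of $j$ and of the exceptional agent) and $k^2\cdot k$ profiles of type $L_{ji}$. The total is therefore $O(k^3)$ triples. Summing the failure probabilities gives $O(k^3\cdot k^{-4})=O(1/k)$.

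\textbf{Conclusion.} With the remaining probability $1-O(1/k)$, every induced range contains all three auxiliary locations, so $|\mathcal R|\ge3$. Every strict ranking of $C$ is induced by some point, by the final application of \cref{lem:rankings}. \cref{lem:dictator} therefore applies to each such induced rule, with $G$ equal to group $3-r$, and shows that the rule is dictatorial on its range.

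The step I expect to need the most care is confirming that the forcing profiles are legitimate inputs for \cref{lem:dictator}. Concretely, the positions used for group $3-r$ must induce strict rankings. I also have to check that the optimal cost stays near $1$ even though group $r$'s preferred location must differ from $x$; this holds because the auxiliary locations are far from every cluster location. If positioning agents at the original point turns out to cause ties among the far candidates, my first fallback is to use a point that ranks $x$ first, supplied by \cref{lem:rankings}. For that point I would recheck that the cost gap between placing $F_{3-r}$ at $x$ and elsewhere remains $\Omega(k^4)$ relative to the optimum. This may require attaching that point near $x$ rather than at distance $D$.
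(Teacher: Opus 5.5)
Your proposal is correct and follows the paper's proof essentially step by step: a forcing instance with the other group at an auxiliary location, a Markov-type bound of $O(k^{-4})$ on the failure probability, a union bound over the $k^2(k+1)$ profiles per group and the three auxiliary locations, and then \cref{lem:dictator}. One small slip is harmless: under $L_{ji}$ the agents at $a_j$ are at distance about $1+\varepsilon$ from $b_{ji}$, so the optimum is at most $1+\varepsilon+\varepsilon^2$ (as in the paper) rather than $1+2\varepsilon^2$, and this does not affect the $O(k^{-4})$ bound.
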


\begin{proof}
Fix the profile $T$ of one group, including the identity of its exceptional agent, and consider the instance in which all agents of the other group are positioned at an auxiliary location $c$. The optimal max cost is at most $1+\varepsilon+\varepsilon^2$. If the range of the induced rule does not contain $c$, then the max cost is at least $k^4$. Since the approximation ratio is at most $4$, the probability of choosing a deterministic mechanism whose induced rule does not contain $c$ in its range is at most $4(1+\varepsilon+\varepsilon^2)/k^4=O(k^{-4})$.

There are $k^2(k+1)$ possible profiles $T$ for each group and three auxiliary locations. Hence, by the union bound, all the required ranges contain the three auxiliary locations with probability $1-O(1/k)$.

After the tie-breaking modification, every agent position we use induces a strict ranking of the candidate locations, and every strict ranking is induced by some point. Since each of the ranges above contains at least three locations, \cref{lem:dictator} shows that each induced rule is dictatorial on its range.
\end{proof}

\begin{theorem}\label{thm:max-general-lower-15-4}
On general metric spaces, the approximation ratio of every universally strategyproof mechanism in terms of the max cost is at least $15/4-\delta$, for any $\delta>0$, even on singleton instances.
\end{theorem}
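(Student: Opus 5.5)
The plan is an averaging argument in the style of Yao's principle, on the cluster metric just constructed. We may assume the approximation ratio is at most $4$. By \cref{lem:max-induced-dictatorship}, up to an $O(1/k)$ loss in probability, every deterministic mechanism $M$ in the support has the following property: whenever one group has profile $H_j$ or $L_{ji}$, the rule for the other group's facility is a dictatorship on a range containing the three auxiliary locations. We then define a distribution over instances of the forms $(H_j,L_{ji})$, $(L_{ji},H_j)$ and $(L_{ji},L_{h\ell})$. The cluster index $j$, the index $i$, and the identity of each exceptional agent are all drawn uniformly at random. For every such $M$ we lower-bound the expected max cost divided by the optimum, which is at most $1+\varepsilon+\varepsilon^2$. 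Averaging over the random choice of $M$ then gives the bound for the universally strategyproof mechanism.

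\textbf{Step 1: the baseline of $3$.} The exceptional agent is uniform, so the dictator of a group is a non-exceptional agent with probability $1-1/k$. A non-exceptional agent at $a_j$ ranks $q_j$ first and then $B_j$. One at $u_j$ ranks $q_j$ first, then the locations of the other clusters (at distance $3-\varepsilon$), and only then $B_j$. Since $i$ is uniform, the dictator's favourite in $B_j$ coincides with $b_{ji}$ only with probability $O(1/k)$. Reading off the table, the outcome is therefore almost always $q_j$ or a wrong element of $B_j$ for an $L$-group, giving cost at least $3+\varepsilon$. It is $q_j$ (cost $1$) only for an $H$-group that actually receives $q_j$. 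In $(H_j,L_{ji})$ both dictators rank $q_j$ first, and only one facility can be placed there. So at least one group pays at least about $3$, and everything beyond $3$ comes from the range of a rule excluding $q_j$, or all of cluster $j$.

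\textbf{Step 2: the parameter $\mu$ and two families.} Let $\mu$ be the probability, over $M$ and a uniformly random cluster $j$, that $q_j$ lies outside the range of the rule for an $H_j$-group. I would define it via the $(H_j,L_{ji})$ instances and then argue, using the dictator structure and strategyproofness, that the same event governs the $(L_{ji},L_{h\ell})$ instances. In the first family, $(H_j,L_{ji})$ and its mirror, two cases arise. If the $H$-group keeps $q_j$ in its range but the $L$-group takes $q_j$, the $H$-group falls back to another cluster at cost $5-\varepsilon$. Otherwise the $L$-group pays about $3$. Determining which group is served first is exactly the ``exactly one range contains the contested location'' argument from \cref{thm:usp-sc-lb4}. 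This yields an expected cost of roughly $3+2c_1(1-\mu)$. In the second family, $(L_{ji},L_{h\ell})$ with $h\neq j$, I want to show that the event measured by $\mu$ forces the first group's dictator to be sent outside cluster $j$, into some $B_h$. The table then gives cost $7-2\varepsilon$, so the expected cost is roughly $3+4c_2\mu$.

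\textbf{Step 3: combining.} I would mix the two families with weights chosen so that the coefficient of $\mu$ cancels. This gives a bound independent of $\mu$, and with the constants $c_1,c_2$ obtained from the uniform choice of $(j,i)$ it should come out to $15/4-O(\varepsilon)-O(1/k)$. Letting $k\to\infty$ gives $15/4-\delta$. Throughout, the perturbations $\delta_c<\varepsilon^2$ only cost $O(\varepsilon^2)$ in each distance.

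The main obstacle is Step 2: linking the two families through a \emph{single} parameter $\mu$. The range of the rule for one group depends on the other group's profile, including the identity of its exceptional agent. So showing that ``$q_j$ is excluded'' in the $H$-instances transfers to ``cluster $j$ is excluded'' in the $L$-instances needs a careful strategyproofness argument. I would first try to move the agents of the fixed group one at a time between $a_j$, $u_j$ and the leaves, exactly as in the proof of \cref{thm:gmclower}. Whatever fails to transfer I would absorb into the uniform averaging over $j$ and the exceptional agent.
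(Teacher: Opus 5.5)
Your overall architecture matches the paper's: averaging over an $M$-independent instance distribution, the induced-dictatorship lemma, a baseline of $3$, and two families bounded by $3+a(1-\mu)$ and $3+b\mu$, mixed so that $\mu$ cancels. Step~1 is essentially the paper's baseline. But Step~2, which carries the whole theorem, is left open, and as sketched it is incorrect.

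\textbf{The first family.} Your case analysis is inverted. Both $u_j$ and the leaf attached to $q_j$ rank $q_j$ first. So whenever the range of the $H$-group's facility contains $q_j$, that facility is placed at $q_j$, and the $L$-group cannot take it. Cost $5-\varepsilon$ arises in the opposite situation. This range, which is determined by the other group's profile $L_{ji}$, must \emph{exclude} $q_j$ but contain an original location outside $B_j$, and the dictator must sit at $u_j$.

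\textbf{The cross-cluster family.} Here $q_j$-exclusion is not the relevant event. In $(L_{ji},L_{h\ell})$, cost $7-2\varepsilon$ is forced when, for example, the range of $F_2$ (determined by $L_{ji}$) has all its original locations in $B_j$. The dictator at $a_h$ is then pushed into $B_j$. A range conditioned on $L_{ji}$ can exclude $q_j$ while containing $q_h$ or $B_h$, and then nothing is gained. So with your $\mu$, neither family bound takes the form you wrote.

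\textbf{The missing idea.} No strategyproofness transfer is needed. The range and dictator of the rule for $F_r$ depend only on the other group's profile. Fixing the other group at $L_{JI}$ therefore fixes that rule, while group $r$'s own profile can be set freely to $L_{JI'}$, $H_J$ or $L_{J'I'}$.

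The paper lets $\alpha_r$ be the probability that, with the other group at a uniformly random $L_{JI}$, all original locations in the range for $F_r$ lie in $B_J$. It then sets $\mu=\min\{1,\alpha_1+\alpha_2\}$.

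The contested-location argument is applied to the same-cluster instance $(L_{JI},L_{JI'})$, not to $(H_j,L_{ji})$. There both rules are conditioned on $L$-profiles, and distinctness forces at least one of the two ranges to exclude $q_J$, except with probability $2/k$. Such a range is either confined to $B_J$, which is counted by $\alpha_1+\alpha_2$, or contains an original location outside $B_J$. In the latter case, switching its own group to $H_J$ gives cost $5-\varepsilon$. Selecting one of the two ranges uniformly gives probability $\frac{1-\mu}{2}-O(1/k)$, hence the bound $4-\mu$.

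In the cross-cluster family, the two confinement events depend on different groups' profiles and are therefore independent. So at least one occurs with probability $\alpha_1+\alpha_2-\alpha_1\alpha_2\ge\frac34\mu$, giving $3+3\mu$.

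These are your $c_1=\frac12$ and $c_2=\frac34$, and they do not come out automatically. The trivial estimate $\max\{\alpha_1,\alpha_2\}\ge\mu/2$ would only give $3+2\mu$. The best mixture would then yield $11/3<15/4$.
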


\begin{proof}
Consider any universally strategyproof mechanism with approximation ratio at most $4$. By \cref{lem:max-induced-dictatorship}, with probability $1-O(1/k)$ over the choice of a deterministic mechanism in its support, all the relevant induced single-facility rules are dictatorial on ranges containing the three auxiliary locations. We therefore fix such a deterministic mechanism $M$ for the remainder of the argument.

We call a candidate location \emph{original} if it is not auxiliary. For $r\in[2]$, let $\alpha_r$ be the probability that, when the agents that do not approve $F_r$ have profile $L_{JI}$, where $J$, $I$, and the exceptional agent are chosen uniformly at random, all original locations in the range of the rule for $F_r$ belong to $B_J$. Let
\[
\mu=\min\{1,\alpha_1+\alpha_2\}.
\]

We first establish a bound that will be used for both families of instances. Consider a group with profile $L_{JI}$. The rule for its facility, including its dictator, is determined by the profile of the other group. Since the exceptional agent is chosen uniformly at random, she is the dictator with probability $1/k$. Otherwise, the dictator is positioned at $a_J$, and the location she chooses is independent of $I$. Since $I$ is uniform in $[k]$, this location is $b_{JI}$ with probability at most $1/k$. Every other location gives the group max cost at least $3$. Hence, in either family below, the max cost is at least $3$ with probability at least $1-2/k$.

\medskip
\noindent
\emph{First family.}
The first family consists of the instances $(H_J,L_{JI})$ and $(L_{JI},H_J)$, where $J$, $I$, the order of the two profiles, and the exceptional agents are chosen uniformly at random.

To analyze this family, consider first the instance $(L_{JI},L_{JI'})$, in which both groups are associated with the same cluster $J$. Unless one of the two exceptional agents is a dictator, the two ranges cannot both contain $q_J$; otherwise, both dictators would be positioned at $a_J$ and would choose their closest location $q_J$, so the two facilities would be placed at the same location. Hence, with probability at least $1-2/k$, at least one of the two ranges does not contain $q_J$.

By the definition of $\alpha_1$ and $\alpha_2$, the expected number of ranges whose original locations all belong to $B_J$ is $\alpha_1+\alpha_2$. Since every such range excludes $q_J$, the expected number of ranges that exclude $q_J$ and contain an original location outside $B_J$ is at least
\[
\max\left\{0,1-\alpha_1-\alpha_2-\frac{2}{k}\right\}
\ge
1-\mu-\frac{2}{k}.
\]
Therefore, the probability that a uniformly selected range does not contain $q_J$ and contains an original location outside $B_J$ is at least
\[
\frac{1-\mu}{2}-O(1/k).
\]

Replace the selected group's profile by $H_J$, sampling its exceptional agent independently. This produces the first-family distribution without changing that group's induced range or dictator.

Except with probability $1/k$, its dictator is positioned at $u_J$. Since $q_J$ is not in the range and the range contains an original location outside $B_J$, every location of another cluster is closer to $u_J$, at distance $3-\varepsilon$, than any location of $B_J$, at distance $3+\varepsilon$, or any auxiliary location. Hence, the dictator chooses a location of another cluster. By the table above, the max cost is then at least $5-\varepsilon$.

The max cost is at least $3$ except with probability $O(1/k)$, and is at least $5-\varepsilon$ on the event above. Separating this event from the remaining instances and using $\varepsilon=1/k$, we obtain
\begin{equation}\label{eq:short-lb-local-family}
\E[\MC\mid\text{first family}]
\ge
3+(2-\varepsilon)\frac{1-\mu}{2}-O(1/k)
=
4-\mu-O(1/k).
\end{equation}

\medskip
\noindent
\emph{Second family.}
The second family consists of the instances $(L_{JI},L_{J'I'})$, where $J$, $J'$, $I$, $I'$, and the exceptional agents are chosen uniformly at random, conditioned on $J\neq J'$.

Consider the event that all original locations in the range of the rule for $F_1$ belong to $B_{J'}$, and the event that all original locations in the range of the rule for $F_2$ belong to $B_J$. Before conditioning on $J\neq J'$, the first event depends only on the profile of the second group and occurs with probability $\alpha_1$, while the second depends only on the profile of the first group and occurs with probability $\alpha_2$. Hence, the two events are independent. Writing $s=\alpha_1+\alpha_2$, at least one of the two events occurs with probability
\[
\alpha_1+\alpha_2-\alpha_1\alpha_2
\ge
s-\frac{s^2}{4}
\ge
\frac34\min\{1,s\}
=
\frac34\mu.
\]
The second inequality follows directly by considering $s\le1$ and $s\ge1$. Conditioning on $J\neq J'$ changes this probability by $O(1/k)$.

If the first event occurs, then $F_1$ is placed either at a location of $B_{J'}$ or at an auxiliary location. Since the agents approving $F_1$ have profile $L_{JI}$ and $J\neq J'$, the table above shows that the max cost is at least $7-2\varepsilon$. The second event is symmetric.

Again, the max cost is at least $3$ except with probability $O(1/k)$, and is at least $7-2\varepsilon$ on the event above. Hence,
\begin{equation}\label{eq:short-lb-cross-family}
\E[\MC\mid\text{second family}]
\ge
3+(4-2\varepsilon)\frac{3\mu}{4}-O(1/k)
=
3+3\mu-O(1/k).
\end{equation}

\medskip
\noindent
\emph{Combining the two families.}
Choose an instance from the first family with probability $3/4$ and from the second family with probability $1/4$. By \eqref{eq:short-lb-local-family} and \eqref{eq:short-lb-cross-family}, the expected max cost of $M$ is at least
\[
\frac34(4-\mu)+\frac14(3+3\mu)-O(1/k)
=
\frac{15}{4}-O(1/k).
\]

The distribution over instances does not depend on $M$, and the bound holds for every deterministic mechanism satisfying the conclusion of \cref{lem:max-induced-dictatorship}, even though $\mu$ may depend on the mechanism. Averaging over the random choice of the deterministic mechanism and accounting for the ignored probability of $O(1/k)$, the expected max cost is still at least $15/4-O(1/k)$. Since the optimal max cost of every instance we use is at most $1+\varepsilon+\varepsilon^2$, some instance has approximation ratio at least
\[
\frac{15/4-O(1/k)}{1+1/k+1/k^2}
=
\frac{15}{4}-O(1/k).
\]

Choosing $k$ sufficiently large completes the proof.
\end{proof}

\section{Conclusion}\label{sec:conclusion}
In this paper, we revisited a truthful two-facility location problem with candidate locations and showed bounds on the best possible approximation ratio of deterministic and universally strategyproof mechanisms in terms of the social cost and the max cost. For deterministic mechanisms, we showed tight bounds both on the line and on general metric spaces. The difference between the two metric spaces is particularly significant for the social cost: the best possible approximation ratio is $7$ on the line and $2n-1$ on general metric spaces. Randomization leads to significant improvements. For universally strategyproof mechanisms, the best possible approximation ratio in terms of the social cost is between $5/2$ and $3$ on the line and tends to $4$ as the number of agents grows on general metric spaces. In terms of the max cost, the best possible approximation ratio is exactly $3$ on the line, while on general metric spaces it lies between $15/4$ and $19/5$.

The main remaining gaps for universally strategyproof mechanisms are therefore for the social cost on the line and the max cost on general metric spaces. It would also be interesting to determine the best possible approximation ratio in terms of the social cost on general metric spaces for every finite number of agents, to understand the power of mechanisms that are truthful in expectation, and to consider metric spaces that are more general than the line but more structured than general metric spaces, such as trees and median graphs.

\subsection*{AI Disclosure}
ChatGPT-6 Astra and ChatGPT-5.6 Sol were used during the development of this work to assist with formulating and verifying some of the theoretical arguments and proofs. The authors independently verified all mathematical claims and proofs presented as established results and take full responsibility for the content of the paper.

\bibliographystyle{plainnat}
\bibliography{twofl_refs}

\clearpage\newpage
\appendix

\section{Conjectured Tight \texorpdfstring{$5/2$}{5/2} Upper Bound on the Line}\label{app:line52}

In this section, we define the mechanism \Mech{Mixed-Percentile}.
Without loss of generality, we assume throughout that $n_1\ge n_2$; when $n_2>n_1$, the mechanism exchanges the labels of the two facilities. We break ties between candidate locations to the left, and, when two consecutive candidate pairs have the same total distance from an agent, we choose the left pair.

For $\lambda\ge0$, we define
\begin{equation}\label{line52:eq:F}
F_\lambda(q)=\frac{(1+2\lambda)q}{2(1+\lambda-q)}
\quad(0\le q\le1/2),\qquad
F_\lambda(q)=1-F_\lambda(1-q)
\quad(1/2\le q\le1).
\end{equation}
For a non-empty set $G$ of agents and a non-empty set $C'\subseteq C$ of candidate locations, the \emph{weighted-percentile rule} $W_{\lambda,G}(C')$ selects as \emph{representative} the $\ell$-th leftmost agent of $G$ with probability $F_\lambda(\ell/|G|)-F_\lambda((\ell-1)/|G|)$ and returns the candidate location in $C'$ closest to her reported position. These are weighted-percentile voting mechanisms in the sense of \citet{FFG}; for $\lambda=0$, the rule is their \Mech{Spike} mechanism.

The mechanism uses as components the following simpler mechanisms.
\begin{itemize}
\item $\Mech{Serial-Percentile}_1$, denoted by $T_1(\theta)$, places $F_1$ at $w_1=W_{\theta,N_1\setminus N_2}(C)$ and then places $F_2$ at $W_{0,N_2}(C\setminus\{w_1\})$. $\Mech{Serial-Percentile}_2$, denoted by $T_2(\theta)$, is defined symmetrically, placing $F_2$ first using the singleton group $N_2\setminus N_1$.

\item \Mech{Pair-Percentile}, denoted by $P_\lambda$, chooses a representative among the doubleton agents using $F_\lambda$, places the facilities at her two closest distinct candidate locations, and assigns the two locations to the facilities uniformly at random.
\end{itemize}

If all agents have doubleton preferences, the mechanism runs $P_0$. If only one facility is approved, it uses \Mech{Spike} for that facility and places the other facility at any available  candidate location. In the remaining cases, $n_1>0$ and both facilities are approved. Writing $[x]_+=\max\{x,0\}$, define
\begin{equation}\label{line52:eq:parameters}
\zeta=\left[1-\frac{4n_{12}}{n_1+n_2}\right]_+,
\qquad
\kappa=n_{12}+\frac{n_2^2\zeta}{n_2+16n_{12}}.
\end{equation}
The quantity $\zeta$ is positive exactly when $4n_{12}<n_1+n_2$. It determines both the adjustment to the serial weights and the common parameter $\kappa$ used to choose their first-stage percentiles.

\begin{algorithm}[t]
\caption{\Mech{Mixed-Percentile}}
\label{mech:percentile}
\KwIn{Reported positions of agents}
\KwOut{Facility locations $\w=(w_1,w_2)$}
\If{$n_1=n_2=0$}{
    Run $P_0$\;
}
\ElseIf{only $F_j$ is approved}{
    Place it at $W_{0,N_j}(C)$ and the other facility at any available candidate location\;
}
\Else{
    Compute $\zeta$ and $\kappa$ from \eqref{line52:eq:parameters}\;
    Run $T_1(\kappa/n_1)$, $T_2(\kappa/n_2)$, and $P_{(n_1+n_2)/(2n_{12})}$ with probabilities proportional to
    $3n_1$, $3n_2\left(1-\frac{n_1-n_2}{n_1+2n_2}\zeta\right)$, and $[4n_{12}-n_1-n_2]_+$, respectively\;
}
\end{algorithm}

Components with zero weight are omitted, and their parameters need not be evaluated. All weights are non-negative: $0\le\zeta\le1$ and
\[
1-\frac{n_1-n_2}{n_1+2n_2}\zeta
\ge
1-\frac{n_1-n_2}{n_1+2n_2}
=
\frac{3n_2}{n_1+2n_2}\ge0.
\]
The first weight is positive whenever the final branch is reached.

The mechanism has a single threshold, $4n_{12}=n_1+n_2$. Below this threshold, it uses only the two serial components. At and above the threshold, $\zeta=0$ and $\kappa=n_{12}$, so their weights are $3n_1$ and $3n_2$, while the pair component receives weight $4n_{12}-n_1-n_2$. In particular, at the threshold the pair component has zero weight and the two serial components are chosen in proportions $n_1:n_2$.

The first-stage parameters of the serial components satisfy
\[
n_1\left(\frac{\kappa}{n_1}\right)
=
n_2\left(\frac{\kappa}{n_2}\right)
=
\kappa\ge n_{12},
\]
whenever the corresponding singleton groups are non-empty. Thus, the distance term supplied by the weighted-percentile bound has sufficient coefficient to account for the additional cost incurred by all doubleton agents when the first facility moves away from its optimal location. Below the threshold, the correction in $\kappa$ provides an additional saving. Above the threshold, the pair component also represents the doubleton agents directly.

\begin{lemma}\label{line52:lem:usp}
\Mech{Mixed-Percentile} is universally strategyproof.
\end{lemma}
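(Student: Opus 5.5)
The plan is to mirror the proof of \cref{line3:thm:usp}. The key observation is that every random choice made by \Mech{Mixed-Percentile} depends only on the publicly known counts $n_1,n_2,n_{12}$, never on the reported positions. This covers the branch taken, the component selected with probabilities proportional to $3n_1$, $3n_2(1-\frac{n_1-n_2}{n_1+2n_2}\zeta)$ and $[4n_{12}-n_1-n_2]_+$, the parameters $\zeta,\kappa$, the sampled ranks under $F_\lambda$, and the random label assignment in $P_\lambda$. The relabeling when $n_2>n_1$ also depends only on these counts. It therefore suffices to fix all of these choices and show that each resulting deterministic mechanism is strategyproof.

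For the degenerate branches the argument is short.
\begin{itemize}
\item When only $F_j$ is approved, the mechanism places $F_j$ at the candidate closest to the agent of fixed rank in $N_j$. By the fixed-rank argument from \cref{line3:thm:usp}, this is strategyproof under single-peaked preferences over $C$. The other facility is irrelevant to every agent, and its location may be taken to depend only on $w_j$ and a fixed rule.
\item When all agents are doubletons, $P_0$ with fixed rank and fixed labels is a special case of the pair argument below.
\end{itemize}

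For the serial components $T_1(\kappa/n_1)$ and $T_2(\kappa/n_2)$, with fixed ranks, the argument is the same as for the serial components in \cref{line3:thm:usp}.
\begin{itemize}
\item First-stage agents are singletons approving only the first facility. That facility is determined by a fixed-rank rule, and these agents are indifferent to the second facility.
\item For second-stage agents, the first facility, and hence $C\setminus\{w_1\}$, does not depend on their reports. Within the second stage the fixed-rank argument applies, because agents have single-peaked preferences over the remaining candidates. A doubleton agent's cost from $w_1$ is fixed, so only the second stage matters to her.
\end{itemize}

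For $P_\lambda$, the argument reuses the consecutive-pair single-peakedness established in \cref{line3:thm:usp}. The only new ingredient is the tie-breaking rule toward the left pair, and this is where I expect the main care to be needed. Each agent's preferred consecutive pair must remain a non-decreasing function of her position, so that the pair attached to the $k$-th leftmost doubleton agent is a clipped version of her peak. This holds because the thresholds $(c_\ell+c_{\ell+2})/2$ are increasing and ties are resolved consistently at each threshold. The same holds for the left tie-breaking between single candidates, which makes each agent's closest candidate a monotone function of her position. Finally, the label assignment does not affect doubletons, and singletons are ignored by $P_\lambda$. These pieces together complete the proof.
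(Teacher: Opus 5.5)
Your proposal is correct and follows essentially the same route as the paper. You fix all count-dependent random choices (component, percentile ranks, label assignment, relabeling), then apply the fixed-rank clipping argument to each stage of the serial components and, via single-peakedness over consecutive pairs, to $P_\lambda$, noting that the unapproved facility in the single-facility branch affects no one; your explicit check that left tie-breaking keeps the peak maps monotone is a detail the paper leaves implicit but is consistent with its argument.
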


\begin{proof}
All randomization probabilities and percentile parameters depend only on $n_1,n_2,n_{12}$, and the random ranks used in different stages are independent. Hence, it suffices to fix all random choices and show that the resulting deterministic mechanism is strategyproof. Every fixed-rank rule $W_{\lambda,G}(C')$ is strategyproof: with the reports of the other agents fixed, the outcome is the reporting agent's closest candidate location clipped to an interval determined by those reports. Since her preferences over candidate locations are single-peaked, reporting truthfully yields her most preferred attainable outcome.

In $T_1(\theta)$ and $T_2(\theta)$, every agent considered in the first stage is a singleton agent and therefore cares only about the facility placed in that stage; although her report may change the candidate set available later, this cannot affect her own cost. For an agent considered in the second stage, the location chosen in the first stage, and hence the available candidate set, is fixed independently of her report. If such an agent is doubleton, the contribution of the first facility to her cost is likewise fixed. Thus, every deterministic realization of the serial components is strategyproof.

For $P_\lambda$, order the consecutive candidate pairs by their left endpoints. For adjacent pairs $\{c_j,c_{j+1}\}$ and $\{c_{j+1},c_{j+2}\}$, the difference between their costs to an agent at position $x$ is $d(x,c_j)-d(x,c_{j+2})$, which changes sign at $(c_j+c_{j+2})/2$. These thresholds increase with $j$, so preferences over consecutive pairs are single-peaked, and the most preferred pair is non-decreasing in the agent's position. Hence, each fixed-rank realization of $P_\lambda$ is strategyproof by the same fixed-rank argument. The random assignment of the two selected locations to the facilities does not affect the cost of a doubleton agent, while singleton reports are ignored.

Finally, when only one facility is approved, the fixed-rank argument applies to that facility, and the location of the unapproved facility does not affect any agent's cost.
\end{proof}

ChatGPT 6.0 Astra has produced a computer-generated proof candidate for this bound. However, its length and complexity make it prohibitively difficult to verify reliably by hand. We therefore do not claim the argument as a proof and state the bound below as a conjecture.

\begin{conjecture}
On the line, \textnormal{\Mech{Mixed-Percentile}} has approximation ratio at most $5/2$ in terms of the social cost.
\end{conjecture}

\end{document}